  \newtheorem{observation}[theorem]{Observation}
  \newtheorem{remark}[theorem]{Remark}
  \newtheorem{assumption}[theorem]{Assumption}
\crefname{equation}{eq.}{equations}
\crefname{definition}{def.}{definitions}
\crefname{section}{sec.}{sections}
\crefname{lemma}{lem.}{lemmata}
\crefname{observation}{obs.}{observations}
\crefname{corollary}{cor.}{corollaries}
\crefname{proposition}{prop.}{propositions}
\crefname{remark}{rem.}{remarks}
\crefname{enumi}{step}{steps}
\crefname{theorem}{thm.}{theorems}
\crefname{algorithm}{alg.}{algorithms}
\crefname{figure}{fig.}{figures}
\crefname{assumption}{assum.}{assumptions}
\crefname{appendix}{App.}{appendices}
\def\Groebner{Gr\"{o}bner\xspace}
\def\mksum{Minkowski sum\xspace}
\def\wrt{with respect to\xspace}
\def\polyt{\Delta}
\def\R{\mathbb{R}}
\def\O{\mathcal{O}}
\def\N{\mathbb{N}}
\def\Z{\mathbb{Z}}
\def\K{\mathbb{K}}
\def\C{\mathbb{C}}
\def\Spolyt{S_{\polyt}}
\def\KS{\K[\Spolyt]}
\def\KSh{\K[\Spolyt^h]}
\def\x{\bm{x}}
\def\Kx{\K[\x]}
\def\cone{\mathcal{C}}
\def\T{(\C^*)^n}
\def\LM{\mathrm{LM}}
\def\KLau{\K[\Z^n]}
\def\NP{\mathrm{NP}}
\def\dehom{\chi}
\def\mdeg{\mathrm{mdeg}}
\def\MV{\mathrm{MV}}
\def\Ko{\mathcal{K}}
\def\H{\mathcal{H}}
\def\im{\mathrm{im}}
\def\Mac{\mathcal{M}}
\def\gMac{\widetilde{\Mac}}
\def\<{\prec}
\def\rows{\mathtt{Rows}}
\newcommand{\ideal}[1]{\langle {#1} \rangle}
\newcommand{\mon}[1]{{\text{\small $\bm{X}^{\bm{{#1}}}$}}}
\newcommand{\monhom}[2]{{\text{\small $\bm{X}^{(\bm{{#1}}, \bm{{#2}})}$}}}
\newcommand{\monhomNoBM}[2]{\bm{X}^{({{#1}}, {{#2}})}}
\def\L{\mathcal{L}}
\begin{document}
\title[\Groebner Basis over Semigroup Algebras]{
  \Groebner Basis over Semigroup Algebras: Algorithms and Applications for Sparse Polynomial Systems
}

%\titlenote{Produces the permission block, and copyright information}
%\subtitle{Extended Abstract} \subtitlenote{The full version of the
%  author's guide is available as \texttt{acmart.pdf} document}

\author{Mat\'{i}as R. Bender
  $\quad$ Jean-Charles Faug\`ere
  $\quad$ Elias Tsigaridas}

%\author{Mat\'{i}as R. Bender}
%\author{Jean-Charles Faug\`ere}
%\author{Elias Tsigaridas}
\affiliation{%
  \institution{Sorbonne Universit\'e, \textsc{CNRS}, \textsc{INRIA},
    Laboratoire d'Informatique de Paris~6, \textsc{LIP6},
    \'Equipe \textsc{PolSys}}
  \streetaddress{4 place Jussieu}
  \city{F-75005, Paris} 
  \country{France}
% \affiliation{%
%   \institution{Institute for Clarity in Documentation}
%   \streetaddress{P.O. Box 1212}
%   \city{Dublin}
%   \state{Ohio}
%   \postcode{43017-6221}
% }
  }
\email{First Name.Last Name@inria.fr}
% The default list of authors is too long for headers.
\renewcommand{\shortauthors}{
  Mat\'{i}as R. Bender,
  Jean-Charles Faug\`ere, and
  Elias Tsigaridas}

\begin{abstract}
  \Groebner bases is one the most powerful tools in algorithmic
  nonlinear algebra.
  Their computation is an intrinsically hard problem with a complexity
  at least single exponential in the number of variables. However, in
  most of the cases, the polynomial systems coming from applications
  have some kind of structure. For example, several problems in
  computer-aided design, robotics, vision, biology,
  kinematics, cryptography, and optimization involve sparse systems
  where the input polynomials have a few non-zero terms.

  Our approach to exploit sparsity is to embed the systems in a
  semigroup algebra and to compute \Groebner bases over this
  algebra. Up to now, the algorithms that follow this approach benefit
  from the sparsity only in the case where all the polynomials have
  the same sparsity structure, that is the same Newton polytope. We
  introduce the first algorithm that overcomes this restriction.  Under
  regularity assumptions, it performs no redundant computations.
  Further, we extend this algorithm to compute \Groebner basis in the
  standard algebra and solve sparse polynomials systems over the torus
  $\T$.  The complexity of the algorithm depends on the Newton
  polytopes.
\end{abstract}

%
% The code below should be generated by the tool at
% http://dl.acm.org/ccs.cfm
% Please copy and paste the code instead of the example below.
%

% % % \begin{CCS% % XML}
% % % % % <ccs2012>
% % % % % <concept>
% % % % % <concept_id>10002950.10003714.10003715.10003720.10003747</concept_id>
% % % % % <concept_desc>Mathematics of computing~Gr\"{o}bner bases and other special bases</concept_desc>
% % % % % <concept_significance>500</concept_significance>
% % % % % </concept>
% % % % % </ccs2012>
% % % % % \end{CCSXML}

% % % % % \ccsdesc[500]{Mathematics of computing~Gr\"{o}bner bases and other special bases}

% % % % % \keywords{Gr\"{o}bner Basis;
% % % % %   Solving Polynomial System;
% % % % %   Mixed Sparse System;
% % % % %   Toric variety}

\maketitle

\vspace{-8px}
\section{Introduction}

% \paragraph*{\Groebner

The introduction of the first algorithm to compute \Groebner bases in 1965
\cite{buchberger_bruno_2006} established them as a  central tool in
nonlinear algebra. Their applications span most of the spectrum of
mathematics and engineering  \cite{buchberger_grobner_1998}.
Computing \Groebner bases is an intrinsically hard problem.  For many
``interesting'' cases related to  applications the complexity of the
algorithms to compute them is single exponential in the number of
variables, but there are instances where the complexity is double
exponential; it is an \mbox{\textsc{EXPSPACE}} complete problem
\cite{mayr_complexity_1997}.
There are many practically efficient algorithms,
see~\cite{faugere_F5_2002,eder2014survey} and references therein,
for which,
under genericity assumptions, we can deduce precise
complexity estimates~\cite{bardet_complexity_2015}.
However, the polynomial systems coming from applications,
i.e. computer-aided design, robotics, biology, cryptography, and
optimization
e.g.,~\cite{Sturmfels-sol-bk,emiris_computer_1999,faugere:hal-01057831},
have some kind of structure.
One of the main challenges in \Groebner basis theory is to improve
the complexity and the practical performance of the related algorithms
by exploiting the structure.

We employ the structure related to the sparsity of the polynomial
systems; in other words, we focus on the non-zero terms of the input
polynomials.
In addition, we consider polynomials having different supports.
There are different approaches to benefit from sparsity,
e.g.,~\cite{sturmfels1993sparse,faugere_grobner_2011,faugere_sparse_2014,cifuentes_exploiting_2016,
  bender_towards_2018}.
We follow \cite{sturmfels1993sparse,faugere_sparse_2014} and we
consider \Groebner bases over semigroup algebras. We construct a
semigroup algebra related to the Newton polytopes of the input
polynomials and compute \Groebner bases for the ideal generated by the
original polynomials in this semigroup algebra.

We embed the system in semigroup algebras because in this place they
``behave'' in a predictable way that we can exploit algorithmically.
Semigroup algebras are related to toric varieties.  An affine toric
variety is the spectrum of a semigroup algebra
\cite[Thm.~1.1.17]{cox_toric_2011}.  Hence, the variety defined by the
polynomials over the semigroup is a subvariety of a toric variety.
This variety is different from the one defined by the polynomials over
the original polynomial algebra, but they are related and in many
applications the difference is irrelevant,
e.g.,~\cite{emiris_computer_1999}.  We refer to \cite{cox_toric_2011}
for an introduction to toric varieties and to
\cite{sturmfels_grobner_1996} for their relation with \Groebner basis.

In ISSAC'14, \citet{faugere_sparse_2014} considered sparse
\textit{unmixed systems}, that is,  polynomial systems where all
the polynomials have the same Newton polytope, and they introduced an
algorithm to compute \Groebner bases over the semigroup
algebra generated by the Newton polytope.
This algorithm is a variant of the MatrixF5 algorithm \cite{faugere_F5_2002,bardet_complexity_2015}. They compute
\Groebner basis by performing Gaussian elimination on various
Macaulay matrices \cite{lazard_grobner-bases_1983} and they avoid
computations with rows reducing to zero
% after Gaussian elimination
using the F5 criterion \cite{faugere_F5_2002}.
The efficiency of this approach relies on an incremental construction
which, under regularity assumptions, skips all the rows reducing to
zero.
They exploit the property that, for normal Newton polytopes,
\textit{generic} unmixed systems are regular sequences over
the corresponding semigroup algebra.
Unfortunately, this property is no longer true for \textit{mixed
  systems}, that is, for systems of polynomials with different Newton
polytopes.  So, this algorithm fails to predict all rows reducing to
zero during Gaussian elimination.
Moreover, the degree bound for the
maximal degree in \cite[Lem.~5.2]{faugere_sparse_2014} misses some
assumptions to hold, see \Cref{sec:counter-example}.
We relax the regularity assumptions of \cite{faugere_sparse_2014} and
we introduce an F5-like criterion that, under regularity assumptions,
predicts all the rows reducing to zero during 
\Groebner bases computation.

In this context, we also mention our previous work
\cite{bender_towards_2018} on computing \textit{sparse \Groebner
  bases} for mixed sparse polynomial systems.  We emphasize that
besides the similarity in the titles, this work and
\cite{bender_towards_2018} are completely different approaches.
First, we compute different objects. Sparse \Groebner bases
\cite[Sec.~3]{bender_towards_2018} are not \Groebner basis for
semigroup algebras.  Moreover, we follow different computational
strategies: in \cite{bender_towards_2018} we perform the computations
polynomial by polynomial, while in this work we proceed degree by
degree.  Further,  when we use \cite{bender_towards_2018} to solve 0-dimensional
systems, there are no complexity bounds,
let alone bounds depending on the Newton polytopes, for this approach.

%% \paragraph*{Solving polynomial systems}

A direct application of \Groebner basis theory is to solve polynomial
systems. This is also an intrinsically hard problem
\cite{heintz_intrinsic_1993}. Hence, it is important to exploit the
sparsity of the input polynomials to obtain new algorithms for solving with better
complexity bounds.
The different ways of doing so include homotopy methods
e.g.,~\cite{verschelde_homotopies_1994,huber_polyhedral_1995}, chordal
elimination \cite{cifuentes_exploiting_2016}, triangular decomposition
\cite{mou_chordality_2018}, and various other techniques
\cite{giusti2001grobner,herrero2013affine,faugere_computing_2016,telen:hal-01630425,rojas1999solving,mora1982algorithm}.

Among the symbolic approaches related to toric geometry, the main tool
to solve sparse systems is the sparse resultant
\cite{gelfand_discriminants_1994}.
The resultant is a central object in elimination theory and there are
many different ways of exploiting it to solve sparse systems, see for
example \cite[Chp.~7.6]{cox_using_2005}.
Canny and Emiris \cite{canny_efficient_1993} and Sturmfels
\cite{sturmfels1994newton} showed how to compute the sparse resultant
as the determinant of a square Macaulay matrix (Sylvester-type
formula) whose rows are related to mixed subdivisions of some
polytopes.
Using this matrix, e.g.,~\cite{emiris_complexity_1996,ER-mbpss-94}, we
can solve square sparse systems. For this, we add one more polynomial
to the system and we consider the matrix of the resultant of the new
system.  Under genericity assumptions, we can recover the
multiplication maps of the quotient ring defined by original square
system over the ring of Laurent polynomials and we obtain the
solutions over $(\C \setminus \{0\})^n$.
Recently, Massri \cite{massri_solving_2016} dropped the genericity
assumptions by considering a bigger matrix.

We build on Massri's work and, under regularity
assumptions, we propose an algorithm to solve 0-dimensional square
systems with complexity related to the Minkowski sum of the
Newton polytopes.
Because we work with toric varieties, we compute  solutions  over
$(\C \setminus \{0\})^n$.
Our strategy is to reuse part of our
algorithm to compute \Groebner bases over semigroup algebras to
compute multiplication maps and, via FGLM \cite{faugere_efficient_1993}, recover a \Groebner
basis over the standard polynomial algebra. As we compute the
solutions over $(\C \setminus \{0\})^n$,
we do not recover a \Groebner
basis for the original ideal, but for its saturation \wrt the product
of all the variables.
We compute with a matrix that has the same size as the one in Emiris'
resultant approach \cite{emiris_complexity_1996}.
Our approach to solve is more general than the one in
\cite{faugere_sparse_2014} as we compute with mixed sparse
systems, and because it terminates earlier as we do not compute
\Groebner bases but multiplication maps. An overview of our
strategy is as follows:

\begin{enumerate}[leftmargin=*]
\item Let  
  $f_1,\dots,f_n \in \Kx$  be a sparse regular polynomial system
  with a finite number of
  solutions over $\T$.
\item Embed the polynomials to a multigraded semigroup algebra
  $\KSh$ related to the Newton polytopes of $f_1,\dots,f_n$ and to the
  standard n-simplex (see \Cref{deg:semigroupFromPolytope}).
\item For each variable $x_i$:
  \begin{itemize}[leftmargin=0px]
  \item Use the \Groebner basis algorithm (\Cref{alg:f5Alg}) to construct a square
    Macaulay matrix related to $(f_1,\dots,f_n,x_i)$ of size equal to
    the number of integer points in the Minkowski sum of the Newton
    polytopes of $f_1,\dots,f_n$ and the n-simplex.
  \item Split the matrix in four parts and compute a Schur
    complement, which is the multiplication map of $x_i$ in
    $\K[\bm{x}^{\pm 1}] / \ideal{f_1,\dots,f_n}$.
  \end{itemize}
\item Use the multiplication maps and FGLM to get a  \Groebner
  basis for $\ideal{f_1,\dots,f_n} : \ideal{\prod_i x_i}^\infty$ \wrt
  any monomial order.
\end{enumerate}
%% 
%This approach generalizes our algorithm to solve regular
%multihomogeneous systems introduced in \cite{bender_towards_2018}.

% \todo{Check at the end}
% The paper is organized as follows. In the first part,
% Sec~\ref{sec:preliminaries}, we introduce the preliminary theory that
% we need to develop our results. This includes, semigroup algebras,
% \Groebner basis over semigroup algebras, toric geometry, and our
% definition of regularity. In the Sec.~\ref{sec:algo}, we introduce a
% Lazard-like algorithm to compute \Groebner bases over semigroup
% algebra. Next, in the Sec.~\ref{thm:f5Crit}, we improve this algorithm
% by introducing an F5-like criterion. Later, in
% Sec.~\ref{sec:solve-I-sat}, we show how to reuse this algorithm to
% solve 0-dimensional systems. Finally, in \Cref{sec:complexity}, we show
% that the complexity of the latter algorithm is polynomially bounded by
% the Minkowski sum of the Newton polytopes.

\vspace{-3pt}
The contributions and consequences of our work include:
\begin{itemize}[leftmargin=*]
\item \textit{We introduce the first effective algorithm to compute
    \Groebner bases over semigroup algebras associated to mixed
    polynomial systems.}
  We generalize the work of \cite{faugere_sparse_2014} to the mixed
  case from which we could provide accurate complexity estimates related to the
  Newton polytopes of the input polynomials.
  
\item \textit{We relate the solving techniques using
    Sylvester-type formulas in resultant theory with \Groebner bases computations.}
  The simplest, but not necessarily the most efficient as there are
  more compact formulas \cite{weyman_determinantal_1994}, way to
  compute the resultant is to use a Sylvester-type formula and compute
  it as the determinant of a Macaulay matrix
  \cite[Chp.~3.4]{cox_using_2005}. Using this matrix we extract
  multiplication maps and solve polynomial systems
  \cite[Chp.~3.4]{cox_using_2005}. In the standard polynomial algebra,
  such matrices are at the heart of linear algebra algorithms to
  compute \Groebner bases because they correspond to the
  biggest matrix that appears during \Groebner basis computations for
  regular 0-dimensional systems \cite{lazard_grobner-bases_1983}.
  However, such a relation was not known for the sparse case.  We
  bring out this relation and we build on it algorithmically.

\item \textit{We generalize the F5 criterion to depend on Koszul
    complexes instead of regular sequences.}
  The exactness of the Koszul complex is closely related to regular
  sequences \cite[Ch.~17]{eisenbud_commutative_2004} and,
  geometrically, to complete intersections. Roughly speaking, when we
  consider generic square systems of equations in the coordinate ring
  of a ``nice'' projective variety, the variety that the system
  defines is closely related to a complete intersection. In this case,
  the Koszul complex of the system might not be exact in general, but only in
  some ``low'' degrees. Hence, even if the system is not a regular
  sequence, by focusing on the degrees at which the strands of the Koszul
  complex are exact, we can still predict the algebraic structure of
  the system and perform efficient computations.  Using this property
  we extend the classical F5-like criteria that apply only to regular
  sequences.
  Moreover, additional information on the exactness
  of the strands of the Koszul complex and the multigraded
  Castelnuovo-Mumford regularity
  \cite{maclagan_multigraded_2004,botbol_castelnuovo_2017}
  results in better degree and complexity bounds; similarly to the 
  case of the multihomogeneous systems
  \cite[Sec.~4]{bender_towards_2018}.

\item \textit{We disrupt the classical strategy to solve 0-dimensional
    systems using \Groebner basis, by avoiding intermediate \Groebner basis computations.}
  The classical approach for  solving 0-dimensional systems using \Groebner
  bases involves the computation of a intermediate \Groebner basis
  that we use to deduce multiplication maps and,  by using FGLM, to
  obtain the lexicographical \Groebner basis of the ideal.
  If the  intermediate \Groebner basis is computed \wrt a graded
  reverse lexicographical order and the input system ``behaves well''
  when we homogenize it, this strategy is some sense optimal because it
  is related to the Castelnuovo-Mumford regularity of the homogenized
  ideal \cite[Cor.~3]{chardin_bounds_2003}.

  However, over semigroup algebras, it might not be always possible to
  relate the complexity of the intermediate \Groebner basis
  computation to the Castelnuovo-Mumford regularity of the ideal;
  this is so 
  because we can not define monomial orders that behave like a graded
  reverse lexicographical, see \cite[Ex.~2.3]{bender_towards_2018}.
  We overcome this obstacle by truncating the computation of the
  intermediate \Groebner basis in such a way that the complexity is
  given by Castelnuovo-Mumford regularity of the ideal.
\end{itemize}

\vspace{-15pt}
\section{Preliminaries}
\label{sec:preliminaries}

Let $\K \subset \C$ be a field of characteristic $0$,
$\x := (x_1,\dots,x_n)$,
and $\Kx := \K[x_1,\dots,x_n]$.
%%
% For $\bm{\alpha} \in \N^{n}$, let
% $\x^{\bm{\alpha}} := \prod_{i = 1}^n x_i^{\alpha_i}$.
%%
We consider $\bm{0} := (0,\dots,0)$ and
$\bm{1} := (1,\dots,1)$. For each $r \in \N$, let
$\bm{e_1},\dots,\bm{e_r}$ be the canonical basis of $\R^{r}$.
Given  $\bm{d_1},\bm{d_2} \in \N^r$, we say
$\bm{d_1} \geq \bm{d_2}$ when $\bm{d_1} - \bm{d_2} \in \N^r$.
We use $[r] = \{1, \dots, r \}$.
We denote by $\ideal{f_1, \dots, f_m}$ the ideal generated by
$f_1, \dots, f_m$.

\subsection{Semigroup algebras}

\begin{definition}[Affine semigroup and semigroup algebra]
Following \cite{miller_combinatorial_2005}, an affine semigroup $S$ is
a finitely-generated additive subsemigroup of $\Z^n$, for some
$n \in \N$, such that it contains ${\bm{0}} \in \Z^n$.
An affine semigroup $S$ is pointed if it does not contain non-zero
invertible elements, that is for all
$\bm{\alpha}, \bm{\beta} \in S \setminus \{ \bm{0}\}$, $ \bm{\alpha}+ \bm{\beta} \neq \bm{0}$
\cite[Def~7.8]{miller_combinatorial_2005}.
The semigroup algebra $\K[S]$ is the $\K$-algebra generated by the
monomials $\{\mon{\bm{\alpha}} : \bm{\alpha} \in S\}$ such that
$\mon{\bm{\alpha}} \cdot \mon{\bm{\beta}} = \mon{\bm{\alpha} + \bm{\beta}}$.
\end{definition}

% As in~\cite{faugere_sparse_2014}, we consider only pointed affine
% semigroups.

\begin{definition}
  [Convex set and convex hull] A set $\Delta \subset \R^n$ is convex
  if every line segment connecting two elements of $\Delta$ also lies
  in $\Delta$; that is, for every $\bm{\alpha},\bm{\beta} \in \Delta$
  and $0 \leq \lambda \leq 1$ it holds
  $\lambda \bm{\alpha} + (1 - \lambda) \bm{\beta} \in \Delta$. The
  convex hull of $\Delta$ is the unique minimal, \wrt inclusion,
  convex set that contains $\Delta$.
\end{definition}

\begin{definition}[Pointed rational polyhedral cones]
  A cone $\cone$ is a convex subset of $\R^n$ such that $\bm{0}\in \cone$
and for every $\bm\alpha \in \cone$ and $\lambda > 0$,
$\lambda \, \bm\alpha \in \cone$. The \emph{dimension of a cone} is the dimension of
the vector space spanned by the cone. 
A cone is pointed if does not contain any line; that is, if
$\bm 0 \neq \bm\alpha \in \cone$, then $- \bm\alpha \not\in \cone$. A ray is a
pointed cone of dimension one. A ray is
\textit{rational} if it contains a non-zero point of $\Z^n$. A
\textit{rational polyhedral cone} is the convex hull of a finite set
of rational rays. For  a set of points $\Delta \subset \R^n$, let
$\cone_{\Delta}$ be the cone generated by the elements in $\Delta$. If
$\Delta$ is (the convex hull of) a finite set of integer points, then
$\cone_{\Delta}$ is a rational polyhedral cone.
\end{definition}

A rational polyhedral cone $\cone$ defines the affine semigroup
$\cone \cap \Z^n$, which is pointed if and only if the cone is
pointed.

\begin{definition}[Integer polytopes and Minkowski sum]
  A integer polytope $\polyt \subset \R^n$ is the convex hull of a
  finite set of (integer) points in $\Z^n$.
  The Minkowski sum of two integer polytopes $\polyt_1$ and $\polyt_2$ is
  $\polyt_1 + \polyt_2 = \{\bm\alpha + \bm\beta : \bm\alpha \in \polyt_1, \bm\beta
  \in \polyt_2\}$. For each polytope $\polyt$ and $k \in \N$, we denote by 
  $k \!\cdot\! \polyt$ the  Minkowski sum of $k$ copies of $\polyt$.
\end{definition}

\begin{definition}[Laurent polynomials and Newton polytopes]
  A Laurent polynomial is a finite $\K$-linear combination of monomials
  $\mon{\bm\alpha}$, where  $\bm\alpha \in \Z^n$. The Laurent
  polynomials form a ring, $\KLau$, that corresponds to the semigroup algebra
  of $\Z^n$.
  For a Laurent polynomial
  $f = \sum_{\bm\alpha \in \Z^n} c_{\bm\alpha} \bm{x}^{\bm\alpha}$, its
  Newton polytope is the integer polytope generated by the set of
  the exponents $\bm{\alpha}$ of the non-zero coefficients of $f$; that is,
  $\NP(f) := \text{Convex Hull}(\{\bm\alpha \in \Z^n, c_{\bm{\alpha}} \neq 0\})$.
\end{definition}

Instead of working over $\KLau$, we embed $f$ in a subalgebra
related to its Newton polytope, given by
$\K[\cone_{\NP(f)} \cap \Z^n]$. In this way we  exploit the
sparsity of the (polynomials of the) system.

\begin{definition}[Semigroup algebra of polytopes]
  \label{deg:semigroupFromPolytope}
  
  We consider $r$ integer polytopes
  $\polyt_1,\dots,\polyt_r \subset \R^n$ such that their Minkowski
  sum, $\polyt := \sum_{i=1}^{r} \polyt_i$, has dimension $n$ and  $\bm{0}$
  is its vertex; in particular, $\bm{0}$ as a vertex of every Newton
  polytope $\polyt_i$.
  We also consider the polytope
  $\bar{\polyt} := \sum (\polyt_i \times \{\bm{e_i}\})$, which is the
  Cayley embedding of $\polyt_1,\dots,\polyt_r$.
  %% 
  %$

  In what follows, we  work with the semigroup algebras
  $\KS := \K[\cone_\polyt \cap \Z^n]$ and
  $\KSh := \K[\cone_{\bar{\polyt}} \cap \Z^{n + r}]$.
  We will write the monomials in $\KSh$ as $\monhom{\alpha}{d}$, where
  $\bm{\alpha} \in (\cone_\polyt \cap \Z^n)$ and $\bm{d} \in \N^r$.
\end{definition}

%\begin{observation}
  The algebra  $\KSh$ is $\N^{r}$-multigraded as follows: for every
  $\bm{d} = (d_1,\dots,d_r) \in \N^{r}$, $\K[S_{\polyt}^h]_{\bm{d}}$
  is the $\K$-vector space spanned by the monomials
  $\{\monhom{\alpha}{d} : \bm{\alpha} \in (\sum d_i \cdot \polyt_i)
  \cap \Z^n \}$.
  Then, $F \in \KSh_{\bm{d}}$
  is homogeneous and has multidegree
  $\bm{d}$, which we denote by $\mdeg(F)$.
%\end{observation}

We can think  $\KS$ as the ``dehomogenization'' of $\KSh$.

\begin{definition}[Dehomogenization morphism]
  \label{def:dehomMorph}
  The dehomogenization morphism from $\KSh$ to $\KS$ is the
  surjective ring homomorphism $\dehom : \KSh \rightarrow \KS$ that
  maps the monomials $\monhom{\alpha}{\bm{d}} \in \KSh$ to
  $\dehom(\monhom{\alpha}{\bm{d}}) := \mon{\alpha} \in \KS$.

  If $\L$ is a set of homogeneous polynomials in $\KSh$,
  then we consider $\dehom(\L) = \{\dehom(G) : G \in \L\}$.
\end{definition}

\begin{observation}
  \label{obs:embedInHomogeneous}
  As $\bm{0}$ is a vertex of $\polyt$,
  there is a monomial $\monhom{0}{e_i} \in \KSh$, for every $i \in [r]$.
  Hence, given a
  finite set of monomials
  $\mon{\alpha_1},\dots,\mon{\alpha_k} \in \KS$, we can find a
  multidegree $\bm{d} \in \N^r$ such that
  $\monhom{\alpha_1}{d},\dots,\monhom{\alpha_k}{d} \in \KS_{\bm{d}}$.
  %% This multidegree $\N^r$ is not unique.

  Given a system of polynomials $f_1,\dots,f_m \in \KS$, we can find a
  multidegree $\bm{d} \in \N^r$ and homogeneous polynomials
  $F_1,\dots,F_m \in \KSh_{\bm{d}}$ so that it holds
  $\dehom(F_i) = f_i$, for every $i \in [m]$.

  Moreover, given homogeneous polynomials $F_1,\dots,F_m \in \KSh$ and
  an affine polynomial $g \in \ideal{\dehom(F_1),\dots,\dehom(F_m)}$,
  there is an homogeneous polynomial $G \in \ideal{F_1,\dots,F_m}$
  such that $\dehom(G) = g$. 
\end{observation}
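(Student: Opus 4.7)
The plan is to prove the three claims in order, reducing each to the previous and ultimately to one geometric fact: that every integer point of $\cone_\polyt$ is contained in some Minkowski multiple $k \!\cdot\! \polyt$ with $k \in \N$.

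For the first claim, I would fix $\bm{\alpha} \in \cone_\polyt \cap \Z^n$ and use that, since $\bm{0}$ is a vertex of $\polyt$, the cone admits the presentation $\cone_\polyt = \bigcup_{\lambda \geq 0} \lambda\polyt$. Thus $\bm{\alpha} = \lambda\bm{\beta}$ for some $\lambda \geq 0$ and $\bm{\beta} \in \polyt$. Because $\polyt$ is convex and contains $\bm{0}$, the dilates $\lambda\polyt$ are nested in $\lambda$, so for any integer $k \geq \lceil \lambda\rceil$ we still have $\bm{\alpha} \in k\polyt$. Convexity of each $\polyt_i$ gives $k\polyt = \sum_i k\polyt_i$, so the multidegree $\bm{d} = (k,\dots,k)$ witnesses $\monhomNoBM{\bm{\alpha}}{\bm{d}} \in \KSh_{\bm{d}}$. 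Running this over the finite list $\bm{\alpha}_1,\dots,\bm{\alpha}_k$ and taking the componentwise maximum of the resulting multidegrees produces a single $\bm{d}$ that works for all of them.

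The second claim is then a direct application. I would collect the union of supports of $f_1,\dots,f_m$, apply claim~1 to obtain a common multidegree $\bm{d}$, and then define $F_i$ monomial-by-monomial by keeping the same coefficients and taking each support exponent to its $\bm{d}$-lift $\monhomNoBM{\bm{\alpha}}{\bm{d}}$. By construction $F_i$ is homogeneous of multidegree $\bm{d}$ and $\dehom(F_i) = f_i$ since $\dehom$ just drops the extra coordinates.

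For the third claim, write $g = \sum_{i=1}^m h_i \dehom(F_i)$ with $h_i \in \KS$. Apply claim~2 to lift each $h_i$ to a homogeneous $H_i \in \KSh_{\bm{d}_i}$ satisfying $\dehom(H_i) = h_i$. The product $H_i F_i$ is then homogeneous of multidegree $\bm{d}_i + \mdeg(F_i)$, but across $i$ these multidegrees differ. To align them, choose $\bm{D} \in \N^r$ componentwise dominating every $\bm{d}_i + \mdeg(F_i)$ and multiply $H_i F_i$ by the monomial $\monhomNoBM{\bm{0}}{\bm{D}-\bm{d}_i-\mdeg(F_i)}$; this lives in $\KSh$ because $\monhomNoBM{\bm{0}}{\bm{e}_j} \in \KSh$ for every $j$, and it maps to $1$ under $\dehom$. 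Summing the adjusted products yields a homogeneous $G \in \ideal{F_1,\dots,F_m}$ of multidegree $\bm{D}$ with $\dehom(G) = g$.

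The one genuinely geometric step — and the only place I anticipate friction — is the first one, specifically the passage from ``$\bm{\alpha} \in \cone_\polyt$'' to ``$\bm{\alpha} \in k\polyt$ for some integer $k$.'' Everything downstream is bookkeeping with multidegrees and the identity $\dehom(\monhomNoBM{\bm{0}}{\bm{e}_j}) = 1$; the hypothesis that $\bm{0}$ is a vertex of $\polyt$ (hence of each $\polyt_i$) is what makes this geometric step, and consequently the whole observation, go through.
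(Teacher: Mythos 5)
The paper states this as an \emph{observation} with no written proof, so there is nothing to compare against line by line; your plan correctly supplies the argument the authors leave implicit, and it is consistent with the one hint the observation does give (the existence of the monomials $\monhomNoBM{\bm 0}{\bm e_i}$). Claims~2 and~3 are straightforward bookkeeping once claim~1 is in hand, and you handle them correctly: lifting coefficient-by-coefficient, then using that $\dehom$ is a ring homomorphism together with $\dehom\bigl(\monhomNoBM{\bm 0}{\bm e_j}\bigr)=1$ to equalize multidegrees across summands without leaving the ideal $\ideal{F_1,\dots,F_m}$.

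On claim~1, you correctly isolate the only geometric content: that $\bm\alpha\in\cone_\polyt\cap\Z^n$ lies in $k\polyt$ for some integer $k$, so that $\bm d=(k,\dots,k)$ gives $\monhomNoBM{\bm\alpha}{\bm d}\in\KSh_{\bm d}$. One small misattribution worth repairing: the decomposition $\cone_\polyt=\bigcup_{\lambda\geq 0}\lambda\polyt$ holds for any convex $\polyt$ and does not by itself require $\bm 0\in\polyt$. The place where $\bm 0$ being a vertex (hence $\bm 0\in\polyt$, and $\bm 0\in\polyt_i$ for every $i$) actually enters is (i) the nesting $\lambda_1\polyt\subseteq\lambda_2\polyt$ for $0\leq\lambda_1\leq\lambda_2$, which lets you round $\lambda$ up to an integer $k$ and then take a single $k$ dominating all $\bm\alpha_j$, and (ii) the existence of $\monhomNoBM{\bm 0}{\bm e_j}\in\KSh_{\bm e_j}$ used in claim~3. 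With that reattribution, the proof is correct and, if anything, makes explicit a step the paper glosses over.
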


\begin{observation}
  \label{thm:dehomSameDegIsInjective}
  If we fix a multidegree $\bm{d} \in \N^r$, then the map $\dehom$ restricted to
  $\KSh_{\bm{d}}$ is injective.
\end{observation}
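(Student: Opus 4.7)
The plan is to exploit the fact that $\KSh_{\bm{d}}$ has an explicit monomial basis and that $\dehom$ acts very transparently on monomials, so injectivity reduces to observing that $\dehom$ sends this basis to a set of distinct (hence linearly independent) monomials of $\KS$.

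More concretely, I would first recall from \Cref{deg:semigroupFromPolytope} that as a $\K$-vector space,
\[
\KSh_{\bm{d}} = \mathrm{span}_\K\!\bigl\{\monhom{\alpha}{d} : \bm{\alpha} \in \bigl(\textstyle\sum_i d_i \cdot \polyt_i\bigr) \cap \Z^n \bigr\},
\]
and that the indicated monomials form a $\K$-basis of $\KSh_{\bm{d}}$ (distinct monomials of a semigroup algebra are linearly independent by definition). Next I would apply the definition of $\dehom$ from \Cref{def:dehomMorph}: for a fixed $\bm{d}$, it sends each basis monomial $\monhom{\alpha}{d}$ to $\mon{\alpha} \in \KS$. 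Since the map $\bm{\alpha} \mapsto \mon{\alpha}$ is injective on $\Z^n$ (again because monomials indexed by distinct exponents of a semigroup algebra are $\K$-linearly independent), different basis monomials in $\KSh_{\bm{d}}$ go to different monomials in $\KS$.

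Finally, take any $F \in \KSh_{\bm{d}}$ with $\dehom(F) = 0$. Writing $F = \sum_{\bm{\alpha}} c_{\bm{\alpha}} \, \monhom{\alpha}{d}$ with $\bm{\alpha}$ ranging over the finite set $(\sum_i d_i \cdot \polyt_i) \cap \Z^n$, I would compute $\dehom(F) = \sum_{\bm{\alpha}} c_{\bm{\alpha}} \, \mon{\alpha}$ in $\KS$. By the linear independence of the monomials $\mon{\alpha}$ in $\KS$, every coefficient $c_{\bm{\alpha}}$ must vanish, so $F = 0$. This proves the restriction is injective.

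There is essentially no main obstacle: the statement is a bookkeeping lemma whose only subtlety is making explicit that, once $\bm{d}$ is fixed, the second component of each exponent carries no information that could be lost by $\dehom$, so the morphism is just a relabeling of basis elements. The only thing worth being careful about is to invoke the correct grading from \Cref{deg:semigroupFromPolytope}, namely that $\KSh_{\bm{d}}$ is spanned by monomials with exactly the second-exponent vector $\bm{d}$, so that no two basis elements in $\KSh_{\bm{d}}$ share the same first exponent $\bm{\alpha}$.
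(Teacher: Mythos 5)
Your argument is correct. The paper states this as an observation without giving a proof, precisely because the argument is as immediate as you describe: once $\bm{d}$ is fixed, $\dehom$ merely strips the constant second component $\bm{d}$ from each basis monomial $\monhom{\alpha}{d}$, sending it bijectively to $\mon{\alpha}$, and injectivity on the span follows from linear independence of distinct monomials in $\KS$. This is exactly the intended justification.
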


% Our objective is to solve $0$-dimensional systems $f_1,\dots,f_m$
% taking into account their sparsity. For that, we will embed the
% systems in a semigroup algebra $\KS$, for some
% $\polyt_0,\dots,\polyt_r$, and compute lexicographical \Groebner basis
% over $\KS$ for the ideal generated by this system. We will perform
% this computations by embedding the polynomials in $\KSh$.

% \begin{example}
%   Let $f_1,\dots,f_r \in \K[\Z^n]$ and assume that, for
%   $\polyt_i := \NP(f_i)$, $\polyt := \sum_i \polyt_i$ is
%   full-dimensional and has $\bm{0}$ as a vertex. Then, we can embed
%   $f_1,\dots,f_r$ in $\KS$. Moreover, for each
%   $f_i = \sum_{\alpha \in \Z^n} c_{i,\alpha} \mon{\alpha}$, we
%   consider
%   $\bar{f_i} = \sum_{\alpha \in \Z^n} c_{i,\alpha}
%   \monhom{\alpha}{e_i}$. Then, we obtain multihomogeneous polynomials
%   $\bar{f}_1,\dots,\bar{f}_r \in \KSh$, such that
%   $\dehom(\bar{f}_i) = f_i$ and $\mdeg(\bar{f}_i) = \bm{e_i}$.
% \end{example}

\subsection{\Groebner bases}

We recall some definitions related to Groebner basis over semigroup
algebras from \cite{faugere_sparse_2014}.  Let $S$ be a pointed affine
semigroup.

\begin{definition}
  [Monomial order]

  Given a pointed semigroup algebra $\K[S]$, a monomial order for
  $\K[S]$, say $<$, is a total order for the monomials in $\K[S]$ such that:
  \begin{itemize}
  \item For any $\bm{\alpha} \in S \setminus \{\bm{0}\}$, it holds $\mon{\bm{0}} < \mon{\bm{\alpha}}$.
  \item For every $\bm{\alpha},\bm{\beta},\bm{\gamma} \in S$, if $\mon{\bm{\alpha}} < \mon{\bm{\beta}}$ then
    $\mon{\bm{\alpha} + \bm{\gamma}} < \mon{\bm{\beta} + \bm{\gamma}}$.
  \end{itemize}
\end{definition}

\begin{observation}
  Monomial orders always exist for pointed affine semigroups.
  To construct them, first we  embed any pointed affine semigroup of dimension $n$ in
  a pointed rational cone $\cone \subset \R^n$. Then, we choose $n$
  linearly independent forms $l_1,\dots,l_n$ from the dual cone of $\cone$,
  which is 
  $\{ l : \R^n \rightarrow \R \mid
  \forall \bm{\alpha} \in \C, l( \bm{\alpha}) \geq 0\}$.
  We  define the monomial order so that 
  $\mon{\bm{\alpha}} < \mon{\bm{\beta}}$ if and only if there is a $k \leq n$ such
  that for all $i < k$  it  holds $ l_i(\bm{\alpha}) = l_i(\bm{\beta})$ and
  $l_k(\bm{\alpha}) < l_k(\bm{\beta})$.
\end{observation}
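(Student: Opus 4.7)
My plan is to verify the three required properties of the stated construction and, along the way, justify that each ingredient exists. Let $S \subset \Z^n$ be a pointed affine semigroup of dimension $n$, generated by $\bm{\alpha}_1,\dots,\bm{\alpha}_s$.

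First I would address the embedding step. The cone $\cone := \cone_{S}$ generated by $\bm{\alpha}_1,\dots,\bm{\alpha}_s$ is a rational polyhedral cone of dimension $n$ in $\R^n$ that contains $S$. Pointedness of $S$ transfers to $\cone$: if $\cone$ contained a line through the origin, then by rationality it would contain two opposite nonzero integer points $\bm{\beta}$ and $-\bm{\beta}$, each a nonnegative rational combination of the generators; clearing denominators and exploiting that $S$ is closed under addition would produce nonzero invertible elements of $S$, a contradiction.

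Next I would argue that we can choose $n$ linearly independent forms in the dual cone $\cone^\vee := \{ l \in (\R^n)^* \mid l(\bm{\alpha}) \geq 0 \text{ for all } \bm{\alpha} \in \cone\}$. The key fact here is the standard duality that a cone $\cone$ of dimension $n$ in $\R^n$ is pointed if and only if its dual $\cone^\vee$ is full-dimensional, i.e.\ spans $(\R^n)^*$. Granting this, $\cone^\vee$ contains $n$ linearly independent linear forms $l_1,\dots,l_n$, which is what we need.

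Having fixed such $l_1,\dots,l_n$, I would verify the three defining properties of a monomial order. \textbf{Totality:} if $\bm{\alpha} \neq \bm{\beta}$ in $S$, then since $l_1,\dots,l_n$ form a basis of $(\R^n)^*$, they separate $\bm{\alpha}$ and $\bm{\beta}$, so a smallest $k$ with $l_k(\bm{\alpha}) \neq l_k(\bm{\beta})$ exists and fixes the comparison; antisymmetry and transitivity follow from the lexicographic recipe. \textbf{Positivity:} for $\bm{\alpha} \in S \setminus \{\bm{0}\}$ we have $l_i(\bm{\alpha}) \geq 0$ for every $i$, and because the $l_i$ form a basis, not all of them can vanish at $\bm{\alpha}$; the first $k$ with $l_k(\bm{\alpha}) > 0$ witnesses $\mon{\bm{0}} < \mon{\bm{\alpha}}$. \textbf{Translation invariance:} linearity of the $l_i$ gives $l_i(\bm{\alpha} + \bm{\gamma}) - l_i(\bm{\beta} + \bm{\gamma}) = l_i(\bm{\alpha}) - l_i(\bm{\beta})$, so every equality and strict inequality used to compare $\mon{\bm{\alpha}}$ with $\mon{\bm{\beta}}$ transports to the shifted pair.

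The only genuinely nontrivial step is the duality fact in the second paragraph, namely pointedness $\Leftrightarrow$ $\cone^\vee$ full-dimensional; everything else is bookkeeping with linearity and the lexicographic ordering. I would cite this from a standard reference on polyhedral cones (e.g.\ \cite{cox_toric_2011}) rather than reprove it.
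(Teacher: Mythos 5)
Your argument is correct, and since the paper states the observation as a bare construction recipe without a formal proof, your write-up simply fills in the verification the paper leaves implicit; it is the natural (and essentially the only) way to justify the recipe. All three axioms of a monomial order in the paper's sense (totality, $\mon{\bm{0}}$ minimal, translation invariance) are checked properly, and the two nontrivial background facts you invoke — that pointedness of $S$ passes to the rational cone $\cone_S$ it generates, and that pointedness of $\cone$ is equivalent to $\cone^\vee$ being full-dimensional — are standard and correctly cited.
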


\begin{definition}[Leading monomial]
  Given a monomial order $<$ for a pointed affine semigroup algebra
  $\K[S]$ and a polynomial $f \in \K[S]$, its leading monomial,
  $\LM_<(f)$ is the biggest monomial of $f$ \wrt the monomial order
  $<$.
\end{definition}

The exponent of the leading monomial of $f$ always corresponds to a
vertex of $\NP(f)$.

\begin{definition}[\Groebner basis]
  Let $\K[S]$ be a pointed affine semigroup algebra and consider a
  monomial order $<$ for $\K[S]$.  For an ideal $I \subset \K[S]$,
  a set $G \subset I$ is a \Groebner basis of $I$ if
  $\{LM_<(g) : g \in G\}$ generates the same ideal as
  $\{LM_<(f) : f \in I\}$.

  In other words, if for every $f \in I$,
  there is $g \in G$ and $\mon{\alpha} \in \K[S]$ such that
  $\LM_<(f) = \mon{\alpha} \, \LM_<(g)$.
\end{definition}

As $S$ is finitely generated, the algebra $\K[S]$ is a Noetherian ring
\cite[Thm.~7.7]{gilmer_commutative_1984}.  Hence, for any monomial
order and any ideal, there is always a finite \Groebner basis.

% We say that our semigroup algebra $\K[S^h]$ is $\N^r$-multigraded, if
% we can write $\K[S^h] = \bigoplus_{\bm{d} \in \N^r} \K[S^h]_{\bm{d}}$
% such that $\K[S^h]_{0} = \K$, $\K[S^h]_{\bm{d}}$ is an abelian group (in
% our case, a $\K$-vector space) and, for $\bm{d_1},\bm{d_2} \in \Z^r$,
% $
% % \K[S]_{\bm{d_1}} \cdot \K[S]_{\bm{d_2}} =
% \{f_1 \cdot f_2 : f_1 \in \K[S]_{\bm{d_1}}, f_2 \in \K[S]_{\bm{d_2}}\}
% \subset \K[S^h]_{\bm{d_1 }+\bm{ d_2}}.$ Given
% $f \in \K[S^h]_{\bm{d}}$, $\bm{d}$ is the multidegree of $f$,
% $\mdeg(f) = \bm{d}$.

We will consider monomial orders for $\KSh$ that
we can relate to monomial orders in $\KS$ and $\K[\N^r]$.

\begin{definition}
  [Multigraded monomial order]
  \label{def:mgraded-order}
  
  We say that a monomial order $<$ for
  $\KSh$ is multigraded, if there are monomial orders
  $<_\polyt$ for $\KS$ and $<_h$ for $\K[\N^r]$ such that,
  for every
  $\monhom{\alpha_1}{d_1},\monhom{\alpha_2}{d_2} \in \KSh$,
  it holds 
  \begin{equation} \label{eq:graduateOrder}
    \monhom{\alpha_1}{d_1} < \monhom{\alpha_2}{d_2}
    \iff
    \begin{cases}
      \mon{d_1} <_{h} \mon{d_2}  \text{ or} \\
      \bm{d_1} = \bm{d_2} \, \text{ and } \, \mon{\alpha_1} <_{\polyt} \mon{\alpha_2}
    \end{cases} .
  \end{equation}
\end{definition}

Multigraded monomial orders are ``compatible'' with the
dehomogenization morphism (\Cref{def:dehomMorph}).

\begin{remark}
  \label{rem:associated-order}
  In what follows, given a multigraded
  monomial order $<$ for $\KSh$, we also use the same symbol, that is  $<$,
  for the associated monomial order of $\KS$.
\end{remark}

\begin{lemma}
  \label{thm:monOrderCommuteDehom}
  Consider a polynomial $f \in \KS$. Let $<$ be a multigraded monomial
  order. For any multidegree $\bm{d}$ and any homogeneous
  $F \in \KSh_{\bm{d}}$ such that $\dehom(F) = f$, it holds
  $
  \LM_{<}(f) = \dehom(\LM_{<}(F)).
  $
\end{lemma}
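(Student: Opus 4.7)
The plan is to exploit two facts: that $F$ is homogeneous of multidegree $\bm{d}$ (so every monomial of $F$ carries the \emph{same} $\bm{d}$-part), and that the order is multigraded in the sense of \Cref{def:mgraded-order}, which on a single multidegree stratum reduces to the order $<_\polyt$ on $\KS$.

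First I would write $F = \sum_{i=1}^{k} c_i \, \monhomNoBM{\alpha_i}{d}$ with the $c_i \in \K^*$ and all exponents distinct. Applying $\dehom$ gives $f = \dehom(F) = \sum_{i=1}^k c_i \mon{\alpha_i}$. By \Cref{thm:dehomSameDegIsInjective}, $\dehom$ restricted to $\KSh_{\bm d}$ is injective, so the exponents $\bm{\alpha_i}$ must be pairwise distinct; hence no cancellation occurs and the monomials of $f$ are exactly $\{\mon{\alpha_i}\}_{i=1}^k$.

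Next I would compare any two of these monomials using \cref{eq:graduateOrder}. Since both $\monhomNoBM{\alpha_i}{d}$ and $\monhomNoBM{\alpha_j}{d}$ share the same $\bm{d}$-part, the first clause of the disjunction fails (neither $\mon{d} <_h \mon{d}$ nor the reverse), so
\[
\monhom{\alpha_i}{d} < \monhom{\alpha_j}{d} \iff \mon{\alpha_i} <_\polyt \mon{\alpha_j}.
\]
Thus $<$ restricted to the monomials of $F$ is order-isomorphic, via $\dehom$, to $<_\polyt$ restricted to the monomials of $f$. In particular the $<$-maximum among the monomials of $F$ is mapped by $\dehom$ to the $<_\polyt$-maximum among the monomials of $f$, i.e. $\dehom(\LM_<(F)) = \LM_<(f)$ (recalling \Cref{rem:associated-order} that the same symbol $<$ denotes the associated order on $\KS$).

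There is no real obstacle here; the statement is essentially unpacking the definitions. The only subtlety worth stating explicitly is to justify why the correspondence between monomials of $F$ and monomials of $f$ is a bijection: without \Cref{thm:dehomSameDegIsInjective}, distinct homogeneous monomials of $F$ could in principle collapse under $\dehom$ and cause cancellations that change the leading term of $f$. Once that injectivity is invoked, the rest is an immediate consequence of the homogeneity of $F$ and clause \cref{eq:graduateOrder} of the multigraded order.
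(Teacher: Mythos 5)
Your proof is correct, and since the paper states this lemma without proof (treating it as an immediate consequence of the definitions), your argument supplies exactly the intended unpacking: homogeneity of $F$ confines all its monomials to a single $\bm{d}$-stratum, \Cref{thm:dehomSameDegIsInjective} rules out cancellation under $\dehom$, and the second clause of \cref{eq:graduateOrder} identifies $<$ on that stratum with the associated order on $\KS$, so the $<$-maximum of the monomials of $F$ is carried to the $<$-maximum of the monomials of $f$.
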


\subsection{Regularity and solutions at infinity}
\label{sec:reg-inf-sol}

The Bernstein-Kushnirenko-Khovanskii (BKK) theorem bounds the (finite)
number of solutions of a square system of sparse Laurent
polynomials over the torus $\T$, where $\C^* := \C \setminus \{0\}$.

\begin{definition}
  [Mixed volume]

  Let $\polyt_1,\dots,\polyt_n \in \R^n$ be integer polytopes.
  Their
  mixed volume, $\MV(\polyt_1,\dots,\polyt_n)$, is the alternating sum
  of the number of integer points of the polytopes obtained
  by all possible  Minkowski sums, that is 
  {\footnotesize
    \begin{multline}
      \label{eq:mixedVolume}
      \MV(\polyt_1,\dots,\polyt_n) =  \\
      (-1)^n
      +
      \sum_{k = 1}^n (-1)^{n-k}
      \Big(
      \sum_{\substack{I \subset \{1,\dots,n\} \\ \#I = k}}
      \!\!\!\!
      \# \left(
        (\polyt_{I_1} + \dots + \polyt_{I_k})\cap \Z^n
      \right)
      \Big).
    \end{multline}
  }
\end{definition}

\begin{theorem}[{BKK bound \cite[Thm~7.5.4]{cox_using_2005}}]

  Let $f_1,\dots,f_n$ be a system of polynomials with Newton polytopes
  $\polyt_1,\dots,\polyt_n$ having  a finite number of solutions over
  $\T$.
  The mixed volume
  % of $\polyt_1,\dots,\polyt_n$,
  $\MV(\polyt_1,\dots,\polyt_n)$ upper bounds the number of solutions
  of the system over the torus $\T$.
  If the non-zero coefficients of the polynomials
  are generic, then the bound is tight.
\end{theorem}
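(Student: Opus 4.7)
The plan is to prove the BKK bound via toric compactification and intersection theory, paralleling Bernstein's original approach. First I would reduce to the case of generic coefficients: by lower semicontinuity of the (finite) solution count under specialization with fixed supports, it suffices to establish the upper bound for a generic system and then verify tightness in that same case.

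Next I would choose a complete fan $\Sigma$ in $\R^n$ that refines the normal fans of $\polyt_1,\dots,\polyt_n$ (for instance the normal fan of the Minkowski sum $\polyt_1 + \dots + \polyt_n$) and embed $\T$ as the dense torus of the associated projective toric variety $X_\Sigma$. Each $\polyt_i$ determines a nef Cartier divisor class $D_i$ on $X_\Sigma$ and each $f_i$ a global section of $\O_{X_\Sigma}(D_i)$. The core computation is that the top intersection number $D_1 \cdots D_n$ on $X_\Sigma$ equals $\MV(\polyt_1,\dots,\polyt_n)$. One establishes this first for a single nef toric divisor by showing $(D_\polyt)^n = n! \cdot \mathrm{vol}_n(\polyt)$ (via identifying global sections of $\O_{X_\Sigma}(k D_\polyt)$ with lattice points of $k \polyt$ and an asymptotic Hilbert/Ehrhart argument), and then uses the polarization identity: multilinearity and symmetry of the intersection product match exactly the polarization of the volume polynomial that defines mixed volume.

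Finally I would argue that for generic non-zero coefficients the hypersurfaces $\{f_i = 0\}$ meet transversely and entirely inside the open torus $\T \subset X_\Sigma$. Each torus-invariant stratum of $X_\Sigma$ corresponds to a cone of $\Sigma$, hence to a common face of the $\polyt_i$, and the restriction of $f_i$ to that stratum is a Laurent polynomial supported on the corresponding face of $\polyt_i$. A stratum-by-stratum Bertini argument then shows that a generic choice of coefficients leaves every boundary stratum free of common zeros, so all solutions lie in $\T$. Combined with transversality, this identifies the number of torus solutions with the intersection number $D_1 \cdots D_n = \MV(\polyt_1,\dots,\polyt_n)$, giving both the BKK upper bound and its tightness; the bound for non-generic systems with finitely many torus solutions then follows by semicontinuity.

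The hard part is precisely this boundary-control step: showing that for generic coefficients no intersection mass escapes onto the torus-invariant divisors of $X_\Sigma$. This requires careful combinatorial bookkeeping of how each $f_i$ restricts to each orbit closure and a Bertini-type dimension argument on every stratum, which is where Bernstein's original proof invests most of its effort. An alternative that avoids intersection theory altogether is a polyhedral-homotopy deformation driven by a generic lifting of the Newton polytopes: it reduces the count to a sum over mixed cells of a coherent mixed subdivision of $\polyt_1 + \dots + \polyt_n$, which Minkowski's theorem on mixed volumes then identifies with $\MV(\polyt_1,\dots,\polyt_n)$.
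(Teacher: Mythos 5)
The paper does not prove this theorem; it is quoted verbatim from \cite[Thm.~7.5.4]{cox_using_2005} and used as a black box, so there is no in-paper argument to compare against. Your sketch is the standard toric-intersection-theoretic proof of Bernstein's theorem (essentially the route in Fulton or Cox--Little--O'Shea), and the roadmap is correct: compactify $\T$ in $X_\Sigma$ for a fan refining all the normal fans, identify each $\polyt_i$ with a nef divisor class $D_i$ and each $f_i$ with a section of $\O_{X_\Sigma}(D_i)$, compute $D_1 \cdots D_n = \MV(\polyt_1,\dots,\polyt_n)$ by establishing $(D_\polyt)^n = n!\,\mathrm{vol}_n(\polyt)$ and polarizing, and then show by a stratum-wise Bertini argument that a generic system has all intersection points transverse and in $\T$. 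You correctly flag the boundary-control step as the one carrying the real weight, and the reduction from arbitrary systems with finitely many torus solutions to the generic case via semicontinuity/conservation of number is the right move.

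One small caution on the reduction step: as stated, ``lower semicontinuity of the solution count'' is true for the \emph{unweighted} count of isolated points, but what actually flows cleanly from the intersection-theoretic picture is conservation of the \emph{multiplicity-weighted} count on the compact $X_\Sigma$; the upper bound for a non-generic system then comes from the fact that intersection mass can escape from $\T$ to the boundary strata but cannot enter $\T$ from them. Stating it that way sidesteps having to argue semicontinuity on the noncompact torus directly, where a priori solutions could run off to infinity without being visible. With that phrasing tightened, your outline is a faithful proof sketch; the polyhedral-homotopy alternative you mention at the end is likewise a legitimate independent route.
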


Toric varieties relate semigroup algebras with the torus $\T$.  A
toric variety is an irreducible variety $X$ that contains $\T$ as an
open subset such that the action of $\T$ on itself extends to an
algebraic action of $\T$ on $X$ \cite[Def.~3.1.1]{cox_toric_2011}.
Semigroup algebras correspond to the coordinate rings of the affine
pieces of $X$.

Given an integer polytope $\polyt$, we can define a projective
complete normal irreducible toric variety $X$ associated to it
\cite[Sec.~2.3]{cox_toric_2011}.
Likewise, given a polynomial system $(f_1,\dots,f_m)$, we can define
a projective toric variety $X$ associated to the \mksum of
their Newton polytopes.
We can homogenize these polynomials in a way that they belong to the
total coordinate ring of $X$ 
\cite[Sec.~5.4]{cox_toric_2011}. This homogenization is related to the
facets of the polytopes.

To be more precise, given an integer polytope $\polyt \subset \R^n$, we say
that an integer polytope $\polyt_1$ is a \textit{$\N$-Minkowski summand} of
$\polyt$ if there is a $k \in \N$ and another polytope $\polyt_2$ such
that $\polyt_1 + \polyt_2 = k \!\cdot\! \polyt$
\cite[Def.~6.2.11]{cox_toric_2011}.
Every $\N$-Minkowski summand
$\polyt_1$ of $\polyt$ defines a torus-invariant basepoint free
Cartier divisor $D$ of the projective toric variety $X$ associated to
$\polyt$ \cite[Cor.~6.2.15]{cox_toric_2011}. This divisor defines an
invertible sheaf $\O_X(D)$ whose global sections form the vector space
of polynomials in $\K[\Z^n]$ whose Newton polytopes are contained in
$\polyt_1$ \cite[Lem.~1]{massri_solving_2016}.  Therefore, to
homogenize $f_1,\dots,f_m$ over $X$ we need to choose polytopes
$\polyt_1,\dots,\polyt_m$ such that all of them are $\N$-Minkowski
summands of $\polyt$ associated to $X$ and
$\NP(f_i) \subset \polyt_i$.
Hence, for any homogeneous $F \in \KSh_{\bm{d}}$, we can homogenize
$\dehom(F)$ with respect to the $\N$-Minkowski summand
$\sum_i d_i \polyt_i$ of $\Delta$.

We alert the reader that homogeneity in $\KSh_{\bm{d}}$ is different from
homogeneity in the total coordinate ring of $X$,
see  \cite[Sec.~5.4]{cox_toric_2011}
but they are related through the degree $\bm{d}$.

\begin{definition}
  [Solutions at infinity]
  \label{def:sol-inf}
  
  Let $(f_1,\dots,f_m)$ be a system of polynomials.  Let $X$ be the
  projective toric variety associated to a polytope $\polyt$ such that
  the Newton polytope of $f_i$ is a $\N$-Minkowski summand
  of $\polyt$, for all $i$.
  We say that the system has no solutions at infinity \wrt $X$ if the
  homogenized system \wrt their Newton polytopes has no solutions
  over $X \setminus \T$.
\end{definition}

% We can also demonstrate the non-existence of solutions at infinity
% using initial forms.  The latter describe the way that homogenized
% polynomials behave at ``infinity''.

% \begin{definition}[Initial form]
%   Given a vector ${\bm\omega} \in \R^n$ and a polynomial
%   $f = \sum_{\bm\alpha} c_{\bm\alpha} x^{\bm\alpha}$,
%   the
%   initial form $\mathrm{in}_{\bm\omega}(f)$ is
%   the sum of terms, $c_{\bm\alpha} x^{\bm\alpha}$, of 
%   $f$ such that
%   $\langle {\bm\alpha} , {\bm\omega} \rangle = \sum_{i} {\alpha}_i \,
%   {\omega}_i$ is maximal over $\NP(f)$. 
% \end{definition}

% \begin{proposition}[Ref, Supronov?]
%   \todo[color=red]{ADD REF!}
%   %%
%   If, for every ${\bm\omega} \in \R^n$, the initial system
%   $(\mathrm{in}_{\bm\omega}(f_1),\dots,\mathrm{in}_{\bm\omega}(f_m))$
%   has no solutions over $\T$, then
%   the system of polynomials $(f_1,\dots,f_m)$ has no solutions at
%   infinity.
%   %%
%   \todo{REF. Where at infinity?}
% \end{proposition}

\begin{proposition}
  % [{\cite[Thm.~2.6]{soprunov_toric_2013}}]
  [{\cite[Thm.~3]{massri_solving_2016}}]
  \label{thm:noSolsAtInfBKKtight}
  
  Consider a system $(f_1,\dots,f_n)$ having finite number of
  solutions over $\T$.  Let $X$ be the projective toric variety
  associated to the corresponding
  Newton polytopes. Then, the number of solutions of the homogenized
  system over $X$, counting multiplicities, is exactly the BKK
  bound.
  When the original system has no solutions at
  infinity, then the BKK is tight over $\T \subset X$.
\end{proposition}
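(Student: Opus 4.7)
The plan is to combine toric intersection theory with a deformation (conservation of number) argument. The geometric setup laid out before \Cref{def:sol-inf} already gives us a complete normal projective toric variety $X$ together with torus invariant basepoint free Cartier divisors $D_1,\dots,D_n$, one per Newton polytope $\NP(f_i)$, such that the homogenizations of the $f_i$ are global sections of the invertible sheaves $\O_X(D_i)$. The solutions of the homogenized system over $X$ are therefore the intersection scheme $V_1 \cap \dots \cap V_n$, where each $V_i$ is the zero locus of such a section.

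First, I would invoke the standard toric intersection formula: on a complete toric variety, the intersection product $D_1 \cdots D_n$ of torus-invariant divisors equals $\MV(\NP(f_1),\dots,\NP(f_n))$ with the normalization of \Cref{eq:mixedVolume}. This is exactly the intersection-theoretic version of the Bernstein--Kushnirenko theorem; it only uses that the $D_i$ are basepoint free, which we already have because each $\NP(f_i)$ is an $\N$-Minkowski summand of $\polyt$.

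Next, I would deform: pick a one-parameter family $(f_1^t,\dots,f_n^t)$ of systems with fixed supports, such that at $t=0$ we recover the given system and for generic $t$ the coefficients are generic. By the BKK theorem, for generic $t$ the perturbed system has $\MV(\NP(f_1),\dots,\NP(f_n))$ simple solutions in $\T$, none at infinity. By flatness of this family over the base and properness of $X$, the length of the zero-dimensional scheme cut out on $X$ is constant along the family, equal to $D_1 \cdots D_n$. The hypothesis that the system has only \emph{finitely many} torus solutions, combined with the basepoint-freeness of the $D_i$, forces the intersection $V_1 \cap \dots \cap V_n$ at $t=0$ to already be $0$-dimensional on $X$ (any positive-dimensional component would meet $\T$, contradicting finiteness of torus solutions by the torus-invariance of $X\setminus\T$). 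Hence the sum of local intersection multiplicities at $t=0$ equals the intersection number, namely $\MV(\NP(f_1),\dots,\NP(f_n))$, which proves the first claim. The tightness assertion then follows immediately: by \Cref{def:sol-inf}, absence of solutions at infinity means the $0$-dimensional scheme is entirely supported in $\T \subset X$, so the same count of $\MV$ points (with multiplicity) occurs over the torus.

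The main obstacle is the conservation-of-number step. We only assumed finite $\T$-solutions, so a priori the homogenized scheme on $X$ could have excess components at infinity of positive dimension. I expect the cleanest way around this is to argue that the basepoint-freeness of each $D_i$ on the complete toric variety $X$, together with torus-invariance of $X\setminus \T$, precludes such excess components under the finiteness hypothesis; alternatively, one can intersect with a very ample perturbation of the $D_i$ and use upper semicontinuity of fiber dimension for the family of zero schemes. Once the zero scheme is verified to be $0$-dimensional on $X$, the equality of its length with the toric intersection number is standard.
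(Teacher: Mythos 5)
The paper does not prove this proposition; it imports it verbatim from \cite[Thm.~3]{massri_solving_2016}, which it cites in the statement. Your reconstruction follows the standard route for such results — toric intersection theory together with a conservation-of-number deformation — so comparing approaches is not really possible; what I can do is assess whether your sketch is sound.

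It has a genuine gap, and you actually put your finger on it yourself. The parenthetical claim ``any positive-dimensional component would meet $\T$, contradicting finiteness of torus solutions by the torus-invariance of $X\setminus\T$'' is simply false: the boundary $X\setminus\T$ is a torus-invariant \emph{closed} subset, which makes it a perfectly good place for a positive-dimensional component of $V_1\cap\dots\cap V_n$ to live without ever touching $\T$. Torus-invariance of the boundary does not push components into the torus; if anything it is what makes excess components at infinity the typical failure mode. Basepoint-freeness of the $D_i$ also does not help here, since basepoint-freeness controls the \emph{generic} section of $\O_X(D_i)$, not the specific section coming from a given $f_i$, which can vanish identically along boundary divisors. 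So finiteness of the torus solutions alone does not yield $0$-dimensionality of the homogenized zero scheme over $X$, and consequently the conservation-of-number step is not justified: you invoke flatness of the family of zero schemes and constancy of length, but flatness of the intersection family over the deformation base is exactly what would require $0$-dimensionality at every $t$, including $t=0$. Your two proposed fixes do not close the gap — the first restates the same false claim, and the second (upper semicontinuity of fibre dimension) again presupposes what is to be shown. The resolution in the literature is different: either one assumes properness of the intersection on $X$ (which in Massri's setting is part of the hypotheses/derived from the stably-twisted condition), or one replaces the naive count by the refined intersection product in the Chow ring and interprets ``number of solutions counting multiplicities'' as that intersection number, so that the equality with $\MV$ holds without any $0$-dimensionality claim; the tightness-over-$\T$ statement is then the one that genuinely uses the ``no solutions at infinity'' hypothesis. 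As written, your argument conflates these two readings.
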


\begin{definition} 
  [Koszul complex, {\cite[Sec.~17.2]{eisenbud_commutative_2004}}]
  \label{def:KoszulComplex}
  For a sequence of homogeneous  $F_1,\dots,F_k \in \KSh$
  of multidegrees $\bm{d}_1,\dots,\bm{d}_k$ and a multidegree
  $\bm{d} \in \N^r$, we denote by $\Ko(F_1,\dots,F_k)_{\bm{d}}$ the
  strand of the Koszul complex of $F_1,\dots,F_k$ of multidegree
  $\bm{d}$, that is,
  
  \vspace{\abovedisplayshortskip}
    \hfil
  $
  \Ko(F_1,\dots,F_k)_{\bm{d}} :
  0 \rightarrow (\Ko_{k})_{\bm{d}} \xrightarrow{\delta_k} \dots 
  \xrightarrow{\delta_1} (\Ko_{0})_{\bm{d}} \rightarrow 0,
  $
%  \hfill 

  \noindent
  where, for $1 \leq t \leq k$, we have 
  $$
  (\Ko_{t})_{\bm{d}} :=
  \bigoplus_{\substack{I \subset \{1,\dots,k\} \\ \#I = t}}
  \KSh_{( \bm{d} - \sum\limits_{i \in I} \bm d_{i} )}
  \otimes
  (e_{I_1} \wedge \dots \wedge {e_{I_t}}).
  $$
  The maps (differentials) act as follows:
  {\small
    \begin{multline}
      \delta_t \Big( \sum_{\substack{I \subset \{1,\dots,k\} \\ \#I = t}}
      g_I \otimes (e_{I_1} \wedge \dots \wedge {e_{I_t}}) \Big) = \\
      \sum_{\substack{I \subset \{1,\dots,k\} \\ \#I = t}} \sum_{i = 1}^t
      (-1)^{i-1} F_{I_i} \, g_I \otimes (e_{I_1} \wedge \dots \wedge
      \widehat{e_{I_i}} \wedge \dots \wedge {e_{I_t}}).
    \end{multline}
  }

  The expression
  $(e_{I_1} \wedge \dots \wedge \widehat{e_{I_i}} \wedge \dots \wedge
  {e_{I_t}})$ denotes that we skip the term ${e_{I_i}}$ from the
  wedge product.
  We denote by $\H_t(F_1,\dots,F_k)_{\bm{d}}$ the $t$-th Koszul
  homology of $\Ko(F_1,\dots,F_k)_{\bm{d}}$, that is
  $\H_t(F_1,\dots,F_k)_{\bm{d}} := (\ker(\delta_{t}) / \im(\delta_{t+1}))_{\bm{d}}.$
\end{definition}

\noindent
The $0$-th Koszul homology is
$
\H_0(F_1,\dots,F_k) \cong 
(\KSh / \ideal{F_1,\dots,F_k}). $

\begin{definition}
  [Koszul and sparse regularity]
  \label{def:Koszul-reg}
  A sequence \linebreak $F_1,\dots,F_k \in \KSh$ is Koszul regular if for every
  $\bm{d} \in \N^r$ coordinate-wise greater than or equal to
  $\bm{D_k} := \sum_{i = 1}^k \bm d_i$, that is,
  $\bm d \geq \bm{D_k}$,
  and for every $t > 0$, the $t$-th Koszul
  homology vanishes at degree $\bm{d}$,
  that is $\H_t(F_1,\dots,F_k)_{\bm{d}} = 0$.
  We say that the sequence is (sparse) regular if $F_1,\dots,F_j$ is
  Koszul regular, for every $j \leq k$.
\end{definition}

\begin{observation}
  Note that Koszul regularity does not depend on the order of the
  polynomials, as (sparse) regularity does.
\end{observation}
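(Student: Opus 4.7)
My plan is to establish the two halves of the observation separately: first the positive claim that Koszul regularity is permutation-invariant, then the negative claim that sparse regularity is not.

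For the positive part, the key tool is the well-known fact that the Koszul complex is functorial (and symmetric) in its defining sequence. Given any permutation $\sigma$ of $[k]$, I would construct an explicit isomorphism of complexes
\[
\Phi_\sigma : \Ko(F_1,\dots,F_k) \longrightarrow \Ko(F_{\sigma(1)},\dots,F_{\sigma(k)})
\]
by sending the basis element $e_{I_1}\wedge\dots\wedge e_{I_t}$ (with $I_1<\dots<I_t$) to $\mathrm{sgn}(\sigma_I)\,e_{\sigma^{-1}(I_1)}\wedge\dots\wedge e_{\sigma^{-1}(I_t)}$ reordered into increasing indices, where $\sigma_I$ is the induced reordering. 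A direct check on the differential formula in \Cref{def:KoszulComplex}, using that the signs $(-1)^{i-1}$ in $\delta_t$ reflect exactly the Koszul sign rule of the wedge product, shows that $\Phi_\sigma$ commutes with the differentials. Because each homogeneous strand $(\Ko_t)_{\bm d}$ depends only on the \emph{multiset} of multidegrees $\{\bm d_1,\dots,\bm d_k\}$ and the differentials are graded of degree $0$, the map $\Phi_\sigma$ restricts to an isomorphism of each strand $\Ko(F_1,\dots,F_k)_{\bm d} \cong \Ko(F_{\sigma(1)},\dots,F_{\sigma(k)})_{\bm d}$. Passing to homology then gives $\H_t(F_1,\dots,F_k)_{\bm d} \cong \H_t(F_{\sigma(1)},\dots,F_{\sigma(k)})_{\bm d}$ for every $t$ and every $\bm d$.

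It remains to note that the threshold multidegree $\bm D_k = \sum_{i=1}^k \bm d_i$ appearing in \Cref{def:Koszul-reg} is itself invariant under $\sigma$, so the condition ``$\H_t(\cdot)_{\bm d} = 0$ for all $t>0$ and all $\bm d \geq \bm D_k$'' is transported verbatim. This proves that Koszul regularity is permutation-invariant.

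For the negative part, I would observe that sparse regularity requires Koszul regularity of every \emph{initial segment} $F_1,\dots,F_j$, and these initial segments change under permutation. A minimal illustration suffices: take any two nonzero homogeneous $F_1,F_2\in\KSh$ with $F_1\,F_2 = 0$ being false but such that $F_2$ is a zero-divisor modulo $\ideal{F_1}$ while $F_1$ is a nonzerodivisor (for instance, in a semigroup algebra containing $xy$, $x^2$ as homogeneous elements, take $F_1 = x y$ and $F_2 = x^2$ together with a further $F_3$ chosen so that the full pair is Koszul regular but the single element $F_2$ is a zero divisor after quotienting by the permuted initial segment). Then $(F_1,F_2)$ can be sparse regular while $(F_2,F_1)$ fails already at the second step, exhibiting the asymmetry. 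The main technical subtlety is just verifying that the signs in $\Phi_\sigma$ are consistent with the signs in $\delta_t$; everything else is formal.
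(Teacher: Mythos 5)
The paper provides no proof for this observation; it is stated as a bare remark, so there is nothing to directly compare against. Your positive half --- that Koszul regularity is permutation-invariant --- is correct and is the natural argument: the Koszul complex is functorial in the free module $R^k$, a permutation of coordinates is a graded automorphism of $R^k$, it extends to an automorphism of the exterior algebra commuting with the contraction differential, and the threshold $\bm{D}_k=\sum_i\bm{d}_i$ is a symmetric function of the degrees, so the vanishing condition of \cref{def:Koszul-reg} is transported verbatim. You could avoid all sign bookkeeping by presenting $\Phi_\sigma$ as the automorphism of $\Lambda^\bullet(R^k)$ induced by the coordinate permutation of $R^k$, but your explicit description is fine.

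The negative half has a genuine gap. Your reasoning ``sparse regularity is defined via initial segments, initial segments change under permutation, therefore sparse regularity is order-dependent'' is not a proof: it only shows the \emph{definition} is not manifestly symmetric, not that the \emph{property} actually fails to be symmetric. (Compare the classical setting, where the analogous initial-segment definition of a regular sequence of positive-degree homogeneous elements over a positively graded ring \emph{is} permutation-invariant despite referencing initial segments.) To substantiate the claim one needs an explicit sequence that is sparse regular in one order and not another. The example you sketch does not supply this: you introduce $F_1=xy$, $F_2=x^2$ and then an unspecified $F_3$ that plays no role in the pair under discussion; you phrase the obstruction in terms of classical zero-divisors, but sparse Koszul regularity only constrains homology in degrees $\geq\bm{D}_j$, so classical zero-divisor behavior at low degrees is permitted; and the clause ``$F_2$ is a zero divisor after quotienting by the permuted initial segment'' is incoherent for a one-element initial segment. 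As written, no counterexample is exhibited, and the negative half of the observation is left unjustified.
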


Kushnirenko's proof of the BKK bound
\cite[Thm.~2]{kushnirenko_newton_1976} follows from Koszul
regularity.
% %%
% He constructed the algebra $\KSh$,
% where, for each $i$,
% $\polyt_i = NP(f_i)$ and
% he considered the homogeneous polynomials
% $F_1,\dots,F_n \in \KSh$, where $F_i \in \KSh_{\bm e_i}$ and
% $\dehom(F_i) = f_i$.
% %%
% He proved that $(F_1,\dots,F_n)$ is Koszul regular and so the
% dimension of $(\KSh / \ideal{F_1,\dots,F_k})_{\bm{1}}$ is the
% alternating sum of the dimensions of the vector spaces of the complex
% $\Ko(f_1,\dots,f_n)_{\bm{1}}$, as in \Cref{eq:mixedVolume}.

% \todo[inline]{Talk about the conditions of the polytopes such that the theorem
% holds, that is, generically they are OK. Khovanskii has this result
% and I think is comming from Sturmfels first}

% \todo[inline]{Talk about defects as Sturmfelds defined. The defect of
%   the polytopes is non-negative if and only if the system have
%   solutions generically. I can find info in the text of Carlos or in
%   the paper DISCRETE INVARIANTS OF GENERICALLY INCONSISTENT SYSTEMS OF
%   LAURENT POLYNOMIALS, from LEONID MONIN, Sec.2.2}

\section{The algorithm}
\label{sec:algo}

To compute \Groebner basis over $\KS$ we work over $\KSh$. We follow
the classical approach of Lazard \cite{lazard_grobner-bases_1983}
adapted to the semigroup case, see also \cite{faugere_sparse_2014}; we
``linearize'' the problem by reducing the \Groebner basis computation
to a linear algebra problem.

\begin{lemma}
  \label{thm:mdegBigEnoughThenGB}
  Consider $F_1,\dots,F_m \in \KSh$ and a multigraded monomial order
  $<$ for $\KS$ (\Cref{def:mgraded-order}).
  There is a
  multidegree $\bm{d}$ and  homogeneous
  $\{G_1,\dots,G_t\} \subset \ideal{F_1, \dots, F_m} \cap \KSh_{\bm{d}}$
  such that $\{\dehom(G_1),\dots,\dehom(G_t)\}$ is a \Groebner
  basis of the ideal $\ideal{\dehom(F_1),\dots,\dehom(F_m)}$ \wrt the
  associated monomial order $<$ (\Cref{rem:associated-order}).
\end{lemma}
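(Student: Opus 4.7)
The plan is to reduce the statement to the existence of a finite \Groebner basis in $\KS$, and then lift it to a common multidegree in $\KSh$. First I would set $f_i := \dehom(F_i) \in \KS$ and consider the ideal $I := \ideal{f_1, \dots, f_m}$. Since the affine semigroup $\Spolyt$ is finitely generated, $\KS$ is Noetherian (by \cite[Thm.~7.7]{gilmer_commutative_1984}, as used after \Cref{def:dehomMorph}), so $I$ admits a finite \Groebner basis $\{g_1, \dots, g_t\}$ with respect to the associated monomial order $<$.

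Next I would lift each $g_j$ to a homogeneous preimage in the ideal upstairs. By \Cref{obs:embedInHomogeneous}, for each $j$ there is a multidegree $\bm{d}_j \in \N^r$ and a homogeneous $G'_j \in \ideal{F_1, \dots, F_m} \cap \KSh_{\bm{d}_j}$ with $\dehom(G'_j) = g_j$. These lifts need not live in a single multidegree, which is the only mild obstacle. To synchronize them, I would choose any $\bm{d} \geq \bm{d}_j$ for every $j$ (for instance $\bm{d} := \sum_j \bm{d}_j$, or the coordinate-wise maximum). Using that $\bm{0}$ is a vertex of $\polyt$, the monomials $\monhom{0}{\bm{e_i}} \in \KSh$ exist for every $i \in [r]$ (\Cref{obs:embedInHomogeneous}), so I can set
\[
G_j \;:=\; \Big(\prod_{i=1}^{r} \bigl(\monhom{0}{\bm{e_i}}\bigr)^{d_i - d_{j,i}}\Big) \cdot G'_j \,\in\, \KSh_{\bm{d}}.
\]
Each $G_j$ still lies in $\ideal{F_1, \dots, F_m}$ because it is an $\KSh$-multiple of $G'_j$, and it is homogeneous of multidegree exactly $\bm{d}$.

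Finally, since $\dehom$ is a ring homomorphism and $\dehom(\monhom{0}{\bm{e_i}}) = \mon{0} = 1$, we have $\dehom(G_j) = \dehom(G'_j) = g_j$ for every $j$. Therefore $\{\dehom(G_1), \dots, \dehom(G_t)\} = \{g_1, \dots, g_t\}$ is, by construction, a \Groebner basis of $\ideal{\dehom(F_1), \dots, \dehom(F_m)}$ with respect to $<$, and the $G_j$ all share the common multidegree $\bm{d}$. The only subtle point is the synchronization step, and it is handled by the existence of the ``dehomogenization--trivial'' monomials $\monhom{0}{\bm{e_i}}$; everything else is a straightforward combination of Noetherianity of $\KS$ and the lifting property in \Cref{obs:embedInHomogeneous}.
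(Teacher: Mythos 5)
Your proof is correct and follows essentially the same route as the paper: take a finite \Groebner basis of $\ideal{\dehom(F_1),\dots,\dehom(F_m)}$ (guaranteed since $\KS$ is Noetherian), lift each element to a homogeneous element of $\ideal{F_1,\dots,F_m}$ via \Cref{obs:embedInHomogeneous}, and multiply by a monomial of the form $\monhom{0}{\bm{d}-\bm{d}_j}$ to place all lifts in a common multidegree $\bm{d}$. Your product $\prod_i \bigl(\monhom{0}{\bm{e_i}}\bigr)^{d_i-d_{j,i}}$ is just an expanded way of writing that shift monomial, so the two arguments coincide.
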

% \todo{we can move the proof to the appendix, it follows from   \Cref{obs:embedInHomogeneous}}
\begin{proof}
  Let $g_1,\dots,g_t \in \KS$ be a \Groebner basis for the ideal
  $\ideal{\dehom(F_1),\dots,\dehom(F_m)}$ \wrt $<$.  By
  \Cref{obs:embedInHomogeneous}, there are polynomials
  $\bar{G}_1,\dots,\bar{G}_t \in \ideal{F_1,\dots,F_m}$ such that
  $\dehom(\bar{G}_i) = g_i$, for $i \in [t]$.
  Consider $\bm{d} \in \N^r$ such that
  $\bm{d} \geq \mdeg(\bar{G}_i)$, for $i \in [t]$.
  It suffices to consider  
  $G_i = \monhom{0}{\bm{d} \text{\unboldmath{$- \mdeg(\bar{G}_i)$}}} \, \bar{G}_i \in
  \KSh_{\bm{d}}$, for $i \in [t]$.
\end{proof}

When we know a multidegree $\bm{d}$ that satisfies
\Cref{thm:mdegBigEnoughThenGB}, we can compute the \Groebner basis
over $\KS$ 
using linear algebra. For this task we need to introduce the Macaulay matrix.

\begin{definition}[Macaulay matrix] \label{thm:defMacaulayMatrix}
  A Macaulay matrix $\Mac$ of degree $\bm{d} \in \N^r$ \wrt a
  monomial order $<$ is a matrix whose columns are indexed by all
  monomials $\monhom{\alpha}{d} \in \KSh_{\bm{d}}$ and the rows by
  polynomials in $\KSh_{\bm{d}}$.
  The indices of the columns are sorted in decreasing
  order \wrt $<$.
  The element of $\Mac$ whose row corresponds to a polynomial $F$
  and whose column corresponds to a monomial $\monhom{\alpha}{d}$
  is the coefficient of the monomial $\monhom{\alpha}{d}$ of $F$.
  %%
  % We define $\mathtt{Columns}(\mathcal{M})$ as the sequence of the
  % monomials of $\mathcal{M}$ in the order that they index the
  % columns.
  % 
  Let $\mathtt{Rows}(\mathcal{M})$ be the set of
  \emph{non-zero} polynomials that index the rows of $\mathcal{M}$
  and $\LM_{<}(\mathtt{Rows}(\mathcal{M}))$ be the set of leading
  monomials of these polynomials.
\end{definition}

\begin{remark} \label{rem:leadMonInMacMatrix} As the columns
  of the Macaulay matrices are sorted in decreasing order
  \wrt a monomial order, the leading monomial of a polynomial
  associated to a row corresponds to the index of the column of the
  first non-zero element in this row.
\end{remark}

\begin{definition}
  % [Gaussian elimination]
  \label{def:GMacaulayMatrix}
  Given a Macaulay matrix $\Mac$,
  let $\gMac$ be a new Macaulay matrix corresponding to the row
  echelon form of $\Mac$.  We can compute $\gMac$ by applying
  Gaussian elimination to $\Mac$.
\end{definition}

\begin{remark}
  \label{rem:Row-op-same-ideal}
  When we perform row operations (excluding multiplication by 0) to
  a Macaulay matrix, we do not change the ideal spanned by the
  polynomials corresponding to its rows.
\end{remark}

We use Macaulay matrices to compute a basis for the vector space
$\ideal{F_1,\dots,F_k}_{\bm{d}} := \ideal{F_1,\dots,F_k} \cap
\KSh_{\bm{d}}$ by Gaussian elimination.

   \begin{lemma}
     \label{thm:leadingMonOfMacMatrix}
     Consider homogeneous polynomials
     $F_1,\dots,F_k \in \KSh$ of multidegrees
     $\bm{d_1},\dots,\bm{d_k}$ and a multigraded monomial order $<$.
     Let $\Mac_{\bm{d}}^k$ be the Macaulay matrix whose rows
     correspond to the polynomials that we obtain
     by considering the 
     product of every monomial of multidegree ${\bm{d} - \bm{d_i}}$
     and every polynomial $F_i$;
     that is 
     \begin{equation}
       \label{eq:allMonvsPol}
       \!\!\!\!\!
       \rows(\Mac_{\bm{d}}^k) = 
       \Big\{
       \monhom{\alpha}{d - d_i} F_i : i \in [k], \monhom{\alpha}{d - d_i} \!\in\! \KSh_{\bm{d} - \bm{d_i}}
       \Big\}.
     \end{equation}
     Let $\gMac_{\bm{d}}^k$ be the row echelon form of the Macaulay matrix
     $\Mac_{\bm{d}}^k$ (\Cref{def:GMacaulayMatrix}).

     Then, the set
     of the leading monomials of the polynomials in
     $\rows(\gMac_{\bm{d}}^k)$ \wrt $<$
     is the set 
     of all the leading monomials of the ideal $\ideal{F_1,\dots,F_k}$ at
     degree $\bm{d}$.
   \end{lemma}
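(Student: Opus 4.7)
The plan is to prove the statement by showing two things: (i) the row space of $\Mac_{\bm{d}}^k$ equals the $\K$-vector space $\ideal{F_1,\dots,F_k}_{\bm{d}} := \ideal{F_1,\dots,F_k} \cap \KSh_{\bm{d}}$, and (ii) in any row echelon form of this matrix, the pivot columns (leading monomials of the non-zero rows) coincide with all leading monomials that appear among polynomials of $\ideal{F_1,\dots,F_k}_{\bm{d}}$. The conclusion then follows from \Cref{rem:Row-op-same-ideal} and \Cref{rem:leadMonInMacMatrix}.

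First, I would verify (i). Any $G \in \ideal{F_1,\dots,F_k}_{\bm{d}}$ can, by multigradedness of $\KSh$, be written as $G = \sum_{i=1}^k H_i F_i$ with each $H_i \in \KSh_{\bm{d} - \bm{d_i}}$ (if $\bm{d} - \bm{d_i}$ has a negative coordinate, take $H_i = 0$). Expanding each $H_i$ as a $\K$-linear combination of the monomials $\monhom{\alpha}{d - d_i} \in \KSh_{\bm{d} - \bm{d_i}}$ exhibits $G$ as a $\K$-linear combination of rows of $\Mac_{\bm{d}}^k$ as listed in~\eqref{eq:allMonvsPol}. Conversely every such row lies in $\ideal{F_1,\dots,F_k}_{\bm{d}}$, so the two $\K$-vector spaces agree. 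Since Gaussian elimination is a sequence of invertible row operations (\Cref{rem:Row-op-same-ideal}), the row space of $\gMac_{\bm{d}}^k$ equals that of $\Mac_{\bm{d}}^k$, hence also equals $\ideal{F_1,\dots,F_k}_{\bm{d}}$.

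For (ii), one direction is immediate: every non-zero row of $\gMac_{\bm{d}}^k$ lies in $\ideal{F_1,\dots,F_k}_{\bm{d}}$, so its leading monomial is a leading monomial of the ideal at degree $\bm{d}$. For the other direction, fix a non-zero $G \in \ideal{F_1,\dots,F_k}_{\bm{d}}$ and write $G = \sum_j c_j R_j$ as a $\K$-linear combination of the non-zero rows $R_j$ of $\gMac_{\bm{d}}^k$. The key feature of the row echelon form is that the leading monomials of the $R_j$ are pairwise distinct (one per pivot column, cf.\ \Cref{rem:leadMonInMacMatrix}). Let $R_{j^*}$ be the row with the $<$-largest leading monomial among those $R_j$ with $c_{j^*} \neq 0$. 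Since no other $R_j$ can contribute a non-zero coefficient on that column (by the echelon shape) nor anywhere strictly above it in the $<$-order (by maximality), the coefficient of $\LM_<(R_{j^*})$ in $G$ is exactly $c_{j^*} \neq 0$ and all monomials larger than $\LM_<(R_{j^*})$ have coefficient zero in $G$. Hence $\LM_<(G) = \LM_<(R_{j^*})$ lies in $\LM_<(\rows(\gMac_{\bm{d}}^k))$.

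The only place where one has to be slightly careful is the book-keeping in the last paragraph: one must ensure that no cancellations above the pivot $\LM_<(R_{j^*})$ can happen, which is exactly what the echelon property guarantees. Everything else is bookkeeping on multigraded decompositions and is routine given \Cref{obs:embedInHomogeneous,rem:Row-op-same-ideal,rem:leadMonInMacMatrix}.
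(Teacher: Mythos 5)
Your proof is correct and follows essentially the same route as the paper's: identify the row space of $\Mac_{\bm{d}}^k$ with $\ideal{F_1,\dots,F_k}_{\bm{d}}$ (using multigradedness to put the cofactors $H_i$ in the right graded piece) and then read off the leading monomials from the pivots of the row echelon form $\gMac_{\bm{d}}^k$. The paper phrases the second step more tersely—``the first non-zero entry of any vector in the row space of an echelon matrix sits in a pivot column''—whereas you spell out the cancellation bookkeeping; the substance is the same.
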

   \begin{proof}
     We prove that $\LM_<(\rows(\gMac_{\bm{d}}^k)) = \LM_<(\ideal{F_1, \dots, F_k}_{\bm{d}})$.
     First, we show that
     $\LM_<(\rows(\gMac_{\bm{d}}^k)) \supseteq \LM_<(\ideal{F_1,
       \dots, F_k}_{\bm{d}})$.
     Let $G$ be a polynomial in the vector space of polynomials of
     degree $\bm{d}$ in $\ideal{F_1,\dots,F_k}$. This vector space,
     $\ideal{F_1, \dots, F_k}_{\bm{d}}$,  is isomorphic to the row
     space of $\Mac_{\bm{d}}^k$, which, in turn, is the same as the
     row space of $\gMac_{\bm{d}}^k$, by \Cref{rem:Row-op-same-ideal}.
     Hence, there is a vector $v$ in the row space of
     $\gMac_{\bm{d}}^k$ that corresponds to $G$.
     Let $s$ be the index of the first non-zero element of $v$.  As
     $\gMac_{\bm{d}}^k$ is in row echelon form and $v$ belongs to its
     row space, there is a row of $\gMac_{\bm{d}}^k$ such that its
     first non-zero element is also at the $s$-th position.
     Let $F$ be the  polynomial that corresponds to this row.
     Finally, the leading monomials of the
     polynomials $F$ and $G$ are the same, that is
     $LM_<(G) = LM_<(F)$, by \Cref{rem:leadMonInMacMatrix}.

     The other direction is straightforward.
   \end{proof}

   \begin{theorem} \label{thm:correctnessGB}
     Consider the ideal generated by homogeneous polynomials
     $F_1,\dots,F_k \in \KSh$ of multidegrees
     $\bm{d_1},\dots,\bm{d_k}$. Consider a multigraded monomial order $<$ and a
     multidegree $\bm{d} \in \N^r$ that satisfy 
     \Cref{thm:mdegBigEnoughThenGB}.
     Let $\Mac_{\bm{d}}^k$ and $\gMac_{\bm{d}}^k$ be the Macaulay
     matrices of \Cref{thm:leadingMonOfMacMatrix}.

     Then, the set
     $\dehom(\mathtt{Rows}(\gMac_{\bm{d}}^k))$, see~\Cref{def:dehomMorph},
     contains a \Groebner
     basis of the ideal
     $\ideal{\dehom(F_1),\dots,\dehom(F_k)} \subset \KS$ \wrt $<$.
   \end{theorem}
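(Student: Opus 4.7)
The plan is to combine \Cref{thm:mdegBigEnoughThenGB}, which guarantees the existence of a \Groebner basis realized inside $\KSh_{\bm d}$, with \Cref{thm:leadingMonOfMacMatrix}, which exhausts all leading monomials of $\ideal{F_1,\dots,F_k}$ at multidegree $\bm d$ as rows of $\gMac_{\bm d}^k$. The bridge between the homogeneous world in $\KSh$ and the dehomogenized world in $\KS$ is \Cref{thm:monOrderCommuteDehom}, which lets one push leading-monomial information through $\dehom$ for a multigraded order.

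First I would invoke \Cref{thm:mdegBigEnoughThenGB} to obtain homogeneous polynomials $G_1,\dots,G_t \in \ideal{F_1,\dots,F_k} \cap \KSh_{\bm d}$ whose images under $\dehom$ form a \Groebner basis of $J := \ideal{\dehom(F_1),\dots,\dehom(F_k)}$ with respect to $<$. Since each $G_i$ lies in the ideal at multidegree $\bm d$, \Cref{thm:leadingMonOfMacMatrix} gives $\LM_<(G_i) \in \LM_<(\rows(\gMac_{\bm d}^k))$, so I can select rows $H_1,\dots,H_t \in \rows(\gMac_{\bm d}^k)$ satisfying $\LM_<(H_i) = \LM_<(G_i)$ for all $i$.

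Next, applying \Cref{thm:monOrderCommuteDehom} twice yields
$$\LM_<(\dehom(H_i)) = \dehom(\LM_<(H_i)) = \dehom(\LM_<(G_i)) = \LM_<(\dehom(G_i)).$$
Because every $H_i$ is a non-zero element of the row space of $\Mac_{\bm d}^k$, which by \Cref{rem:Row-op-same-ideal} sits inside $\ideal{F_1,\dots,F_k}_{\bm d}$, the polynomial $\dehom(H_i)$ belongs to $J$. Hence $\{\dehom(H_1),\dots,\dehom(H_t)\} \subseteq \dehom(\rows(\gMac_{\bm d}^k))$ consists of elements of $J$ whose leading monomials coincide with those of the known \Groebner basis $\dehom(G_1),\dots,\dehom(G_t)$, so they themselves generate $\LM_<(J)$ and form a \Groebner basis of $J$ contained in $\dehom(\rows(\gMac_{\bm d}^k))$.

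The one subtlety I would be careful about is confirming that $\dehom(H_i)$ is non-zero, so that $\LM_<(\dehom(H_i))$ is even defined; this is precisely the content of \Cref{thm:dehomSameDegIsInjective}, the injectivity of $\dehom$ restricted to a single multidegree. Everything else is bookkeeping once the three supporting lemmata and the observation on injectivity are in hand.
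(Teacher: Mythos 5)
Your proof is correct and takes essentially the same route as the paper: both use \Cref{thm:mdegBigEnoughThenGB} to pull a Gr\"obner basis into multidegree $\bm d$, \Cref{thm:leadingMonOfMacMatrix} to match leading monomials against rows of $\gMac_{\bm d}^k$, and \Cref{thm:monOrderCommuteDehom} to transfer the leading-monomial equalities through $\dehom$. Your version is slightly more explicit (you select concrete rows $H_1,\dots,H_t$ and note the non-vanishing of $\dehom(H_i)$ via \Cref{thm:dehomSameDegIsInjective}, a point the paper leaves implicit), but the underlying argument and supporting lemmata are identical.
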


   \begin{proof}
     Let $R := \mathtt{Rows}(\gMac_{\bm{d}}^k)$ be the set of
     polynomials indexing the rows of $\gMac_{\bm{d}}^k$.
     By \Cref{thm:leadingMonOfMacMatrix},
     for every
     $G \in \ideal{F_1,\dots,F_k}_{\bm{d}}$ there is a $F \in R$ such
     that $LM_<(G) = LM_<(F)$.
     As $<$ is a multigraded order, it holds
     $LM_<(\dehom(G)) = LM_<(\dehom(F))$
     (\Cref{thm:monOrderCommuteDehom}).
     As $\bm{d}$ satisfies \Cref{thm:mdegBigEnoughThenGB}
     for every 
     $h \in \ideal{\dehom(F_1),\dots,\dehom(F_k)}$ there is
     $G \in \ideal{F_1,\dots,F_k}_{\bm{d}}$ such that
     $LM_<(\dehom(G))$ divides $LM_<(h)$.
     Hence, there is an $F \in R$
     such that $LM_<(\dehom(F))$ divides $LM_<(h)$.
     Therefore, $R$ is a
     \Groebner basis for $\ideal{\dehom(F_1),\dots,\dehom(F_k)}$.
   \end{proof}

   Theorem~\ref{thm:correctnessGB} leads to an algorithm for computing
   \Groebner bases through a Macaulay matrix and Gaussian
   elimination.

   % \todo{Write the ALG in a better way}
   % \begin{itemize}
   % \item We want to compute a \gb for a given a system of polynomials
   %   $f_1,\dots,f_m \in \KS$ and a monomial order $\<$.
    
   % \item We embed the polynomials in
   %   the semigroup algebra $\KSh$ (\Cref{obs:embedInHomogeneous}).
   %   %%
   %   For each $f_i$, be consider $F_i \in \KSh_{\bm{d_i}}$, such that
   %   $\dehom(F_i) = f_i$. The $\bm{d_i}$ are not necessarily the same.
     
   % \item We consider a big enough multidegree $\bm{d}$ that satisfies 
   %   \Cref{thm:mdegBigEnoughThenGB} and we construct the Macaulay matrix
   %   $\Mac_{\bm{d}}$ of $F_1,\dots,F_m$ \wrt $<$ (\Cref{thm:correctnessGB}),
   %   where $<$ is a multigraded monomial
   %   order associated to the original order.

   % \item We perform
   %   Gaussian elimination to $\Mac_{\bm{d}}$ to obtain it row echelon form 
   %   $\gMac_{\bm{d}}$. By \Cref{thm:correctnessGB}, the
   %   dehomogenization of the rows of $\gMac_{\bm{d}}$ form a \Groebner
   %   basis of  $\ideal{f_1,\dots,f_m}$.
   % \end{itemize}

   \setlength{\textfloatsep}{0pt}% Remove \textfloatsep
   {\footnotesize
     \begin{algorithm}[t]
    \caption{\texttt{ComputeGB}  % $\texttt{M}_3\texttt{H}(\{f_1,\dots,f_{k}\}, \bm{d}, <)$
    }
     \begin{algorithmic}[1]
     \label{alg:lazardAlg}
     \REQUIRE $f_1,\dots,f_k \in \KS$, a monomial order $<$.

     \ENSURE \Groebner basis for $\ideal{f_1,\dots,f_k}$ \wrt $<$.

     \FORALL{$f_i$}
     
     \STATE Choose $F_i \in \KSh_{\bm{d_i}}$
     of multidegree $\bm{d_i}$ such that
     $\dehom(F_i) = f_i$.
     
     \ENDFOR

     \STATE Pick a big enough $\bm{d} \in \N^r$ that satisfies
     \Cref{thm:mdegBigEnoughThenGB}.

     \STATE $\Mac_{\bm{d}}^k \leftarrow$ \parbox[t]{180px}{Macaulay matrix of
     multidegree $\bm{d}$ \wrt a multigraded monomial order associated to $<$.}
     
     \FORALL{$F_i$}
     \FORALL{$\monhom{\alpha}{d - d_i} \!\in\! \KSh_{\bm{d} - \bm{d_i}}$}
       \STATE Add the polynomial  $\monhom{\alpha}{d - d_i} F_i$ as row to $\Mac_{\bm{d}}^k$.
     \ENDFOR
     \ENDFOR

     \STATE $\gMac^{k}_{\bm{d}} \leftarrow$ \texttt{GaussianElimination}($\Mac^k_{\bm{d}}$)      
             
     \RETURN $\dehom(\rows(\gMac^k_{\bm{d}}))$
     \end{algorithmic}
   \end{algorithm}
   }

\subsection{Exploiting the structure of Macaulay matrices (Koszul F5 criterion)}

If we consider all the polynomials of the set in
\Cref{eq:allMonvsPol}, then many of them are linearly dependent.
Hence, when we construct the Macaulay matrix of
\Cref{thm:correctnessGB} and perform Gaussian elimination, many of the
rows reduce to zero; this forces \Cref{alg:lazardAlg} to perform
unnecessary computations.  We will extend to F5 criterion
\cite{faugere_F5_2002} in our setting to avoid redundant computations.

\begin{theorem}[Koszul F5 criterion]
  \label{thm:f5Crit}
  Consider homogeneous polynomials $F_1,\dots,F_k \in \KSh$ of
  multidegrees $\bm{d_1},\dots,\bm{d_k}$ and a multidegree
  $\bm{d} \in \N^r$ such that $\bm{d} \geq \bm{d_k}$, that is
  coordinate-wise greater than or equal to $\bm{d_k}$.
  Let $\Mac_{\bm{d}}^{k-1}$ and $\Mac_{\bm{d} - \bm{d_k}}^{k-1}$ be
  the Macaulay matrices of degrees $\bm{d}$ and
  ${\bm{d} - \bm{d_{k}}}$, respectively, of the polynomials
  $F_1,\dots,F_{k-1}$ as in \Cref{thm:correctnessGB}, and let
  $\gMac_{\bm{d}}^{k-1}$ and $\gMac_{\bm{d} - \bm{d_{k}}}^{k-1}$ be
  their row echelon forms.
  
  For every
  $\monhom{\alpha}{\bm{d} - \bm{d_{k}}} \in
  \LM_{<}(\rows(\gMac_{\bm{d} - \bm{d_{k}}}^{k-1}))$, the
  polynomial $\monhom{\alpha}{\bm{d} - \bm{d_{k}}} F_k$ is
  a linear combination of the polynomials
  \begin{align*}
     \rows(\gMac_{\bm{d}}^{k-1})
    \cup
       \left\{
    \monhom{\beta}{d - d_k} \, F_k :
    \begin{array}{l}
      \monhom{\beta}{d - d_k} \in \KSh_{\bm{d} - \bm{d_k}} \text{ and } \\
      \monhom{\beta}{d - d_k} < \monhom{\alpha}{d - d_k}
    \end{array}
    \right\}.
  \end{align*}
\end{theorem}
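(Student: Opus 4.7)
The plan is to exploit \Cref{thm:leadingMonOfMacMatrix}: a monomial $\monhom{\alpha}{\bm{d} - \bm{d_k}}$ appearing in $\LM_{<}(\rows(\gMac_{\bm{d} - \bm{d_k}}^{k-1}))$ is the leading monomial of some element of $\ideal{F_1,\dots,F_{k-1}}_{\bm{d} - \bm{d_k}}$. More concretely, because $\gMac_{\bm{d} - \bm{d_k}}^{k-1}$ is in row echelon form, exactly one of its rows has $\monhom{\alpha}{\bm{d} - \bm{d_k}}$ as its first non-zero entry (see \Cref{rem:leadMonInMacMatrix}); after rescaling by the corresponding leading coefficient, I obtain a polynomial
\[
G \;=\; \monhom{\alpha}{\bm{d} - \bm{d_k}} \;+\; \sum_{\bm{\beta}} c_{\bm{\beta}} \, \monhom{\beta}{\bm{d} - \bm{d_k}} \;\in\; \ideal{F_1,\dots,F_{k-1}}_{\bm{d} - \bm{d_k}},
\]
where the sum ranges over exponents $\bm{\beta}$ with $\monhom{\beta}{\bm{d} - \bm{d_k}} < \monhom{\alpha}{\bm{d} - \bm{d_k}}$.

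Rearranging this identity to isolate $\monhom{\alpha}{\bm{d} - \bm{d_k}}$ and multiplying both sides by $F_k$ yields
\[
\monhom{\alpha}{\bm{d} - \bm{d_k}} F_k \;=\; G \cdot F_k \;-\; \sum_{\bm{\beta}} c_{\bm{\beta}} \, \monhom{\beta}{\bm{d} - \bm{d_k}} F_k .
\]
Each term $\monhom{\beta}{\bm{d} - \bm{d_k}} F_k$ on the right matches the second set displayed in the theorem, so the only remaining task is to show that $G \cdot F_k$ is a $\K$-linear combination of elements of $\rows(\gMac_{\bm{d}}^{k-1})$.

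For this last step, I would observe that $G \in \ideal{F_1,\dots,F_{k-1}}$ and $\mdeg(G \cdot F_k) = (\bm{d} - \bm{d_k}) + \bm{d_k} = \bm{d}$, hence $G \cdot F_k \in \ideal{F_1,\dots,F_{k-1}}_{\bm{d}}$. By the construction of $\Mac_{\bm{d}}^{k-1}$ in \Cref{eq:allMonvsPol}, its rows generate $\ideal{F_1,\dots,F_{k-1}}_{\bm{d}}$ as a $\K$-vector space (every element of that graded piece can be written as $\sum_i \sum_{\gamma} c_{i,\gamma} \monhom{\gamma}{\bm{d} - \bm{d_i}} F_i$), and by \Cref{rem:Row-op-same-ideal} Gaussian elimination preserves the row space, so $\rows(\gMac_{\bm{d}}^{k-1})$ spans the same vector space. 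Therefore $G \cdot F_k$ lies in the $\K$-span of $\rows(\gMac_{\bm{d}}^{k-1})$, which finishes the argument.

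The argument does not really face a deep obstacle; the only delicate bookkeeping point is that the inequality $\monhom{\beta}{\bm{d} - \bm{d_k}} < \monhom{\alpha}{\bm{d} - \bm{d_k}}$ refers to the monomials themselves before multiplication by $F_k$, so one never needs to invoke the compatibility of $<$ with multiplication on the monomials that arise inside $\monhom{\beta}{\bm{d} - \bm{d_k}} F_k$, and the conclusion matches the set written in the theorem verbatim.
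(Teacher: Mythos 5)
Your proof is correct and follows essentially the same route as the paper's: you extract from the row of $\gMac_{\bm{d}-\bm{d_k}}^{k-1}$ with leading monomial $\monhom{\alpha}{\bm{d}-\bm{d_k}}$ a polynomial in $\ideal{F_1,\dots,F_{k-1}}_{\bm{d}-\bm{d_k}}$, multiply by $F_k$, and split the result into a piece in $\ideal{F_1,\dots,F_{k-1}}_{\bm{d}}$ (spanned by $\rows(\gMac_{\bm{d}}^{k-1})$) plus smaller multiples of $F_k$. You merely spell out more explicitly than the paper that $\rows(\gMac_{\bm{d}}^{k-1})$ spans $\ideal{F_1,\dots,F_{k-1}}_{\bm{d}}$, which the paper leaves implicit.
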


\begin{proof}
  If
  $\monhom{\alpha}{\bm{d} - \bm{d_{k}}} \in
  \LM_{<}(\rows(\gMac_{\bm{d} - \bm{d_{k}}}^{k-1}))$, then there is
  $G \in \KSh_{\bm{d} - \bm{d_{k}}}$ such that
  $\monhom{\alpha}{\bm{d} - \bm{d_{k}}} + G \in
  \ideal{F_1,\dots,F_{k-1}}_{\bm{d} - \bm{d_{k}}}$ and
  $\monhom{\alpha}{\bm{d} - \bm{d_{k}}} > \LM_{<}(G)$. So, there are
  homogeneous $H_1,\dots,H_{k-1} \in \KSh$ such that
  $\monhom{\alpha}{\bm{d} - \bm{d_{k}}} + G = \sum_i H_i F_i$.  The
  proof follows by noticing that
  $\monhom{\alpha}{\bm{d} - \bm{d_{k}}} F_k =
  \sum\nolimits_{i=1}^{k-1} (F_k \, H_i) F_i - G \, F_k.$
   \end{proof}

   In the following, $\Mac^k_{\bm{d}}$ is not the Macaulay matrix of
   \ref{thm:leadingMonOfMacMatrix}.  It contains less rows because of
   the Koszul F5 criterion.  However, both matrices have the same row
   space, so we use the same name.
   
   \begin{corollary}
     \label{thm:matrixWithCritF5generatesTheSame}
     Using the notation of \Cref{thm:f5Crit}, let
     $\Mac^k_{\bm{d}}$ be a Macaulay matrix of degree $\bm{d}$ wrt
     the order $<$ whose rows are
     {\small
     \begin{align*}
       \rows(\gMac_{\bm{d}}^{k-1})
       \cup
       \left\{
       \monhom{\beta}{d - d_k} \, F_k :
       \!
       \begin{array}{l}
         \monhom{\beta}{d - d_k} \in \KSh_{\bm{d}-\bm d_k} \text{ and }\\
         \monhom{\beta}{d - d_k} \not\in \LM_{<}(\rows(\gMac^{k-1}_{\bm{d} - \bm d_k}))
       \end{array}
       \right\}
     \end{align*}
   }
     \noindent
     The row space of $\Mac^k_{\bm{d}}$ and the  Macaulay matrix of
     \Cref{thm:leadingMonOfMacMatrix} are equal.
   \end{corollary}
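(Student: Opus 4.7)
The plan is to establish the equality of row spaces by two inclusions; one is essentially by construction, and the other is a straightforward induction driven by \Cref{thm:f5Crit}.

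First I would note the easy direction: every row of $\Mac^k_{\bm{d}}$ lies in the row span of the Macaulay matrix of \Cref{thm:leadingMonOfMacMatrix}. Indeed, by \Cref{rem:Row-op-same-ideal} the rows $\rows(\gMac_{\bm{d}}^{k-1})$ span the same vector space as $\{\monhom{\alpha}{d - d_i} F_i : i < k,\, \monhom{\alpha}{d - d_i} \in \KSh_{\bm{d} - \bm{d_i}}\}$, which is a subset of the rows of the full Macaulay matrix; and each remaining row $\monhom{\beta}{d - d_k} F_k$ in $\Mac^k_{\bm{d}}$ is literally one of the $F_k$-rows of the full matrix.

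For the reverse inclusion, the only thing to check is that each of the \emph{omitted} rows, namely $\monhom{\alpha}{d - d_k} F_k$ with $\monhom{\alpha}{d - d_k} \in \LM_{<}(\rows(\gMac_{\bm{d} - \bm{d_k}}^{k-1}))$, lies in the row span of $\Mac^k_{\bm{d}}$. I would prove this by induction on $\monhom{\alpha}{d - d_k}$ \wrt the monomial order $<$, restricted to the finite set of such leading monomials (finiteness is ensured because $\KSh_{\bm{d} - \bm{d_k}}$ is finite-dimensional, since the underlying cone is pointed). For the inductive step, \Cref{thm:f5Crit} expresses $\monhom{\alpha}{d - d_k} F_k$ as a linear combination of elements of $\rows(\gMac_{\bm{d}}^{k-1})$, which are rows of $\Mac^k_{\bm{d}}$, together with products $\monhom{\beta}{d - d_k} F_k$ satisfying $\monhom{\beta}{d - d_k} < \monhom{\alpha}{d - d_k}$. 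Each such $\monhom{\beta}{d - d_k}$ falls into one of two cases: if it does not belong to $\LM_{<}(\rows(\gMac_{\bm{d} - \bm{d_k}}^{k-1}))$, then $\monhom{\beta}{d - d_k} F_k$ is by definition already a row of $\Mac^k_{\bm{d}}$; otherwise, being strictly smaller than $\monhom{\alpha}{d - d_k}$, it is covered by the inductive hypothesis.

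The base case is the $<$-minimal element of $\LM_{<}(\rows(\gMac_{\bm{d} - \bm{d_k}}^{k-1}))$: for it, every $\monhom{\beta}{d - d_k} < \monhom{\alpha}{d - d_k}$ is automatically outside the leading-monomial set, so \Cref{thm:f5Crit} immediately writes $\monhom{\alpha}{d - d_k} F_k$ as a combination of rows of $\Mac^k_{\bm{d}}$. There is no real obstacle here; the Koszul F5 criterion does all the work, and the only subtlety is ordering the argument so that every auxiliary product $\monhom{\beta}{d - d_k} F_k$ appearing on the right-hand side of \Cref{thm:f5Crit} has already been shown to lie in the row span of $\Mac^k_{\bm{d}}$, which is guaranteed by performing the induction along the total order $<$ on a finite set.
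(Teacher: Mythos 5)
Your proof is correct and fills in precisely the argument the paper leaves implicit: the corollary is stated without proof as a direct consequence of \Cref{thm:f5Crit}, and your induction along the well-ordering of $\KSh_{\bm{d}-\bm{d_k}}$ (restricted to the leading-monomial set) is exactly the right way to unwind the recursion in that criterion, since each application of \Cref{thm:f5Crit} only introduces auxiliary products $\monhom{\beta}{d-d_k}F_k$ with strictly smaller multiplier. The easy direction via \Cref{rem:Row-op-same-ideal} is also handled correctly.
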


   {\footnotesize
   \begin{algorithm}[t]
     \caption{\texttt{ReduceMacaulay}}
     \begin{algorithmic}[1]
     \label{alg:f5Alg}
     \REQUIRE Homogeneous $F_1,\dots,F_k \in \KSh$ of multidegree
     $\bm{d}_1,\dots,\bm{d}_k$, a multidegree $\bm{d}$, and  a
     monomial order $<$.
     \ENSURE The Macaulay matrix of
     $\ideal{F_1,\dots,F_k}_{\bm{d}} \in \KSh$ \wrt $<$ in row echelon
     form.

     \STATE $\Mac^k_{\bm{d}} \leftarrow$ \parbox[t]{175px}{Macaulay
       matrix with columns indexed by the monomials in
       $\KSh_{\bm{d}}$ in decreasing order wrt~$<$}

     \IF{$k > 1$}

     \STATE $\gMac^{k-1}_{\bm{d}} \leftarrow$
     \parbox[t]{175px}{$\texttt{ReduceMacaulay}(\{F_1,\dots,F_{k-1}\}, \bm{d}, <)$}

     \STATE $\gMac^{k-1}_{\bm{d} - \bm d_k} \leftarrow$
     \parbox[t]{176px}{$\texttt{ReduceMacaulay}(\{F_1,\dots,F_{k-1}\},
       \bm{d} - \bm d_k, <)$}

     \FOR{$F \in \rows(\gMac^{k-1}_{\bm{d}})$}

     \STATE Add the polynomial $F$ as a row to $\Mac^k_{\bm{d}}$.

     \ENDFOR

     \ENDIF
     
     \FOR{$\monhom{\alpha}{d - \bm d_k} \in \KSh_{\bm{d}-\bm d_k}
       \setminus \LM_{<}(\rows(\gMac^{k-1}_{\bm{d} - \bm d_k}))
       $}
% 
%%     \IF{$\monhom{\alpha}{d - \bm d_k} \not\in \LM_{<}(\gMac^{k-1}_{\bm{d} - \bm d_k})$}
           \STATE Add the polynomial $\monhom{\alpha}{d - \bm d_k} F_k$ as a row to $\Mac^k_{\bm{d}}$.
%%        \ENDIF
% 
        \ENDFOR

     \STATE $\gMac^{k}_{\bm{d}} \leftarrow$ \texttt{GaussianElimination}($\Mac^k_{\bm{d}}$)      
        
     \RETURN $\gMac^k_{\bm{d}}$
     \end{algorithmic}
   \end{algorithm}
   }

   The correctness of \Cref{alg:f5Alg} follows from \Cref{thm:f5Crit}.
   
   % \begin{theorem}
   %   Algorithm \texttt{ReduceMacaulay} (\Cref{alg:f5Alg}) is correct.
   % \end{theorem}
   % \begin{proof}
   %   Follows from \Cref{thm:f5Crit}.
   % \end{proof}

   \begin{lemma} \label{thm:H1vanishLastPolInIdeal}
     If $\H_1(F_1,\dots,F_k)_{\bm{d}} = 0$ and there is a syzygy
     $\sum_i G_i F_i = 0$ such that
     $G_i \in \KSh_{\bm{d} - \bm{d}_i}$, then 
     $G_k \in \ideal{F_1,\dots,F_{k-1}}_{\bm{d} - \bm d_k}$.
   \end{lemma}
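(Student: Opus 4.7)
The plan is to interpret the syzygy $\sum_i G_i F_i = 0$ as a cycle of the Koszul complex in degree $\bm{d}$ and use the vanishing of $\H_1(F_1,\dots,F_k)_{\bm{d}}$ to rewrite it as a boundary, from which the claim on $G_k$ is immediate.

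First, I would package the syzygy into the element
$\eta := \sum_{i=1}^{k} G_i \otimes e_i \in (\Ko_1)_{\bm{d}}$.
By \Cref{def:KoszulComplex}, the map $\delta_1$ sends $G_i \otimes e_i$ to $F_i\, G_i$, so $\delta_1(\eta) = \sum_i F_i G_i = 0$. Hence $\eta \in \ker(\delta_1)_{\bm{d}}$. Note that $\eta$ lives in the multidegree $\bm{d}$ strand because each $G_i \in \KSh_{\bm{d}-\bm d_i}$ exactly matches the multigrading of $(\Ko_1)_{\bm{d}}$.

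Next, I would invoke the hypothesis $\H_1(F_1,\dots,F_k)_{\bm d} = (\ker \delta_1/\im \delta_2)_{\bm d} = 0$ to conclude that $\eta \in \im(\delta_2)_{\bm{d}}$. So there exist homogeneous $h_{ij} \in \KSh_{\bm{d} - \bm d_i - \bm d_j}$ for $1 \le i < j \le k$ with
\[
\eta = \delta_2\Big(\sum_{i<j} h_{ij}\otimes (e_i \wedge e_j)\Big)
= \sum_{i<j}\bigl(F_i\, h_{ij}\otimes e_j - F_j\, h_{ij}\otimes e_i\bigr).
\]
Reading the coefficient of $e_k$ on both sides, I obtain
\[
G_k \;=\; \sum_{i < k} F_i\, h_{ik},
\]
because $k$ is the largest index and so there is no term of the form $(k,j)$ with $j > k$ contributing to $e_k$. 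This expresses $G_k$ as an element of the ideal $\ideal{F_1,\dots,F_{k-1}}$, and the multidegree is automatically $\bm d - \bm d_k$, yielding $G_k \in \ideal{F_1,\dots,F_{k-1}}_{\bm d - \bm d_k}$.

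There is no real obstacle here beyond bookkeeping: the sign in the differential $\delta_2$ and the observation that $k$ being the last index kills the ``$-F_j h_{kj}$ with $j>k$'' contributions to the coefficient of $e_k$. The only subtle point worth stating explicitly in the write-up is why $\eta$ indeed sits in the degree-$\bm d$ strand, which follows from the multidegree assumption $G_i \in \KSh_{\bm{d} - \bm d_i}$ matching the definition of $(\Ko_1)_{\bm{d}}$ in \Cref{def:KoszulComplex}.
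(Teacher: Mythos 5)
Your proof is correct and follows essentially the same route as the paper: view the syzygy as an element of $\ker(\delta_1)_{\bm d}$, use $\H_1(F_1,\dots,F_k)_{\bm d}=0$ to write it as $\delta_2$ of some $(k-1)$-chain, and read off the coefficient of $e_k$. If anything, your write-up is slightly more careful about matching the sign convention of $\delta_2$ given in \Cref{def:KoszulComplex}, whereas the paper's displayed formula for the image of $\delta_2$ carries an extraneous overall sign, though this is immaterial for the membership conclusion.
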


   \begin{proof}
     We consider the Koszul complex $\Ko(F_1,\dots,F_k)$
     (\Cref{def:KoszulComplex}).  As
     $\sum_i G_i F_i = \delta_1(G_1,\dots,G_k)$,
     the vector of polynomials 
     $(G_1,\dots,G_k)$ belongs to the Kernel of $\delta_1$.  As
     $\H_1(F_1,\dots,F_k)_{\bm{d}}$ vanishes,  the kernel of
     $\delta_1$ is generated by the image of $\delta_2$. The
     latter map is 
     $$(H_{1,2},\dots,H_{k-1,k}) \mapsto \sum\nolimits_{1 \leq i < j \leq k} H_{i,j} (F_j \bm{e}_i - F_i \bm{e}_j),$$
     where $\bm e_i$ and $\bm e_j$ are canonical basis of $\R^k$.
     Hence, there are homogeneous polynomials  $(H_{1,2},\dots,H_{k-1,k})$ such that
     $$(G_1,\dots,G_k) = \sum\nolimits_{1 \leq i < j \leq k} H_{i,j} (F_j \bm e_i - F_i \bm e_j ).$$
     Thus, $G_k = \sum_{i = 1}^{k-1} H_{i,k} F_i$ and so
     $G_k \in \ideal{F_1,\dots,F_{k-1}}_{\bm{d} - \bm d_k}$.
   \end{proof}

   \noindent
   The next lemma shows that we avoid all redundant computations,
   that is all the rows reducing to zero during Gaussian elimination.
   \begin{lemma}
     \label{thm:regAssumpF5noSyz}
     If $\H_1(F_1,\dots,F_k)_{\bm{d}} = 0$, then all the rows of the
     matrix $\Mac^k_{\bm{d}}$ in \Cref{alg:f5Alg} are linearly
     independent.
   \end{lemma}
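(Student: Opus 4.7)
The plan is to proceed by contradiction: assume a non-trivial $\K$-linear dependence among the rows of $\Mac^k_{\bm d}$, split it by type of row, and use the hypothesis $\H_1(F_1,\dots,F_k)_{\bm d}=0$ via \Cref{thm:H1vanishLastPolInIdeal} to derive a contradiction with the choice of added $F_k$-multiples.

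First I would write any putative dependence as $G + H\,F_k = 0$, where $G \in \ideal{F_1,\dots,F_{k-1}}_{\bm d}$ is a $\K$-linear combination of the rows of $\gMac^{k-1}_{\bm d}$ (these rows lie in this ideal by \Cref{rem:Row-op-same-ideal} applied inductively), and $H \in \KSh_{\bm d - \bm d_k}$ is a $\K$-linear combination of the monomials $\monhom{\alpha}{d-\bm d_k} \notin \LM_{<}(\rows(\gMac^{k-1}_{\bm d - \bm d_k}))$ that indexed the second batch of rows in step~9 of \Cref{alg:f5Alg}. Expanding $G = \sum_{i<k} G_i\,F_i$ with homogeneous $G_i \in \KSh_{\bm d - \bm d_i}$ (which is possible because the ideal is homogeneous) and setting $G_k := H$ yields a homogeneous syzygy $\sum_{i=1}^{k} G_i\,F_i = 0$ of multidegree $\bm d$.

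Now the hypothesis $\H_1(F_1,\dots,F_k)_{\bm d} = 0$ enters through \Cref{thm:H1vanishLastPolInIdeal}, which forces $H = G_k \in \ideal{F_1,\dots,F_{k-1}}_{\bm d - \bm d_k}$. Combined with \Cref{thm:leadingMonOfMacMatrix} applied at multidegree $\bm d - \bm d_k$, if $H \neq 0$ then $\LM_{<}(H) \in \LM_{<}(\rows(\gMac^{k-1}_{\bm d - \bm d_k}))$. However, $\LM_{<}(H)$ is by definition one of the monomials $\monhom{\alpha}{d - \bm d_k}$ appearing in $H$, and every such monomial was explicitly chosen to lie outside $\LM_{<}(\rows(\gMac^{k-1}_{\bm d - \bm d_k}))$. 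This contradiction forces $H = 0$, and then linear independence of distinct monomials gives that every coefficient multiplying an $F_k$-row vanishes.

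With $H = 0$ we also have $G = 0$, and to conclude it remains to argue that the rows of $\gMac^{k-1}_{\bm d}$ are themselves linearly independent, which is immediate because $\gMac^{k-1}_{\bm d}$ is in row echelon form so its non-zero rows have pairwise distinct leading monomials (\Cref{rem:leadMonInMacMatrix}). The main subtle point is the passage from a dependence of rows to a \emph{homogeneous} syzygy of the correct multidegree so that \Cref{thm:H1vanishLastPolInIdeal} applies verbatim; once that is set up, the rest is a clean compatibility check between the construction of $\Mac^k_{\bm d}$ (which deliberately excludes monomials in $\LM_{<}(\rows(\gMac^{k-1}_{\bm d - \bm d_k}))$) and the conclusion of that lemma.
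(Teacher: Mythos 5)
Your proof is correct and follows essentially the same route as the paper: split the putative dependence into a $\gMac^{k-1}_{\bm d}$-part $G$ and an $F_k$-part $H F_k$, lift to a homogeneous syzygy $(G_1,\dots,G_{k-1},H)$, apply \Cref{thm:H1vanishLastPolInIdeal}, and contradict the exclusion of leading monomials in step~9 of \Cref{alg:f5Alg}. The only small omission is that to know $\LM_{<}(\rows(\gMac^{k-1}_{\bm d - \bm d_k}))$ equals the leading monomials of $\ideal{F_1,\dots,F_{k-1}}_{\bm d - \bm d_k}$ you need \Cref{thm:matrixWithCritF5generatesTheSame} in addition to \Cref{thm:leadingMonOfMacMatrix}, since the recursively built $\gMac^{k-1}_{\bm d - \bm d_k}$ is the F5-reduced matrix rather than the full one.
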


   \begin{proof}
     By construction, the rows of $\Mac^k_{\bm{d}}$ corresponding to
     $\gMac^{k-1}_{\bm{d}}$ are linearly independent because the
     matrix is in row echelon form. Hence, if there are rows that are
     not linearly independent, then at least one of them corresponds
     to a polynomial of the form  $\monhom{\alpha}{d - \bm d_k} F_k$.
     The right action of the Macaulay matrix $\Mac^k_{\bm{d}}$
     represents a map equivalent to the map $\delta_1$ from the strand
     of Koszul complex \linebreak $\Ko(F_1,\dots,F_k)_{\bm{d}}$.
     % %%
     % The difference between this maps is that we restricted its domain
     % according to the F5 criterion (\Cref{thm:f5Crit}) and considered
     % linear combinations of its generators related to
     % $F_1,\dots,F_{k-1}$.  \todo{explain better?}
     %% 
     Hence, if some of the rows of the matrix are linearly dependent,
     then there is an element in the kernel of
     $\delta_1$. That is, there are
     $G_i \in \KSh_{\bm{d} - \bm d_i}$ such that
     \begin{itemize}[leftmargin=*]
     \item $\sum_{i = 1}^{k-1} G_i F_i$ belongs to the linear span of  $\rows(\gMac^{k-1}_{\bm{d}})$,
     \item the monomials of $G_k$ do not belong to
       $\LM_{<}(\rows(\gMac^{k-1}_{\bm{d} - \bm d_k}))$, and
     \item $\sum_{i = 1}^k G_i F_i = 0$.
     \end{itemize}

     By \Cref{thm:H1vanishLastPolInIdeal},
     $G_k \in \ideal{F_1,\dots,F_{k-1}}_{\bm{d} - \bm d_k}$.
     By
     \Cref{thm:leadingMonOfMacMatrix} and
     \Cref{thm:matrixWithCritF5generatesTheSame},
     the leading
     monomials of $\rows(\gMac^{k-1}_{\bm{d} - \bm d_k})$
     and the ideal $\ideal{F_1,\dots,F_{k-1}}$
     at degree $\bm{d} - \bm{d_k}$
     are the same.
     Hence, we reach a  contradiction
     because we have assumed that the leading monomial of $G_k$ does not
     belong to $\LM_{<}(\rows(\gMac^{k-1}_{\bm{d} - \bm d_k}))$.
   \end{proof}

   \begin{corollary}
     \label{thm:regSkipsEverySyz}
     If $F_1,\dots,F_k$ is a sparse regular polynomial system
     (\Cref{def:Koszul-reg}) and $\bm{d} \in \N^r$ is such that
     $\bm{d} \geq (\sum_i \bm d_i)$, then \linebreak
     {$\texttt{ReduceMacaulay}(F_1, \dots, F_k, \bm{d}, <)$}
     only considers matrices with linearly
     independent rows and avoids all redundant computations.
   \end{corollary}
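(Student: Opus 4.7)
My plan is to proceed by induction on $k$, the length of the sparse regular sequence. The goal is to show that every Macaulay matrix $\Mac^j_{\bm{e}}$ constructed inside the recursive calls of $\texttt{ReduceMacaulay}$ has linearly independent rows; once that is established, the claim ``avoids redundant computations'' follows because each row contributes a new leading monomial under Gaussian elimination, so no row reduces to zero.

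For the base case $k = 1$, the matrix $\Mac^1_{\bm{d}}$ has rows $\monhom{\alpha}{d - d_1} F_1$ for the full set of monomials in $\KSh_{\bm{d} - \bm{d}_1}$ (no rows are filtered out because there is no previous $\gMac^{0}$). These polynomials all have distinct leading monomials, namely the products $\monhom{\alpha}{d - d_1} \cdot \LM_<(F_1)$, so they are linearly independent. For the inductive step, assume the result for every sparse regular sequence of length strictly less than $k$. The call $\texttt{ReduceMacaulay}(F_1,\dots,F_k,\bm{d},<)$ triggers two recursive invocations, namely $\texttt{ReduceMacaulay}(F_1,\dots,F_{k-1},\bm{d},<)$ and $\texttt{ReduceMacaulay}(F_1,\dots,F_{k-1},\bm{d}-\bm{d}_k,<)$. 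By \Cref{def:Koszul-reg}, the prefix $F_1,\dots,F_{k-1}$ is itself sparse regular. Moreover, the hypothesis $\bm{d} \geq \sum_{i=1}^{k} \bm{d}_i$ implies both $\bm{d} \geq \sum_{i=1}^{k-1} \bm{d}_i$ and $\bm{d} - \bm{d}_k \geq \sum_{i=1}^{k-1} \bm{d}_i$, so the inductive hypothesis applies to each recursive call and guarantees that every matrix constructed inside them has linearly independent rows.

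It remains to verify the top-level matrix $\Mac^k_{\bm{d}}$ itself. Since $F_1,\dots,F_k$ is sparse regular and $\bm{d} \geq \sum_{i=1}^{k} \bm{d}_i$, Koszul regularity in positive homological degree gives in particular $\H_1(F_1,\dots,F_k)_{\bm{d}} = 0$. Therefore \Cref{thm:regAssumpF5noSyz} applies directly to $\Mac^k_{\bm{d}}$ (whose rows consist of $\rows(\gMac^{k-1}_{\bm{d}})$ together with the multiples $\monhom{\alpha}{d - d_k} F_k$ for $\monhom{\alpha}{d - d_k} \notin \LM_<(\rows(\gMac^{k-1}_{\bm{d}-\bm{d}_k}))$), yielding linear independence. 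Combining this with the inductive conclusion for the two subcalls, every matrix appearing throughout the execution has linearly independent rows.

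The only subtlety is checking that the Koszul regularity hypothesis propagates correctly through the recursion: each recursive call at multidegree $\bm{e}$ requires that the corresponding $\H_1$ of the appropriate prefix vanishes at $\bm{e}$. This is precisely the content of sparse regularity combined with the bound $\bm{e} \geq \sum_{i} \bm{d}_i$ over the indices involved, which I establish in the induction step above. No other part of the argument poses a real obstacle; the core technical input is \Cref{thm:regAssumpF5noSyz}, which handles the single-level statement, and the induction merely organises its repeated application to the tree of recursive calls.
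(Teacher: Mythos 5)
Your proof is correct and matches the argument the paper leaves implicit (no explicit proof is given for the corollary; it is stated as a direct consequence of Lemma~\ref{thm:regAssumpF5noSyz}). The induction on $k$ that you spell out is exactly the right bookkeeping: sparse regularity of $(F_1,\dots,F_k)$ passes to every prefix by Definition~\ref{def:Koszul-reg}, the degree constraint $\bm{d}\geq\sum_i\bm{d}_i$ propagates to both recursive calls, and at each node the vanishing of $\H_1$ in the appropriate degree makes Lemma~\ref{thm:regAssumpF5noSyz} applicable. Your explicit treatment of the base case $k=1$ (distinct leading monomials $\monhom{\alpha}{d-d_1}\cdot\LM_<(F_1)$, hence independence) is a clean shortcut, though it could equally be folded into the general argument since $\H_1(F_1)_{\bm{d}}=0$ holds trivially because $\KSh$ is a domain.
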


   To benefit from the Koszul F5 criterion and compute with smaller
   matrices during the \Groebner basis computation we should replace
   Lines 4 -- 8 in \Cref{alg:lazardAlg} by  $\texttt{ReduceMacaulay}(F_1, \dots, F_k, \bm{d}, <)$
   (\Cref{alg:f5Alg}).

\section{\Groebner bases for $0$-dim systems}
\label{sec:solve-I-sat}

We introduce an algorithm, that
takes as input a $0$-dimensional ideal $I$ and computes a \Groebner
basis for the ideal $\left( I : \ideal{\prod_j x_j}^\infty \right)$.
The latter corresponds to the ideal associated to the intersection of
the torus $\T$ with the variety defined by $I$.

Let $f_1,\dots,f_n \in \Kx$ be a square $0$-dimensional system.
%and let $Q_1, \dots, Q_n$ are the corresponding Newton polytopes.
%%
First we embed each $f_i$ in $\K[\Z^n]$.
We multiply
each polynomial by an appropriate monomial,
$\mon{\beta_i} \in \K[\Z^n]$, so that $\bm{0}$ is a vertex of
each new polynomial, as well as, a vertex of their 
Minkowski sum.
Let the Newton polytopes be $\polyt_i = \NP(\mon{\beta_i} f_i)$, for $1 \leq i \leq n$,
Let $\polyt_0$ be the standard $n$-simplex; it is the Newton polytope of 
$\NP(1 + \sum_i {x_i})$. We consider the algebras $\KS$ and $\KSh$
associated to the polytopes $\polyt_0,\dots,\polyt_n$ and the embedding
$\mon{\beta_1} f_1,\dots, \mon{\beta_1} f_n \in \KS$. For each $i$, we
consider $F_i \in \KSh_{\bm{e}_i}$ such that
$\dehom(F_i) = \mon{\beta_i} f_i \in \KS$.

%$
\begin{assumption}
  \label{ass:no-sol-at-inf}

  Using the previous notation, let $X$ be the projective toric variety associated to
  $\polyt_0 + \dots + \polyt_n$ (see also the discussion on toric
  varieties at \Cref{sec:reg-inf-sol}).
  Assume that the system $(f_1,\dots,f_n)$ has no
  solutions at infinity \wrt  $X$ (\Cref{def:sol-inf}).
  Further, assume that the system $(f_0, f_1,\dots,f_n)$,
  where $f_0$ is generic linear polynomial, has no solutions over
  $\T$.
\end{assumption}

%\todo[inline]{Eventually we should replace this lemma by the
%  Castelnouvo-Mumford regularity}

\begin{lemma}[{\cite[Thm.~3.a]{massri_solving_2016}}]
  \label{thm:KszReg0dimSys}
  Under \Cref{ass:no-sol-at-inf},
  for every $\bm{d} \in \N^{n+1}$ such that
  $\bm{d} \geq \sum_{i > 0} \bm{e}_i$, it holds
  $\H_0(F_1,\dots,F_n)_{\bm{d}} \cong
  \K[\Z^n] / \ideal{f_1,\dots,f_n}.$
\end{lemma}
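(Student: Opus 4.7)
The plan is to exhibit the natural map $\phi_{\bm d}: \H_0(F_1,\ldots,F_n)_{\bm d} \longrightarrow \K[\Z^n]/\ideal{f_1,\ldots,f_n}$ and to verify it is an isomorphism by matching dimensions and then checking surjectivity. Recall that $\H_0(F_1,\ldots,F_n) \cong \KSh/\ideal{F_1,\ldots,F_n}$, so I define $\phi_{\bm d}$ on a class $[G]$ with $G \in \KSh_{\bm d}$ by $\phi_{\bm d}([G]) := [\dehom(G)]$. To see this is well defined I use \Cref{obs:embedInHomogeneous}: if $G \in \ideal{F_1,\ldots,F_n}_{\bm d}$ then $\dehom(G) \in \ideal{\dehom(F_1),\ldots,\dehom(F_n)} = \ideal{\mon{\beta_1}f_1,\ldots,\mon{\beta_n}f_n}$, and since each $\mon{\beta_i}$ is a unit in $\K[\Z^n]$ this ideal agrees with $\ideal{f_1,\ldots,f_n}$ inside $\K[\Z^n]$.

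Next I would compute both dimensions and show they coincide. The strategy for the left-hand side is to invoke Kushnirenko's argument, as sketched at the end of \Cref{sec:reg-inf-sol}: \Cref{ass:no-sol-at-inf}, combined with \Cref{thm:noSolsAtInfBKKtight}, forces $(F_1,\ldots,F_n)$ to be Koszul regular in every multidegree $\bm d \geq \sum_{i>0}\bm e_i$, i.e. $\H_t(F_1,\ldots,F_n)_{\bm d} = 0$ for $t > 0$. Applying Euler characteristic to the (now exact except at position $0$) Koszul strand gives
\[
\dim_\K \H_0(F_1,\ldots,F_n)_{\bm d} \;=\; \sum_{t=0}^{n} (-1)^{t} \!\!\sum_{\substack{I \subset \{1,\ldots,n\}\\ |I| = t}} \!\! \dim_\K \KSh_{\bm d - \sum_{i \in I}\bm e_i}.
\]
For $\bm d = \sum_{i=1}^n \bm e_i$, each summand $\dim_\K \KSh_{\bm d - \sum_{i \in I}\bm e_i}$ equals $\#\bigl((\sum_{j \notin I, j>0}\polyt_j) \cap \Z^n\bigr)$, and reindexing the alternating sum by the complement of $I$ reproduces exactly the mixed-volume formula in \Cref{eq:mixedVolume}. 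Hence $\dim_\K \H_0(F_1,\ldots,F_n)_{\bm d} = \MV(\polyt_1,\ldots,\polyt_n)$, which by \Cref{thm:noSolsAtInfBKKtight} equals $\dim_\K \K[\Z^n]/\ideal{f_1,\ldots,f_n}$. For larger $\bm d$, growth by $\bm e_0$ (multiplication by $\monhom{\bm 0}{\bm e_0}$) is compatible with $\dehom$, so the dimension equality propagates upward from the base case.

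Finally I would establish surjectivity of $\phi_{\bm d}$, which together with the dimension equality forces $\phi_{\bm d}$ to be an isomorphism. The key point is that any class in $\K[\Z^n]/\ideal{f_1,\ldots,f_n}$ can be represented by a Laurent polynomial whose Newton polytope, after appropriate monomial translation, fits inside $\sum_i d_i \polyt_i$; the standard simplex $\polyt_0$ provides the necessary flexibility when $d_0 > 0$, while the base case $d_0 = 0$ boils down to showing that $\polyt_1 + \cdots + \polyt_n$ carries enough monomials to span the finite-dimensional quotient modulo the ideal. \emph{The hard step is this last one}: extracting a representative of every residue class from $\dehom(\KSh_{\bm d})$, which is where \Cref{ass:no-sol-at-inf} is genuinely used beyond the dimension count, essentially reflecting Massri's argument that the no-solutions-at-infinity hypothesis makes the Sylvester-type Macaulay matrix "see" the whole quotient.
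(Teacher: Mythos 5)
The paper does not actually prove this lemma; it cites it to Massri's Theorem~3.a and relies on it as a black box (and similarly for the companion \Cref{thm:AddMonKoszulRegularity}, where the paper merely explains how the hypotheses match Massri's). So your proposal should be judged as a freestanding argument. Its overall architecture --- a dehomogenization map $\phi_{\bm d}$, an Euler-characteristic dimension count using Koszul exactness, and a surjectivity step --- is indeed the standard route, and your well-definedness check and the reindexing of the alternating sum against \Cref{eq:mixedVolume} are both correct.

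However, there are three genuine gaps. First, the Koszul vanishing $\H_t(F_1,\ldots,F_n)_{\bm d}=0$ for $t>0$ and all $\bm d \ge \sum_{i>0}\bm e_i$ is precisely the hard content, and it does \emph{not} follow from \Cref{thm:noSolsAtInfBKKtight}: that proposition only counts solutions of the homogenized system over $X$, whereas the exactness claim requires vanishing of higher sheaf cohomology of the twisting line bundles on the toric variety (this is what Massri, following Gelfand--Kapranov--Zelevinsky, actually proves; you invoke it but do not establish it). Second, the dimension propagation beyond the base degree $\sum_{i>0}\bm e_i$ is incomplete: you only discuss shifts by $\bm e_0$, where $\dehom$-compatibility is trivial since $\dehom(\monhomNoBM{\bm 0}{\bm e_0})=1$, but that compatibility does not by itself make multiplication by $\monhomNoBM{\bm 0}{\bm e_0}$ an isomorphism on $\H_0$; and the lemma also covers shifts by $\bm e_j$ with $j>0$, which you do not address at all. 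Third, you explicitly leave surjectivity unproved: the claim that every residue class has a representative with Newton polytope in $\sum_i d_i\polyt_i$ is exactly where the no-solutions-at-infinity hypothesis does its work, and one cannot simply rescale representatives by monomials without changing the class, so this needs its own argument (again essentially Massri's). As written, the proposal is a plan that reproduces the shape of the proof but defers all three load-bearing steps to the very result being cited.
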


\begin{lemma}[{\cite[Thm.~3.c]{massri_solving_2016}}]
  \label{thm:AddMonKoszulRegularity}
  Under \Cref{ass:no-sol-at-inf}, for every homogeneous polynomial
  $F_0 \in \KSh_{\bm{d_0}}$ such that the system {\footnotesize
    $(f_1, \dots, f_n,\dehom(F_0))$} has no solutions over $\T$, the
  system $(F_1,\dots,F_n,F_0)$ is Koszul regular
  (\Cref{def:Koszul-reg}) and, for every $\bm{d} \in \N^{n+1}$ such
  that $\bm{d} \geq \sum_i \bm{e}_i + \bm{d_0}$,
  $\ideal{F_1,\dots,F_n,F_0}_{\bm{d}} = \KSh_{\bm{d}}$.
\end{lemma}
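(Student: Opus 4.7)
The plan is to attach the extra polynomial $F_0$ to $(F_1,\dots,F_n)$ via the Koszul mapping-cone formalism and to chase the resulting long exact sequence in Koszul homology. Concretely, $\Ko(F_1,\dots,F_n,F_0)$ is the total complex of the mapping cone of $\cdot F_0 : \Ko(F_1,\dots,F_n)(-\bm{d_0}) \to \Ko(F_1,\dots,F_n)$, which yields, for every multidegree $\bm d$, a long exact sequence
\begin{equation*}
\cdots \to \H_t(F_1,\dots,F_n)_{\bm d - \bm{d_0}} \xrightarrow{\cdot F_0} \H_t(F_1,\dots,F_n)_{\bm d} \to \H_t(F_1,\dots,F_n,F_0)_{\bm d} \to \H_{t-1}(F_1,\dots,F_n)_{\bm d - \bm{d_0}} \to \cdots .
\end{equation*}
It would therefore suffice, for every $\bm d \geq \sum_i \bm{e_i} + \bm{d_0}$, to establish: (i) the higher Koszul homologies $\H_t(F_1,\dots,F_n)_{\bm{d'}}$ vanish for all $t > 0$ and $\bm{d'} \geq \sum_i \bm{e_i}$; and (ii) the multiplication map $\cdot F_0 : \H_0(F_1,\dots,F_n)_{\bm d - \bm{d_0}} \to \H_0(F_1,\dots,F_n)_{\bm d}$ is a bijection.

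Part (ii) is the easy half. \Cref{thm:KszReg0dimSys} canonically identifies both source and target with the finite-dimensional $\K$-algebra $\K[\Z^n]/\ideal{f_1,\dots,f_n}$ through $\dehom$, and under these identifications $\cdot F_0$ becomes multiplication by $\dehom(F_0)$ in the quotient. The maximal ideals of this quotient correspond to the finitely many solutions of $(f_1,\dots,f_n)$ on $\T$ (\Cref{thm:noSolsAtInfBKKtight}), so the hypothesis that $(f_1,\dots,f_n,\dehom(F_0))$ has no common zero on $\T$ means $\dehom(F_0)$ is nonvanishing at every such point; by the Chinese remainder decomposition of a zero-dimensional algebra, it is therefore a unit, and multiplication by it is an isomorphism.

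Part (i) is the main obstacle, because \Cref{thm:KszReg0dimSys} as stated only controls $\H_0$. My plan is to replay the same cone argument one more time, this time with $F_0$ replaced by a homogeneous lift $F_{\mathrm{lin}}$ of the generic linear form $f_0$ supplied by \Cref{ass:no-sol-at-inf}. The same reasoning as in (ii) shows that adjoining $F_{\mathrm{lin}}$ already forces the ideal to fill $\KSh_{\bm d}$, so $\H_0(F_1,\dots,F_n,F_{\mathrm{lin}})_{\bm d} = 0$ in high degree. Feeding this back into the long exact sequence, together with the Kushnirenko dimension count underlying \Cref{thm:noSolsAtInfBKKtight} (which pins the Euler characteristic of the Koszul strand to $\MV(\polyt_1,\dots,\polyt_n) = \dim_\K \H_0(F_1,\dots,F_n)_{\bm{d'}}$), forces every higher $\H_t(F_1,\dots,F_n)_{\bm{d'}}$ to vanish term by term in the stated range.

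With (i) and (ii) in hand, plugging back into the initial long exact sequence shows that for $\bm d \geq \sum_i \bm{e_i} + \bm{d_0}$ all outer terms vanish. Hence $\H_t(F_1,\dots,F_n,F_0)_{\bm d} = 0$ for every $t > 0$, which is Koszul regularity in the sense of \Cref{def:Koszul-reg}, and $\H_0(F_1,\dots,F_n,F_0)_{\bm d} = 0$, which is exactly the claimed equality $\ideal{F_1,\dots,F_n,F_0}_{\bm d} = \KSh_{\bm d}$.
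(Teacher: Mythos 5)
Your mapping-cone setup and the long exact sequence are correct, and part (ii) is a clean and valid argument: under \Cref{thm:KszReg0dimSys} both $\H_0$'s identify with the Artinian quotient $\K[\Z^n]/\ideal{f_1,\dots,f_n}$, and $\dehom(F_0)$ acts as a unit there precisely because it vanishes at none of the finitely many points on $\T$. However, part (i) contains a genuine gap. After establishing $\H_0(F_1,\dots,F_n,F_{\mathrm{lin}})_{\bm d}=0$, the long exact sequence only gives you degree-shifted isomorphisms $\cdot F_{\mathrm{lin}}: \H_t(F_1,\dots,F_n)_{\bm d - \bm e_0} \to \H_t(F_1,\dots,F_n)_{\bm d}$ for $t>0$ (even this needs $\H_{t}(F_1,\dots,F_n,F_{\mathrm{lin}})$ to vanish for $t>0$, which you have not yet shown), and the Euler-characteristic step does not close the hole: from $\chi = \dim_\K \H_0$ you can only conclude $\sum_{t>0}(-1)^t \dim_\K \H_t(F_1,\dots,F_n)_{\bm d'} = 0$, which permits cancellation (e.g.\ $\H_1 \cong \H_2 \neq 0$) and therefore does not ``force every higher $\H_t$ to vanish term by term.''

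The actual mechanism, which the paper invokes by deferring to Massri and to \cite[Prop.~3.4.1]{gelfand_discriminants_1994}, is a sheaf-cohomological vanishing: the polytopes are $\N$-Minkowski summands of $\polyt$, so the corresponding divisors on the complete toric variety $X$ are basepoint-free and satisfy Massri's ``stably twisted'' condition (\cite[Thm.~1]{massri_solving_2016}); and since the $n+1$ sections $F_1,\dots,F_n,F_0$ have no common zero on the $n$-dimensional variety $X$, the GKZ criterion makes the entire Koszul complex of global sections exact in the stated degree range, giving both the higher vanishing and the $\H_0$-vanishing in one stroke. You cannot bypass this geometric input by pure homological bookkeeping on the mapping cone; if you prefer your two-step route, step (i) still needs to appeal to the same GKZ/Massri vanishing applied to $(F_1,\dots,F_n,F_{\mathrm{lin}})$, at which point you have done the same work the paper does, just factored differently.
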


\begin{proof}
  The homogenization of system $(f_1,\dots,f_n,\dehom(F_0))$ \wrt the
  toric variety $X$ has no solutions over $X$ (see at
  \Cref{sec:reg-inf-sol} the discussion before \Cref{def:sol-inf}).
  To see this, notice that by  
  \Cref{ass:no-sol-at-inf} 
  the homogenization of the system $(f_1,\dots,f_n)$ \wrt $X$ has no
  solutions over  $X \setminus \T$ (see also \Cref{def:sol-inf}).
  Moreover, 
  we also assume that there are no solutions over $\T$
  of  $(f_1,\dots,f_n,\dehom(F_0))$.

  Now, the proof follows from the argument
  in the proof of {\cite[Thm.~3]{massri_solving_2016}}.
  This argument is the same as in
  \cite[Prop. 3.4.1]{gelfand_discriminants_1994}, where the stably
  twisted condition is given by {\cite[Thm.~1]{massri_solving_2016}}.
 \end{proof}

\begin{corollary}
  \label{thm:monomialMakesKoszulReg}
  For any monomial $\monhom{\alpha}{\text{\unboldmath$D$} \, e_0} \in
  \KSh_{D \, \bm{e}_0}$, the system
  $(F_1,\dots,F_n,\monhom{\alpha}{\text{\unboldmath$D$} \, e_0})$ is
  Koszul regular.
  For every $\bm{d} \in \N^{n+1}$ such that
  $\bm{d} \geq \sum_i \bm{e}_i + D\,\bm{e}_0$, it holds
  $\ideal{F_1 \dots F_n,\monhom{\alpha}{\text{\unboldmath$D$} \,
      e_0}}_{\bm{d}} = \KSh_{\bm{d}}$.
\end{corollary}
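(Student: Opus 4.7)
The plan is to apply \Cref{thm:AddMonKoszulRegularity} directly with the choice $F_0 := \monhom{\alpha}{D\, e_0}$, whose multidegree is $\bm{d_0} = D\,\bm{e}_0$. To do so, the only nontrivial hypothesis to verify is that the augmented system $(f_1,\dots,f_n,\dehom(F_0))$ has no solutions over the torus $\T$.

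First, I would unwind the dehomogenization: by \Cref{def:dehomMorph}, $\dehom(\monhom{\alpha}{D\, e_0}) = \mon{\alpha}$, which is a pure (Laurent) monomial in $\KS$. Since $\T = (\C^*)^n$ excludes any coordinate being zero, a monomial $\mon{\alpha}$ never vanishes on $\T$. Consequently the system $(f_1,\dots,f_n,\mon{\alpha})$ has strictly fewer solutions over $\T$ than $(f_1,\dots,f_n)$ does, and in particular it has no solution over $\T$, regardless of the finite solution set of $(f_1,\dots,f_n)$.

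With this hypothesis in hand, \Cref{thm:AddMonKoszulRegularity} immediately gives that $(F_1,\dots,F_n,\monhom{\alpha}{D\, e_0})$ is Koszul regular, and that $\ideal{F_1,\dots,F_n,\monhom{\alpha}{D\, e_0}}_{\bm{d}} = \KSh_{\bm{d}}$ for every $\bm{d} \in \N^{n+1}$ with $\bm{d} \geq \sum_i \bm{e}_i + D\,\bm{e}_0 = \sum_i \bm{e}_i + \bm{d_0}$, which is exactly the conclusion of the corollary.

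There is essentially no obstacle beyond recognizing that the dehomogenization of a monomial is again a monomial and that monomials are units on $\T$; the heavy lifting has already been done in \Cref{thm:AddMonKoszulRegularity}. The only care needed is the bookkeeping of multidegrees to ensure that $\bm{d_0} = D\,\bm{e}_0$ is substituted consistently in the degree bound.
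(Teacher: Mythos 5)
Your proof is correct and is exactly the argument the paper intends: take $F_0 = \monhom{\alpha}{D\, e_0}$ in \Cref{thm:AddMonKoszulRegularity}, note $\dehom(F_0) = \mon{\alpha}$ is a monomial that cannot vanish on $\T = (\C^*)^n$, so the augmented system has no solutions over the torus, and the conclusion drops out with $\bm{d_0} = D\,\bm{e}_0$. One small slip in exposition: the sentence claiming the augmented system ``has strictly fewer solutions over $\T$ than $(f_1,\dots,f_n)$'' is not the right way to phrase it (strictly fewer does not imply zero, and it is false if the original system already has no torus solutions); the correct and sufficient observation, which you already make, is simply that $\mon{\alpha} = 0$ has no solutions on $\T$, hence neither does any system containing that equation.
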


\noindent
Fix a graded monomial order $<$ for $\KSh$;
$\L$ is  the set of
monomials that are not leading monomials of
$\ideal{F_1,\dots,F_n}_{\sum_{i \geq 1} \bm{e_i}}$, that is
$$
\L := \left\{ \monhom{\alpha}{\text{\unboldmath$\sum_{i \geq 1}$} \bm{e_i}} \in \KSh_{\sum_{i \geq 1} \bm{e_i}} :
  \begin{array}{r}
    \forall G \in \ideal{F_1,\dots,F_n}_{\sum_{i \geq 1} \bm{e_i}}, \\ \LM_<(G) \neq \monhom{\alpha}{\text{\unboldmath$\sum_{i \geq 1}$} \bm{e_i}}
      \end{array}
\right\}
$$

We will prove that the dehomogenization of these  monomials,
$\dehom(\L)$, forms a monomial basis for
$\K[\Z^n] / \ideal{ f_1,\dots, f_n }$.

% \todo[inline]{The following lemma might be possible to prove it using
%   \Cref{thm:KszReg0dimSys}. But, as I don't trust it, I prefer to use
%   the other one that I trust.  }

\begin{lemma}
  \label{thm:F5BasisIsLinearlyIndependent}
  The monomials in the set  $\dehom(\L)$ are $\K$-linearly independent in
  $\K[\Z^n] / \ideal{f_1,\dots,f_n}$.
\end{lemma}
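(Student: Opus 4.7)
The plan is to transfer linear independence from the homogeneous quotient $\KSh_{\bm{d}_0}/\ideal{F_1,\dots,F_n}_{\bm{d}_0}$, where $\bm{d}_0 := \sum_{i\geq 1}\bm{e_i}$, down to $\K[\Z^n]/\ideal{f_1,\dots,f_n}$ via the isomorphism of \Cref{thm:KszReg0dimSys}. I would argue by contradiction: suppose there is a nontrivial $\K$-linear relation $\sum_\alpha c_\alpha \, \dehom(\monhom{\alpha}{d_0}) \in \ideal{f_1,\dots,f_n}$ indexed over $\monhom{\alpha}{d_0} \in \L$ with not all $c_\alpha$ zero, and I would derive a contradiction with the defining property of $\L$.

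The first step is to lift this relation to $\KSh_{\bm{d}_0}$ by setting $G := \sum_\alpha c_\alpha \monhom{\alpha}{d_0}$. Since the monomials of $\L$ are distinct elements of $\KSh_{\bm{d}_0}$ and therefore $\K$-linearly independent, $G \neq 0$; and by construction every monomial appearing in $G$ lies in $\L$. The second step invokes \Cref{thm:KszReg0dimSys}: because $\dehom(G)$ vanishes in $\K[\Z^n]/\ideal{f_1,\dots,f_n}$, the isomorphism with $\H_0(F_1,\dots,F_n)_{\bm{d}_0}$, which is realized by the dehomogenization map, forces $G \in \ideal{F_1,\dots,F_n}_{\bm{d}_0}$. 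The final step contradicts the definition of $\L$: the leading monomial $\LM_<(G)$ appears in $G$ and therefore belongs to $\L$, while simultaneously, since $G$ is a nonzero element of $\ideal{F_1,\dots,F_n}_{\bm{d}_0}$, it is by definition the leading monomial of an element of the ideal at multidegree $\bm{d}_0$---exactly what $\L$ excludes.

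The main subtlety---and the step I would verify most carefully---is that the isomorphism of \Cref{thm:KszReg0dimSys} is genuinely realized by the map $[G] \mapsto [\dehom(G)]$, so that the descent ``$G$ represents the zero class in the homogeneous quotient iff $\dehom(G)$ vanishes in $\K[\Z^n]/\ideal{f_1,\dots,f_n}$'' is legitimate. Once that compatibility is granted, the proof reduces to the classical fact that monomials outside the initial ideal form a basis for each graded piece of the quotient. If the compatibility had to be proved from scratch, I would combine \Cref{obs:embedInHomogeneous} (to lift a relation in $\ideal{f_1,\dots,f_n}$ to a homogeneous combination of the $F_i$) with \Cref{thm:dehomSameDegIsInjective} (to synchronize multidegrees), which would lengthen but not change the substance of the argument.
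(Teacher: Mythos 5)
Your endgame---contradicting the defining property of $\L$---is the same as the paper's, but there is a genuine gap where you lean on the compatibility of the isomorphism in \Cref{thm:KszReg0dimSys}. That lemma, as stated in the paper, only asserts an abstract isomorphism $\H_0(F_1,\dots,F_n)_{\bm{d}_0} \cong \K[\Z^n]/\ideal{f_1,\dots,f_n}$; it does not say this isomorphism is induced by $\dehom$. What you actually need is precisely the injectivity of the induced map, i.e.\ the implication ``$\dehom(G)\in\ideal{f_1,\dots,f_n}_{\K[\Z^n]}$ and $G\in\KSh_{\bm{d}_0}$ forces $G\in\ideal{F_1,\dots,F_n}_{\bm{d}_0}$.'' But that implication is, up to restricting to $\L$, essentially equivalent to the lemma you are trying to prove, so invoking it as an unstated consequence of \Cref{thm:KszReg0dimSys} makes the argument circular. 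The paper uses \Cref{thm:KszReg0dimSys} only for the dimension count in \Cref{thm:F5basisAreMonomialBasis}, and does not rely on any compatibility claim.

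Your fallback sketch also does not close the gap. \Cref{obs:embedInHomogeneous} only lifts affine relations with multipliers in $\KS$, whereas here the multipliers $g_i$ live in $\K[\Z^n]$; one must first clear denominators by a monomial $\mon{\alpha}$, and homogenizing then produces a relation of the form $\monhom{\alpha}{De_0}\sum_i c_i\L_i = \sum_i G_iF_i$ at the \emph{shifted} multidegree $D\bm{e}_0 + \bm{d}_0$, not at $\bm{d}_0$. Descending back to $\bm{d}_0$ by ``cancelling'' the extra monomial factor is the genuinely nontrivial step, and it is here that the paper uses the Koszul machinery: $(F_1,\dots,F_n,\monhom{\alpha}{De_0})$ is Koszul regular by \Cref{thm:monomialMakesKoszulReg} / \Cref{thm:AddMonKoszulRegularity}, and \Cref{thm:H1vanishLastPolInIdeal} then yields $\sum_i c_i\L_i \in \ideal{F_1,\dots,F_n}_{\bm{d}_0}$. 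Neither \Cref{obs:embedInHomogeneous} nor \Cref{thm:dehomSameDegIsInjective} gives you this cancellation; those two observations, which you name as your patch, never touch the degree shift. You should replace the appeal to an implicit naturality of \Cref{thm:KszReg0dimSys} with this explicit clear-denominators / homogenize / Koszul-cancel argument.
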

\begin{proof}
  Assume that the lemma does not hold. Hence, there are
  $c_1,\dots,c_v \in \K$, not all of them $0$,  and $g_1,\dots,g_n \in \K[\Z^n]$ such that
  $\sum_{i} c_i \dehom(\L_i) = \sum_i g_i f_i$. We can clear the
  denominators, introduced by the $g_i$'s,  by choosing a monomial
  $\mon{\alpha} \in \K[\N^n]$ such that, for every $i$,
  $\left( \frac{\mon{\alpha}}{\mon{\beta_i}} \, g_i \right) \in
  \K[\N^n]$.
  Moreover, there is a degree $D \in \N$ and homogeneous polynomials
   $G_i \in \KSh$ of multidegrees
  $(D \, \bm{e}_0 + \sum_{j > 0} \bm{e}_j - \bm{e}_i)$ such that
  $\dehom(G_i) = \left( \frac{\mon{\alpha}}{\mon{\beta_i}} \, g_i
  \right)$ and
  $\monhom{\alpha}{\text{\unboldmath$D$} \, e_0} \sum_{i} c_i \L_i = \sum_i
  G_i \, F_i.$
  By \Cref{thm:AddMonKoszulRegularity},
  $(F_1,\dots,F_n,\monhom{\alpha}{\text{\unboldmath$D$} \, e_0})$ is
  Koszul regular and so, by \Cref{thm:H1vanishLastPolInIdeal}, 
  $\sum_{i} c_i \L_i \in \ideal{F_1,\dots,F_n}_{\sum_{i > 1}
    \bm{e}_i}$. So, a monomial in $\L$ is a leading monomial of
  an element in $\ideal{F_1,\dots,F_n}_{\sum_{i > 1} \bm{e}_i}$.  This
  is a contradiction as, by construction, there is no monomial in
  $\L$ which is a leading monomial of a polynomial in
  $\ideal{F_1,\dots,F_n}_{\sum_{i > 1} \bm{e}_i}$.
\end{proof}

\begin{corollary} \label{thm:F5basisAreMonomialBasis}
  The set of monomials $\dehom(\L)$ is a monomial basis of
  $\K[\Z^n] / \ideal{f_1,\dots,f_n}$.
\end{corollary}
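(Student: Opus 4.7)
The plan is to combine Lemma \ref{thm:F5BasisIsLinearlyIndependent}, which already provides linear independence of $\dehom(\L)$ in $\K[\Z^n]/\ideal{f_1,\dots,f_n}$, with a dimension count that shows $|\L|$ equals the $\K$-dimension of this quotient. Once these two ingredients are in place, the corollary follows immediately.

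First, I would establish that $\L$ itself is a $\K$-basis of the quotient vector space $\KSh_{\sum_{i\geq 1}\bm{e_i}} / \ideal{F_1,\dots,F_n}_{\sum_{i\geq 1}\bm{e_i}}$. This is the standard ``monomials outside the leading ideal form a basis of the quotient'' statement, and in our matrix setting it is just a restatement of Gaussian elimination on the Macaulay matrix (\Cref{thm:defMacaulayMatrix,def:GMacaulayMatrix,thm:leadingMonOfMacMatrix}): the pivot columns of $\gMac_{\sum_{i\geq 1}\bm{e_i}}^{n}$ are indexed precisely by $\LM_{<}(\ideal{F_1,\dots,F_n}_{\sum_{i\geq 1}\bm{e_i}})$, the non-pivot columns are indexed by $\L$, and the images of the latter in the quotient are clearly linearly independent and spanning. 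Hence
\[
\dim_{\K}\bigl(\KSh_{\sum_{i\geq 1}\bm{e_i}} / \ideal{F_1,\dots,F_n}_{\sum_{i\geq 1}\bm{e_i}}\bigr) \;=\; |\L|.
\]

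Second, I would invoke \Cref{thm:KszReg0dimSys}: under \Cref{ass:no-sol-at-inf} and because $\sum_{i\geq 1}\bm{e_i} \geq \sum_{i>0}\bm{e_i}$, we have an isomorphism
\[
\H_0(F_1,\dots,F_n)_{\sum_{i\geq 1}\bm{e_i}} \;\cong\; \K[\Z^n]/\ideal{f_1,\dots,f_n}.
\]
The map realising this isomorphism on representatives is exactly $\dehom$ restricted to $\KSh_{\sum_{i\geq 1}\bm{e_i}}$, which is well-defined at the level of quotients because $\dehom(F_i) = \mon{\beta_i}f_i$ and multiplication by a monomial is invertible in $\K[\Z^n]$. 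In particular, $\dim_{\K} \K[\Z^n]/\ideal{f_1,\dots,f_n} = |\L|$.

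Finally, \Cref{thm:F5BasisIsLinearlyIndependent} supplies $|\L|$ linearly independent elements $\dehom(\L)$ inside a $\K$-vector space of dimension exactly $|\L|$, so $\dehom(\L)$ is a $\K$-basis of $\K[\Z^n]/\ideal{f_1,\dots,f_n}$, proving the corollary. There is no real obstacle here: the substantive work lies in the Koszul-regularity transfer of \Cref{thm:KszReg0dimSys} and in \Cref{thm:F5BasisIsLinearlyIndependent}, both already proven; the corollary is essentially a dimension-matching observation that packages them together. The only point to state carefully is that $\dehom$ restricted to the graded piece is injective (\Cref{thm:dehomSameDegIsInjective}), so that distinct monomials of $\L$ remain distinct in $\KS$, ensuring $|\dehom(\L)| = |\L|$.
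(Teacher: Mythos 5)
Your proposal is correct and follows essentially the same route as the paper: a dimension count via Lemma~\ref{thm:KszReg0dimSys}, injectivity of $\dehom$ on a graded piece (Observation~\ref{thm:dehomSameDegIsInjective}), and linear independence from Lemma~\ref{thm:F5BasisIsLinearlyIndependent}. You merely spell out the intermediate step linking $|\L|$ to $\dim_\K \H_0(F_1,\dots,F_n)_{\sum_{i\geq 1}\bm{e}_i}$ via the pivot/non-pivot columns of the Macaulay matrix, which the paper leaves implicit in its citation of Lemma~\ref{thm:KszReg0dimSys}.
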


\begin{proof}
  By \Cref{thm:KszReg0dimSys}, the number of elements in the set $\L$
  and the dimension of $\K[\Z^n] / \ideal{f_1,\dots,f_n}$ is the same.
  By \Cref{thm:dehomSameDegIsInjective}, the sets $\L$ and
  $\dehom(\L)$ have the same number of elements.  By
  \Cref{thm:F5BasisIsLinearlyIndependent}, the monomials in the set
  $\dehom(\L)$ are linearly independent.
\end{proof}

\begin{remark}
  \label{rem:compute-L}
  One way to compute the set $\L$ is to compute a basis of the vector
  space $\ideal{F_1,\dots,F_n}_{\sum_{i \geq 1} \bm e_i}$ using
  \Cref{alg:f5Alg}, that is \linebreak
  $\text{\texttt{ReduceMacaulay}}\left((F_1,\dots,F_n),\sum_{i \geq 1}
    \bm{e}_i, < \right)$ .
\end{remark}

For each $F_0 \in \KSh_{\bm{e}_0}$, we will construct a Macaulay
matrix for $(F_1,\dots,F_n,F_0)$ at multidegree
$\bm{1} := \sum_i \bm{e}_i$, say $\Mac(F_0)$;
from this matrix we will recover the
multiplication map of $\dehom(F_0)$ in $\Kx /
\ideal{f_1,\dots,f_n}$.
The rows of $\Mac(F_0)$ are of two kinds: 
\begin{itemize}
\item the polynomials in
  $\rows(\gMac^{n}_{\bm{1}})$, where  \linebreak
  $\gMac^{n}_{\bm{1}}  = \text{\texttt{ReduceMacaulay}}\left((F_1,\dots,F_n),\bm{1}, <\right)$,
\item the polynomials of the form $m \, F_0$, where $m \in \L$.
\end{itemize}

\begin{lemma} \label{thm:matrixF0}
  The matrix $\Mac(F_0)$ is always square. It is full-rank if and
  only if $(F_1,\dots,F_N, F_0)$ is Koszul regular.
\end{lemma}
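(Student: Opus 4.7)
The plan is to verify squareness by a dimension count using \Cref{thm:KszReg0dimSys}, and then establish the two implications via the Koszul F5 criterion and \Cref{thm:AddMonKoszulRegularity}.

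For squareness, I would apply \Cref{thm:KszReg0dimSys} at the two multidegrees $\bm{1}$ and $\sum_{i > 0} \bm{e}_i$, both of which dominate $\sum_{i > 0} \bm{e}_i$. In each case the zeroth Koszul homology is isomorphic to $\K[\Z^n]/\ideal{f_1,\dots,f_n}$, so the codimension of $\ideal{F_1,\dots,F_n}_{\bm{d}}$ inside $\KSh_{\bm{d}}$ equals the same integer $N := \dim_{\K}\K[\Z^n]/\ideal{f_1,\dots,f_n}$. Combined with \Cref{thm:leadingMonOfMacMatrix} and the definition of $\L$, this gives
\[|\rows(\gMac^{n}_{\bm{1}})| + |\L| = \bigl(\dim_{\K}\KSh_{\bm{1}} - N\bigr) + N = \dim_{\K}\KSh_{\bm{1}},\]
which matches the number of columns.

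Next, I would observe that $\Mac(F_0)$ coincides with the matrix $\Mac^{n+1}_{\bm{1}}$ that \Cref{alg:f5Alg} produces on input $(F_1,\dots,F_n,F_0)$, $\bm{1}$, $<$: indeed $\bm{1} - \bm{e}_0 = \sum_{i > 0} \bm{e}_i$, and by \Cref{thm:matrixWithCritF5generatesTheSame} the set $\L$ equals $\KSh_{\sum_{i > 0} \bm{e}_i} \setminus \LM_{<}(\rows(\gMac^{n}_{\sum_{i > 0} \bm{e}_i}))$. If $(F_1,\dots,F_n,F_0)$ is Koszul regular, then $\H_1(F_1,\dots,F_n,F_0)_{\bm{1}} = 0$ (since $\bm{1} \geq \sum_i \bm{e}_i$), so \Cref{thm:regAssumpF5noSyz} guarantees that the rows of $\Mac(F_0)$ are linearly independent; together with squareness this yields full rank.

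For the converse, suppose $\Mac(F_0)$ is full-rank. A descending induction on the monomial order, via the Koszul F5 criterion (\Cref{thm:f5Crit}), shows that for every monomial $m' \in \KSh_{\sum_{i > 0} \bm{e}_i}$ the polynomial $m' F_0$ lies in the linear span of $\rows(\gMac^{n}_{\bm{1}}) \cup \{m F_0 : m \in \L\}$; hence the row space of $\Mac(F_0)$ coincides with $\ideal{F_1,\dots,F_n,F_0}_{\bm{1}}$. Full rank then forces $\ideal{F_1,\dots,F_n,F_0}_{\bm{1}} = \KSh_{\bm{1}}$. Since $\monhom{\bm{0}}{\bm{1}} \in \KSh_{\bm{1}}$ by \Cref{obs:embedInHomogeneous}, applying $\dehom$ yields $1 \in \ideal{f_1,\dots,f_n,\dehom(F_0)}$ in $\K[\Z^n]$, so the system $(f_1,\dots,f_n,\dehom(F_0))$ has no solutions over $\T$; \Cref{thm:AddMonKoszulRegularity} then delivers Koszul regularity of $(F_1,\dots,F_n,F_0)$. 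The main obstacle I foresee is precisely this identification of the row space of $\Mac(F_0)$ with $\ideal{F_1,\dots,F_n,F_0}_{\bm{1}}$, which requires a careful inductive rewriting via the Koszul F5 criterion of the ``missing'' polynomials $m' F_0$ indexed by $m' \in \LM_{<}(\rows(\gMac^{n}_{\sum_{i > 0} \bm{e}_i}))$; everything else reduces to direct invocations of the lemmas already established.
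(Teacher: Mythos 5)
Your proof is correct and uses the same key machinery (the Koszul F5 criterion identifying the row space of $\Mac(F_0)$ with $\ideal{F_1,\dots,F_n,F_0}_{\bm{1}}$, \Cref{thm:regAssumpF5noSyz} for linear independence, and \Cref{thm:AddMonKoszulRegularity} to pass from ``no solutions over $\T$'' to Koszul regularity), but it diverges from the paper in two instructive places. For squareness, the paper does not do a dimension count: it instead picks a special $F_0$ (a monomial, via \Cref{thm:monomialMakesKoszulReg}) for which the system is Koszul regular, deduces that $\Mac(F_0)$ is both full-rank (linear independence) and generates all of $\KSh_{\bm{1}}$, hence square for that $F_0$, and then observes that the number of rows of $\Mac(F_0)$ is independent of $F_0$. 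Your route through \Cref{thm:KszReg0dimSys} at the two degrees $\sum_{i>0}\bm{e}_i$ and $\bm{1}$ is more explicit and arguably more transparent, though it spends an extra application of that lemma where the paper bootstraps off a single well-chosen $F_0$. For the ``full-rank implies Koszul regular'' direction, the paper invokes the sparse Nullstellensatz of Sombra to conclude that $(f_1,\dots,f_n,\dehom(F_0))$ has no common zeros over $\T$; your argument reaches the same conclusion more elementarily, by dehomogenizing the monomial $\monhom{\bm{0}}{\bm{1}}\in\ideal{F_1,\dots,F_n,F_0}_{\bm{1}}=\KSh_{\bm{1}}$ to get $1\in\ideal{f_1,\dots,f_n,\dehom(F_0)}$ in $\K[\Z^n]$, which trivially rules out common zeros on the torus without any Nullstellensatz. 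Both paths then finish with \Cref{thm:AddMonKoszulRegularity}. Your identification of $\Mac(F_0)$ with the output of \Cref{alg:f5Alg} at degree $\bm{1}$, via $\L = \KSh_{\sum_{i>0}\bm{e}_i}\setminus\LM_<(\rows(\gMac^n_{\sum_{i>0}\bm{e}_i}))$ and \Cref{thm:matrixWithCritF5generatesTheSame}, is exactly the implicit justification the paper relies on when it says ``according to the Koszul F5 criterion, the row space spanned by $\Mac(F_0)$ is $\ideal{F_1,\dots,F_n,F_0}_{\bm{1}}$,'' so that step of your ``only if'' direction (which you flagged as the delicate point) is indeed sound.
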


\begin{proof}
  According to the Koszul F5 criterion (see \Cref{thm:f5Crit}), the
  row space spanned by $\Mac(F_0)$ is the same as the vector space
  \linebreak
  $\ideal{F_1,\dots,F_n,F_0}_{\bm{1}}$
  for any choice of $F_0$.
  We can consider an $F_0$ such that $(F_1,\dots,F_n,F_0)$
  is Koszul regular, by \Cref{thm:monomialMakesKoszulReg}.
  Then, the
  rows of $\Mac(F_0)$ generate $\KSh_{\bm{1}}$ and, by
  \Cref{thm:regAssumpF5noSyz}, the rows of $\Mac(F_0)$ are linearly
  independent. Hence, by \Cref{thm:AddMonKoszulRegularity}, for this
  particular $F_0$, the matrix $\Mac(F_0)$ is square and full-rank.
  However, the matrix $\Mac(F_0)$ is square for any choice of $F_0$,
  because its number of rows does not depend on $F_0$.
  Nevertheless, it is not full-rank for any choice of $F_0$.
  %%% 
  If $\Mac(F_0)$ is full-rank, then $(F_1,\dots,F_n,F_0)$ is Koszul
  regular because, by the sparse Nullstellensatz
  \cite[Thm.~2]{sombra_sparse_1999}, the homogenization of the system
  $(f_1,\dots,f_n,\dehom(F_0))$ has no solutions over
  $\T$. Consequently, the proof follows from
  \Cref{thm:AddMonKoszulRegularity}.
\end{proof}

% \begin{lemma}
%   The matrix $\Mac(F_0)$ is square and it is full-rank if and only if
%   $(\dehom(F_0),f_1,\dots,f_n)$ has not solutions.
% \end{lemma}

% \begin{proof}
% According to the F5 criterion (see \Cref{thm:f5Crit}), the row space
% spanned by $\Mac(F_0)$ is the same as the vector space
% $\ideal{F_0,\dots,F_n}_{\bm{1}}$.
% %%
% Moreover, by \cite[Thm.~3]{massri_solving_2016},
% $(\dehom(F_0),f_1,\dots,f_n)$ has no solutions if and only if
% $(F_0,F_1,\dots,F_n)$ is Koszul regular.
% %%
% If $F_0$ is such that $(F_0,F_1,\dots,F_n)$ is Koszul regular, then
% the rows of $\Mac(F_0)$ generate $\KSh_{\bm{1}}$ and, by
% \Cref{thm:regAssumpF5noSyz}, every row in $\Mac(F_0)$ is linear
% independent. Hence, for this particular $F_0$, the matrix $\Mac(F_0)$
% is square. As the construction of $\Mac(F_0)$ does not depend on
% $F_0$, the matrix $\Mac{F_0}$ is always square, but might not be
% full-rank. \todo{rewrite better?}
% \end{proof}

We reorder the columns of $\Mac(F_0)$ as shown in \Cref{eq:reorderMF0}, such that
\begin{itemize}
\item the columns of the submatrix 
  $
  \left(
    \begin{smallmatrix}
      M_{1,2}(F_0) \\ \hline
      M_{2,2}(F_0) 
    \end{smallmatrix}
  \right)
  $
  correspond to  monomials of the form
   $m \, \monhom{0}{\bm{e}_0}$, where $m \in \L$, and 
 \item the rows of
   $
\left(
\begin{smallmatrix}
  M_{2,1}(F_0) & |  & M_{2,2}(F_0) 
\end{smallmatrix}
\right) $ are polynomials of the form $m \, F_0$, where $m \in \L$.
\end{itemize}
\vspace{-12pt}
{\footnotesize
 \begin{align}
   \label{eq:reorderMF0}
   \begin{tabular}{r}
     $\rows(\gMac^{n}_{\bm{1}})
     %\ideal{F_1,\dots,F_n}_{\bm{1}}
    \left\{ 
     \lefteqn{\phantom{
     \begin{array}{c}  \\[5pt] M_{1,1}(F_0)  \\[10pt] \hline \end{array}
     }}
     \right.$\\
     $F_0 \cdot \L
     \left\{\lefteqn{\phantom{
     \begin{array}{c}  \\[5pt]  M_{1,1}(F_0) \\[10pt] \hline \end{array}
     }}\right.$
   \end{tabular}
  \left[\phantom{\begin{matrix}a_0\\ \ddots\\a_0\\b_0\\ \ddots\\b_0 \end{matrix}}\right.\hspace{-1.5em}
  \begin{array}{c |}
    \\[5pt]
    \hspace{8pt} M_{1,1}(F_0) \hspace{8pt} \\[10pt] \hline\hline
    \\[5pt]
    M_{2,1}(F_0)         \\[10pt]
  \end{array}
  \overbrace{
    \begin{array}{| c}
      \\[5pt]
        \hspace{2pt} M_{1,2}(F_0) \\[10pt] \hline\hline
      \\[5pt]
      M_{2,2}(F_0) \\[10pt]
    \end{array}
  }^{\monhom{0}{1} \cdot \L}
  \hspace{-1.5em}
  \left.\phantom{\begin{matrix}a_0\\ \ddots\\a_0\\b_0\\ \ddots\\b_0 \end{matrix}}\right]\hspace{-1em}
 \end{align}
}

\noindent
We prove that $M_{1,1}(F_0)$ is invertible and the Schur
complement of $M_{2,2}(F_0)$, 
$
M_{2,2}^c(F_0) := (M_{2,2} - M_{2,1} \, M_{1,1}^{-1} \, M_{1,2})(F_0),
$
is the multiplication map of $\dehom(F_0)$ in the  basis
$\dehom(\L)$ of $\K[\Z^n] / \ideal{f_1,\dots,f_n}$.

\begin{lemma}
  If $(F_1,\dots,F_n,\monhom{0}{e_0})$ is Koszul regular then, for any
  $F_0 \in \KSh_{\bm{e}_0}$, the matrix $M_{1,1}(F_0)$ is invertible.
\end{lemma}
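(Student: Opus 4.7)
The plan is to observe that the block $M_{1,1}(F_0)$ is actually independent of the choice of $F_0$, and then to exploit the Koszul regularity hypothesis by specializing $F_0$ to the monomial $\monhom{0}{\bm{e}_0}$ itself, where the block structure of $\Mac(F_0)$ collapses into a form that makes invertibility of $M_{1,1}$ immediate.

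First, I would unpack the construction of $\Mac(F_0)$ to argue that $M_{1,1}(F_0)$ does not depend on $F_0$. The top block of rows consists of $\rows(\gMac^{n}_{\bm{1}})$, which is produced by $\texttt{ReduceMacaulay}$ applied only to $F_1,\dots,F_n$ and is therefore independent of $F_0$. The partition of columns into the two groups is determined solely by $\L$ and by the monomial $\monhom{0}{\bm{e}_0}$, again without reference to $F_0$. Hence $M_{1,1}(F_0)$ coincides with $M_{1,1}(\monhom{0}{\bm{e}_0})$ for every $F_0 \in \KSh_{\bm e_0}$, and it suffices to prove the invertibility in this specific case.

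Next, I would analyze the bottom block of rows when $F_0 = \monhom{0}{\bm{e}_0}$. For each $m \in \L$, the corresponding row is the polynomial $m \cdot \monhom{0}{\bm e_0}$, which is a single monomial belonging exactly to the column group $\monhom{0}{\bm e_0}\cdot \L$. Consequently, with the reordering in \Cref{eq:reorderMF0}, the lower left block satisfies $M_{2,1}(\monhom{0}{\bm{e}_0}) = 0$ and the lower right block $M_{2,2}(\monhom{0}{\bm{e}_0})$ is a permutation of the identity matrix, hence invertible with $|\det| = 1$. The overall matrix is therefore block lower-triangular (after the obvious row permutation), and its determinant equals, up to a sign, $\det(M_{1,1}(\monhom{0}{\bm{e}_0}))$.

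Finally, I would invoke \Cref{thm:matrixF0}: the Koszul regularity of $(F_1,\dots,F_n,\monhom{0}{\bm{e}_0})$ guarantees that $\Mac(\monhom{0}{\bm{e}_0})$ is square and of full rank, so its determinant is non-zero. Combined with the block-triangular computation above, this forces $M_{1,1}(\monhom{0}{\bm{e}_0})$ to be invertible, and by the independence observation from the first step, the same holds for $M_{1,1}(F_0)$ for every $F_0 \in \KSh_{\bm e_0}$. The only subtle point, and the one I would double-check most carefully, is precisely the independence claim — namely that the ``top half'' of $\Mac(F_0)$ (both rows and their zero/non-zero pattern in the ``$F_0$-column group'') truly has nothing to do with $F_0$; everything else in the argument is bookkeeping about the block structure.
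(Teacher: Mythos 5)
Your proof is correct and follows essentially the same route as the paper's: observe that $M_{1,1}$ and $M_{1,2}$ are determined by $\rows(\gMac^n_{\bm 1})$ and the column partition alone and hence do not depend on $F_0$, specialize to $F_0 = \monhom{0}{e_0}$, note $M_{2,1}=0$ and $M_{2,2}$ is the identity (up to a row/column permutation), and invoke \Cref{thm:matrixF0} to get full rank of $\Mac(\monhom{0}{e_0})$, from which the block-triangular structure forces $M_{1,1}$ to be invertible. The paper states the independence step tersely (``by construction'') whereas you unpack it; otherwise the arguments coincide.
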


\begin{proof}
  By \Cref{thm:matrixF0}, as the system $(F_1,\dots,F_n,\monhom{0}{e_0})$
  is Koszul regular, then  the matrix $\Mac(\monhom{0}{e_0})$
  is invertible.  As \linebreak $M_{2,1}(\monhom{0}{e_0})$ is the zero matrix and
  $M_{2,2}(\monhom{0}{e_0})$ is the identity, then
  $M_{1,1}(\monhom{0}{e_0})$ must be invertible. By construction, the
  matrices $M_{1,1}(F_0)$ and $M_{1,2}(F_0)$ are independent of the
  choice of $F_0$. Hence, for any $F_0$  the matrix $M_{1,1}(F_0)$ is invertible.
\end{proof}

\begin{theorem}
  The multiplication map of $\dehom(F_0)$ in the monomial basis
  $\dehom(\L)$ of $\K[\Z^n] / \ideal{f_1,\dots,f_n}$
  is the Schur complement of $M_{2,2}(F_0)$, that is
  $
  M_{2,2}^c(F_0) := (M_{2,2} - M_{2,1} \, M_{1,1}^{-1} \, M_{1,2})(F_0).
  $
\end{theorem}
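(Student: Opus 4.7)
The plan is to extract the multiplication map from $\Mac(F_0)$ by a block Gaussian elimination that zeros the $M_{2,1}$-part of every row $m\,F_0$ with $m \in \L$, and then to dehomogenize the resulting polynomial identity. The key observation is that the top rows of $\Mac(F_0)$ span $\ideal{F_1,\dots,F_n}_{\bm{1}}$, so such row operations only alter $m\,F_0$ by an element of this ideal, while the right-block coefficients of the reduced row will exactly read off a row of the Schur complement $M_{2,2}^c(F_0)$.

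Concretely, I would fix $m \in \L$ and denote by $u_m$ and $v_m$ the rows of $M_{2,1}(F_0)$ and $M_{2,2}(F_0)$ attached to $m\,F_0$. Since $M_{1,1}$ is invertible by the previous lemma, the vector $y_m := u_m\,M_{1,1}^{-1}$ is well defined and
\begin{equation*}
(u_m \mid v_m) - y_m\,(M_{1,1} \mid M_{1,2}) \;=\; \bigl(\bm{0} \,\mid\, v_m - u_m M_{1,1}^{-1} M_{1,2}\bigr),
\end{equation*}
whose right block is the $m$-th row of $M_{2,2}^c(F_0)$. By \Cref{thm:leadingMonOfMacMatrix} and \Cref{thm:matrixWithCritF5generatesTheSame}, the row space of $\gMac^{n}_{\bm{1}}$ coincides with $\ideal{F_1,\dots,F_n}_{\bm{1}}$, so the polynomial $P_m$ whose coefficient row equals $y_m\,(M_{1,1} \mid M_{1,2})$ lies in $\ideal{F_1,\dots,F_n}_{\bm{1}}$. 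Translating the displayed equation back to polynomials, and using that the right-block columns of $\Mac(F_0)$ are indexed by $\L \cdot \monhom{0}{\bm{e}_0}$, this yields
\begin{equation*}
m\,F_0 - P_m \;=\; \sum_{m' \in \L}\bigl(M_{2,2}^c(F_0)\bigr)_{m,m'}\;\, m' \cdot \monhom{0}{\bm{e}_0}
\end{equation*}
in $\KSh_{\bm{1}}$.

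Finally, I would apply the dehomogenization morphism $\dehom$. Since $\dehom$ is a ring homomorphism with $\dehom(\monhom{0}{\bm{e}_0}) = 1$ and $\dehom(F_i) = \mon{\beta_i}\,f_i$ where $\mon{\beta_i}$ is a unit in $\K[\Z^n]$, we have $\dehom(P_m) \in \ideal{f_1,\dots,f_n}$, so
\begin{equation*}
\dehom(m)\,\dehom(F_0) \;\equiv\; \sum_{m' \in \L}\bigl(M_{2,2}^c(F_0)\bigr)_{m,m'}\;\, \dehom(m') \pmod{\ideal{f_1,\dots,f_n}}.
\end{equation*}
Letting $m$ range over $\L$ and invoking \Cref{thm:F5basisAreMonomialBasis}, which identifies $\dehom(\L)$ as a $\K$-basis of $\K[\Z^n]/\ideal{f_1,\dots,f_n}$, this is exactly the statement that $M_{2,2}^c(F_0)$ represents multiplication by $\dehom(F_0)$ in that basis. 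The main subtlety I anticipate is the correct identification of the right-block columns of $\Mac(F_0)$ with the basis elements $\dehom(\L)$; this identification works thanks to \Cref{thm:dehomSameDegIsInjective} (injectivity of $\dehom$ restricted to $\KSh_{\bm{1}}$) together with $\dehom(\monhom{0}{\bm{e}_0}) = 1$.
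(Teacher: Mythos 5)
Your proposal is correct and follows essentially the same route as the paper: establish the congruence $L_i F_0 \equiv \monhom{0}{e_0}\sum_j (M_{2,2}^c(F_0))_{i,j} L_j$ modulo $\ideal{F_1,\dots,F_n}$ at degree $\bm{1}$, dehomogenize using that $\dehom(\monhom{0}{e_0})=1$ and that the $\mon{\beta_i}$ are units in $\K[\Z^n]$, and conclude via \Cref{thm:F5basisAreMonomialBasis}. The only difference is that you make the block-elimination derivation of that congruence explicit, whereas the paper simply asserts it.
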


\begin{proof}
  Note that for every $F_0 \in \KSh_{\bm{e}_0}$ and each element $L_i$ of $\L$,
  $L_i F_0 \equiv
  \monhom{0}{e_0} \sum_{j} (M_{2,2}^c(F_0))_{i,j}  L_j$
  in $\KSh / \ideal{F_1,\dots,F_n}$,
  where $(M_{2,2}^c(F_0)_{i,j}$ is the $(i,j)$ element of the matrix $M_{2,2}^c(F_0))$.
  Hence, if we dehomogenize
  this relation we obtain that,
  $\dehom(L_i) \dehom(F_0) \equiv \sum_{j} (M_{2,2}^c(F_0))_{i,j}
  \dehom(L_j)$ in
  $\KS / \ideal{\mon{\beta_1} f_1,\dots,\mon{\beta_n} f_n}$.
  As
  $\KS \subset \K[\Z^n]$, the same relation holds in
  $\K[\Z^n] / \ideal{f_1,\dots,f_n}$.
  By
  \Cref{thm:F5basisAreMonomialBasis}, the set $\dehom(\L)$ is a monomial basis
  of $\K[\Z^n] / \ideal{f_1,\dots,f_n}$.
  Therefore, $M_{2,2}^c(F_0)$
  is the multiplication map of $F_0$ in
  $\K[\Z^n] / \ideal{f_1,\dots,f_n}$.
\end{proof}

Using the multiplication maps in $\K[\Z^n] / \ideal{f_1,\dots,f_n}$
and the FGLM algorithm \cite{faugere_efficient_1993}, we can compute a
\Groebner basis for \linebreak
$\ideal{f_1,\dots,f_n} : \ideal{\prod_{i} x_i}^\infty$ over $\Kx$.
The latter is the saturation over $\K[\N^n]$ of the ideal
$\ideal{f_1,\dots,f_m}$ by the product of all  the variables.

\begin{lemma}
  Consider polynomials $f_1,\dots,f_m \subset \K[\N^n]$
  such their ideal over $\K[\Z^n]$,
  $\ideal{f_1,\dots,f_m}_{\K[\Z^n]}$, is 0-dimensional.
  Let \linebreak $\ideal{f_1,\dots,f_m}_{\K[\N^n]}$ be the ideal
  generated by $f_1,\dots,f_m$ over $\K[\N^n]$. Then, the sets
  $\ideal{f_1,\dots,f_m}_{\K[\Z^n]} \cap \K[\N^n]$ and
  $\ideal{f_1,\dots,f_m}_{\K[\N^n]} : \ideal{\prod_i x_i}^\infty$ are
  the same.  The latter is an ideal over $\K[\N^n]$.
\end{lemma}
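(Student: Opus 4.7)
The plan is to exploit the fact that $\K[\Z^n]$ is precisely the localization of $\K[\N^n]$ at the multiplicative system generated by the single monomial $\prod_i x_i$: inverting this product inverts every variable, and conversely every Laurent monomial $\bm x^{\bm\alpha}$ with $\bm\alpha \in \Z^n$ can be written as $\bm x^{\bm\beta}/(\prod_i x_i)^N$ for some $\bm\beta \in \N^n$ and $N \ge 0$. With this identification, the claim reduces to the standard localization identity that says the contraction of an extended ideal is the saturation by the multiplicative set. I will prove the two inclusions directly, without quoting the general fact.

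For the inclusion $\supseteq$, I will start from $g \in \ideal{f_1,\dots,f_m}_{\K[\N^n]} : \ideal{\prod_i x_i}^\infty$. By definition of saturation there exists $N \in \N$ such that $(\prod_i x_i)^N g \in \ideal{f_1,\dots,f_m}_{\K[\N^n]}$. Since $\K[\N^n] \subseteq \K[\Z^n]$ this membership also holds over $\K[\Z^n]$, and because $\prod_i x_i$ is a unit in $\K[\Z^n]$ I can multiply by its inverse to obtain $g \in \ideal{f_1,\dots,f_m}_{\K[\Z^n]}$. By assumption $g \in \K[\N^n]$, so $g$ lies in the intersection $\ideal{f_1,\dots,f_m}_{\K[\Z^n]} \cap \K[\N^n]$.

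For the inclusion $\subseteq$, I will take $g \in \ideal{f_1,\dots,f_m}_{\K[\Z^n]} \cap \K[\N^n]$ and write $g = \sum_{i} h_i f_i$ with $h_i \in \K[\Z^n]$. Each Laurent polynomial $h_i$ has only finitely many non-zero terms, so there exists $a_i \in \N$ such that $(\prod_j x_j)^{a_i} h_i \in \K[\N^n]$. Setting $N := \max_i a_i$, we have
\[
  \Big(\prod_j x_j\Big)^{N} g \;=\; \sum_i \Big(\prod_j x_j\Big)^{N - a_i} \Big(\prod_j x_j\Big)^{a_i} h_i \, f_i \;\in\; \ideal{f_1,\dots,f_m}_{\K[\N^n]},
\]
so by definition $g \in \ideal{f_1,\dots,f_m}_{\K[\N^n]} : \ideal{\prod_i x_i}^\infty$. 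The final sentence of the lemma, that this set is an ideal of $\K[\N^n]$, follows from the routine fact that a saturation is always an ideal.

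There is no genuine obstacle here: the 0-dimensionality hypothesis is not used in either inclusion (it is stated to make the saturation meaningful in applications) and the argument is a clean transcription of the localization-contraction identity. The only point that merits a brief justification is the clearing-of-denominators step for Laurent polynomials, which I handle by using that a Laurent polynomial has finite support and so only finitely many negative exponents need to be cleared.
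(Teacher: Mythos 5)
Your proof is correct and follows essentially the same route as the paper: clear denominators by multiplying by a sufficiently high power of $\prod_j x_j$ for one inclusion, and use that $\prod_j x_j$ is a unit in $\K[\Z^n]$ for the other. Your framing via localization at the multiplicative system generated by $\prod_i x_i$ and your observation that the $0$-dimensionality hypothesis is not actually used are both accurate, but they do not change the substance of the argument.
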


\begin{proof}
  Consider $f \in \ideal{f_1,\dots,f_m}_{\K[\Z^n]}$. Then there are
  $g_i \in \K[\Z^n]$ such that $f = \sum_i g_i f_i$. We can clear the
  denominators introduced by the $g_i$'s by multiplying both sides by a monomial
  $(\prod_j x_j)^d$, where $d$ is big enough.
  Then,
  $(\prod_j x_j)^d \, f = \sum_i ((\prod_j x_j)^d \, g_i) f_i$ and
  $((\prod_j x_j)^d \, g_i) \in \K[\N^n]$. Thus,
  $(\prod_j x_j)^d \, f \in \ideal{f_1,\dots,f_m}$ and
  $f \in \ideal{f_1,\dots,f_m} : \ideal{\prod_j x_j}^\infty$.  The
  opposite direction is straightforward  as $\prod_i x_i$ is a unit in
  $\K[\Z^n]$.
\end{proof}

We can perform FGLM over $\K[\Z^n]$ to recover a \Groebner basis for
$\ideal{f_1,\dots,f_m}_{\K[\N^n]} : \ideal{\prod_i x_i}^\infty$ by
considering the multiplication maps of each $x_i$.
These correspond to 
$M_{2,2}^c(\monhom{\alpha_i}{e_0})$, where $\bm{\alpha_i}$ is such
  that $\dehom(\monhom{\alpha_i}{e_0}) = x_i$.
We skip the details of this procedure.

\subsection{Complexity}
  \label{sec:complexity}
  
  We estimate the arithmetic complexity of the  algorithm  in
  Sec~\ref{sec:solve-I-sat}; it is polynomial  \wrt the
  Minkowski sum of the polytopes.
  We omit the cost of computing all the monomials in $\KSh_{\bm{d}}$
  and we only consider the complexity to read them.
  Our purpose is to highlight the dependency on the Newton polytopes.
  A more detailed analysis might give sharper bounds.
  \begin{definition}
    For polytopes $\polyt_0,\dots,\polyt_n$ and for each
    multidegree $\bm{d} \in \N^{n+1}$ of $\KSh$, let  $P(\bm{d})$ be 
    the number of integer points in the \mksum of the
    polytopes given by $\bm{d}$,

    \vspace{\abovedisplayshortskip}
    \hfil
    $P(\bm{d}) =
    \# \Big( ( \sum\nolimits_{j = 0}^n d_j \Delta_j) \cap \Z^{n} \Big).$
  \end{definition}

  \noindent
  Note that $P(\bm{d})$ equals  the number of
  different monomials in $\KSh_{\bm{d}}$.
  
  \begin{lemma} \label{thm:complexityMacMatrix}
    Let $F_1,\dots,F_k \in \KSh$ be a (sparse) regular sequence and
    let $\bm{d}_i \in \N^{n+1}$ be the multidegree of $F_i$, for
    $i \in [k]$. Consider a multigraded monomial order $<$. For
    every multidegree $\bm{d} \in \N^{n+1}$ such that
    $\bm{d} \geq \sum_i \bm{d}_i$, the arithmetic complexity
    of computing
    $\text{\texttt{ReduceMacaulay}}\left((F_1,\dots,F_k),\bm{d}, <
    \right)$ is 
    $O( 2^{k+1} \,P(\bm{d})^{\omega} )$,
    where $\omega$ is the constant of matrix multiplication.
  \end{lemma}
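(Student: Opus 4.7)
The plan is to prove the bound by induction on $k$, exploiting the recursive structure of \texttt{ReduceMacaulay} (\Cref{alg:f5Alg}) and the fact that, under regularity, \Cref{thm:regSkipsEverySyz} guarantees that no Macaulay matrix built by the algorithm has more rows than columns.

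For the base case $k=1$, the algorithm skips the recursive branch entirely and simply builds a Macaulay matrix whose rows are $\monhom{\alpha}{d - d_1} F_1$ for every monomial $\monhom{\alpha}{d - d_1} \in \KSh_{\bm{d}-\bm{d}_1}$, and then echelonizes it. The number of columns is $P(\bm{d})$ and the number of rows is $P(\bm{d} - \bm{d}_1) \le P(\bm{d})$, so Gaussian elimination costs $O(P(\bm{d})^{\omega})$. This matches the claimed bound $O(2^{2} P(\bm{d})^{\omega})$.

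For the inductive step, note that \Cref{alg:f5Alg} performs two recursive calls, one at multidegree $\bm{d}$ with the subsequence $F_1,\dots,F_{k-1}$ and one at multidegree $\bm{d} - \bm{d}_k$ with the same subsequence. Since $F_1,\dots,F_k$ is sparse regular, so is $F_1,\dots,F_{k-1}$, and the hypothesis $\bm{d} \geq \sum_{i=1}^{k} \bm{d}_i$ implies $\bm{d} - \bm{d}_k \geq \sum_{i=1}^{k-1}\bm{d}_i$, so the induction hypothesis applies to both calls. By monotonicity of $P$ with respect to coordinate-wise order, $P(\bm{d}-\bm{d}_k)^{\omega} \le P(\bm{d})^{\omega}$, so each recursive call costs at most $O(2^{k} P(\bm{d})^{\omega})$. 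After the recursive calls, the algorithm builds $\Mac^{k}_{\bm d}$ by adding $\rows(\gMac^{k-1}_{\bm d})$ together with rows $\monhom{\alpha}{d-d_k}F_k$ for every $\monhom{\alpha}{d-d_k}$ not in $\LM_<(\rows(\gMac^{k-1}_{\bm d - \bm d_k}))$. By \Cref{thm:regSkipsEverySyz} these rows are linearly independent, so their total number is at most the number of columns $P(\bm{d})$. Thus the final Gaussian elimination costs $O(P(\bm{d})^{\omega})$. Adding everything gives
\[
T(k,\bm d) \le 2 \cdot O(2^{k} P(\bm{d})^{\omega}) + O(P(\bm{d})^{\omega}) = O(2^{k+1} P(\bm{d})^{\omega}),
\]
which closes the induction.

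The main subtlety is controlling the number of rows of $\Mac^{k}_{\bm d}$: without the Koszul F5 filtering it would be $\sum_{i=1}^{k} P(\bm{d} - \bm{d}_i)$, which need not be bounded by $P(\bm{d})$. The key point is to invoke \Cref{thm:regSkipsEverySyz}, which ensures that under the regularity assumption every row added by \Cref{alg:f5Alg} is linearly independent of the previous ones, so the row count is automatically at most $P(\bm{d})$. A secondary point worth checking is that reading/indexing the monomials and embedding $\rows(\gMac^{k-1}_{\bm d})$ into $\Mac^k_{\bm d}$ can be done within the same $O(P(\bm{d})^{\omega})$ budget, which is immediate since $\omega \geq 2$.
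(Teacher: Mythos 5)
Your proof is correct and follows essentially the same route as the paper's: the paper sets up the recurrence $C(k,\bm d) = C(k-1,\bm d) + C(k-1,\bm d - \bm d_k) + O(P(\bm d)^\omega)$, bounds the second recursive call by the first using monotonicity, and solves to get $O(2^{k+1}P(\bm d)^\omega)$; you package the identical recursion as an explicit induction on $k$. Both arguments hinge on the same two facts you identify, namely that \Cref{thm:regSkipsEverySyz} bounds the row count by the column count $P(\bm d)$ under the regularity hypothesis, and that $P$ is monotone so the smaller recursive call is dominated by the larger. The only loose point (shared with the paper) is that writing ``$2\cdot O(2^k P(\bm d)^\omega) + O(P(\bm d)^\omega) = O(2^{k+1}P(\bm d)^\omega)$'' silently absorbs the additive term into the geometric factor; a clean way to close the induction is to prove the sharper bound $T(k,\bm d) \le c\,(2^{k+1}-2)\,P(\bm d)^\omega$, whose inductive step absorbs the $O(P(\bm d)^\omega)$ cost of the final elimination without inflating the constant.
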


  \begin{proof}
    By \Cref{thm:regSkipsEverySyz}, as $F_1,\dots,F_k \in \KSh$ is a
    (sparse) regular sequence and
    $\bm{d} \geq \sum_i \bm{d}_i$, all the matrices that
    appear during the computations of 
    $\text{\texttt{ReduceMacaulay}}\left((F_1,\dots,F_k),\bm{d}, <
    \right)$
    are full-rank and their rows are linearly independent.
    Hence, their number of rows is at most
    their number of columns. 
    The number of columns of a Macaulay matrix of multidegree
    $\bm{d}$ is $P(\bm{d})$.
    Thus, in this case, the complexity of Gaussian elimination is
    $O(P(\bm{d})^\omega)$ \cite{gg-mca-13}.
    If  $C(k,\bm{d})$ is cost of 
    $\text{\texttt{ReduceMacaulay}}\left((F_1,\dots,F_k),\bm{d}, <
    \right)$, then we have the following recursive relation
    $$C(k,\bm{d}) =
    \left\{
    \begin{array}{c c}
      O( P(\bm{d})^\omega ) & \text{if $k = 1$, } \\
      C(k-1,\bm{d}) + C(k-1,\bm{d} - \bm{d}_k) + O( P(\bm{d})^\omega ) & \text{if $k > 1$.}
    \end{array}
    \right.
    $$
    The cost $C(k-1,\bm{d})$ is greater than $C(k-1,\bm{d} - \bm{d}_k)$,
    as it involves bigger matrices.
    Hence, we obtain
    $C(k,\bm{d}) \!= \!O( 2^{k+1} P(\bm{d})^{\omega}).$ \qedhere
  \end{proof}
      
  \begin{theorem}
    Consider an affine polynomial system $(f_1,\dots,f_n)$ in $\Kx$
    such that \Cref{ass:no-sol-at-inf} holds and the system
    $(F_1,\dots,F_n)$ is (sparse) regular, where
    $F_i \in \KSh_{\bm{e_i}}$ and $\dehom(F_i) = f_i$, for $i \in [n]$.  Then, the
    complexity of computing
    $\ideal{f_1,\dots,f_n} : \ideal{\prod_i x_i}^\infty$ is
    $$
    O( 2^{n+1} \, P(\bm{d})^{\omega} + n \, \MV(\polyt_1,\dots,\polyt_n)^3).      
    $$
    
    % If the ideal
    % $\ideal{f_1,\dots,f_n} : \ideal{\prod_i x_i}^\infty$ admits a
    % lexicographical \Groebner basis in \textit{Shape position}, then the
    % complexity of computing this lexicographical \Groebner basis is
    % $$
    % O( 2^{n+1} \, P(\bm{d})^{\omega}).      
    % $$       
  \end{theorem}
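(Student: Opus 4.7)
The plan is to decompose the algorithm of \Cref{sec:solve-I-sat} into three phases and bound the arithmetic cost of each independently. Fix $\bm{1} := \sum_{i=0}^{n} \bm{e}_i \in \N^{n+1}$. The three phases are (a) computing the reduced Macaulay matrix $\gMac^{n}_{\bm{1}}$ and extracting the monomial basis $\L$, (b) computing the $n$ multiplication maps $M_{2,2}^c(\monhom{\alpha_i}{e_0})$ via a Schur complement for each variable $x_i$, and (c) running \textsc{FGLM} on these multiplication maps to obtain a \Groebner basis of $\ideal{f_1,\dots,f_n} : \ideal{\prod_i x_i}^\infty$.

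For phase (a), since $(F_1,\dots,F_n)$ is (sparse) regular by hypothesis and $\bm{1} \geq \sum_{i=1}^{n} \bm{e}_i$ (where the latter is the total multidegree of the sequence), \Cref{thm:complexityMacMatrix} applies directly and bounds the cost of $\texttt{ReduceMacaulay}((F_1,\dots,F_n), \bm{1}, <)$ by $O(2^{n+1}\,P(\bm{1})^{\omega})$. The set $\L$ is read off from the leading monomials of $\rows(\gMac^{n}_{\bm{1}})$ (\Cref{rem:compute-L}) at no extra cost in this order of magnitude.

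For phase (b), each matrix $\Mac(\monhom{\alpha_i}{e_0})$ is built by adjoining $|\L|$ additional rows of the form $m \cdot \monhom{\alpha_i}{e_0}$ to $\rows(\gMac^{n}_{\bm{1}})$. Since $\Mac(\monhom{\alpha_i}{e_0})$ is square of size $P(\bm{1})$, the Schur complement $M_{2,2}^c(\monhom{\alpha_i}{e_0}) = M_{2,2} - M_{2,1} M_{1,1}^{-1} M_{1,2}$ is computable via one inversion and a constant number of matrix products, each in $O(P(\bm{1})^{\omega})$ \cite{gg-mca-13}. Summed over the $n$ variables, this contributes $O(n \, P(\bm{1})^{\omega})$, which is absorbed into the $O(2^{n+1}\,P(\bm{1})^{\omega})$ term of phase (a).

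For phase (c), by \Cref{thm:F5basisAreMonomialBasis}, $\K[\Z^n]/\ideal{f_1,\dots,f_n}$ has dimension $|\dehom(\L)|$, which by \Cref{thm:KszReg0dimSys} and BKK equals $\MV(\polyt_1,\dots,\polyt_n)$. Given $n$ multiplication maps on a $\K$-algebra of dimension $D$, the \textsc{FGLM} algorithm \cite{faugere_efficient_1993} produces a \Groebner basis in $O(n\,D^3)$ operations; here this gives $O(n\,\MV(\polyt_1,\dots,\polyt_n)^3)$. Adding the three contributions yields the stated bound. The main obstacle in this argument is that the Schur complement step uses matrices of size $P(\bm{1})$ rather than $\MV$; without noticing that those computations are dominated by the Macaulay reduction of phase (a), one might be tempted to report a worse dependence on $P(\bm{1})$. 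A secondary subtlety is that \Cref{thm:complexityMacMatrix} requires (sparse) regularity to guarantee that no intermediate matrix exceeds $P(\bm{d})$ rows; this is supplied exactly by the theorem's hypothesis on $(F_1,\dots,F_n)$.
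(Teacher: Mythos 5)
Your decomposition into (a) Macaulay reduction, (b) $n$ Schur complements, (c) FGLM, with the per-phase cost accounting via \Cref{thm:complexityMacMatrix}, the $O(P(\bm{1})^\omega)$ cost of each Schur complement, and the $O(n\,\MV^3)$ cost of FGLM, matches the structure of the paper's proof, and the final bound you reach is the one claimed. There is, however, a gap in phase (a): you claim that $\L$ can be ``read off from the leading monomials of $\rows(\gMac^{n}_{\bm{1}})$ at no extra cost,'' citing \Cref{rem:compute-L}. But that remark prescribes a \emph{separate} call, $\texttt{ReduceMacaulay}\bigl((F_1,\dots,F_n),\sum_{i\geq 1}\bm{e}_i,<\bigr)$, and the paper's proof likewise accounts for two Macaulay reductions at two different multidegrees. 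The set $\L$ lives at multidegree $\sum_{i\geq 1}\bm{e}_i = \bm{1}-\bm{e}_0$, and membership in the leading ideal is a degree-by-degree statement; nothing in the paper lets you recover $\LM_<(\ideal{F_1,\dots,F_n}_{\sum_{i\geq 1}\bm{e}_i})$ directly from the row echelon form at degree $\bm{1}$ (the implication only goes the wrong way: being a leading monomial at degree $\sum_{i\geq 1}\bm{e}_i$ implies, after multiplying by $\monhom{0}{\bm{e}_0}$, being a leading monomial at degree $\bm{1}$, not conversely), and the recursion of $\texttt{ReduceMacaulay}$ at degree $\bm{1}$ never visits degree $\bm{1}-\bm{e}_0$ for the \emph{full} sequence $(F_1,\dots,F_n)$, only for truncations $(F_1,\dots,F_j)$ at shifted degrees $\bm{1}-\bm{e}_j-\cdots$. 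So you need an explicit second Macaulay reduction to produce $\L$. Fortunately this costs $O\bigl(2^{n+1}P(\sum_{i\geq 1}\bm{e}_i)^\omega\bigr) \le O\bigl(2^{n+1}P(\bm{1})^\omega\bigr)$, so your headline bound survives unchanged; you should just account for this call explicitly rather than claiming it comes free.
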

  \begin{proof}
    We need to compute:
    \begin{itemize}[leftmargin=*]
    \item The set
      $\rows(\text{\texttt{ReduceMacaulay}}\left((F_1,\dots,F_n),\sum_{i
          > 1} \bm{e_i}, < \right))$ to generate $\L$ (\Cref{rem:compute-L}). By
      \Cref{thm:complexityMacMatrix}, this costs
      $O( 2^{n+1} \,P({ \sum_{i > 1} \bm{e_i} })^{\omega} )$.
    \item The set
      $\rows(\text{\texttt{ReduceMacaulay}}\left((F_1,\dots,F_n),\bm{1},
        < \right))$ to generate the matrix $\Mac(F_0)$ of
      \Cref{thm:matrixF0}.  By \Cref{thm:complexityMacMatrix}, it
      costs $O( 2^{n+1} \, P(\bm{1})^{\omega} )$.
    \item For each variable $x_i$, the Schur complement of
      $\Mac(F_0)$, for $\dehom(F_0) = x_i$.
      The cost of each Schur complement computation is
      $O( P(\bm{1})^{\omega} )$, and so the cost of this step is
      $O( n \, P(\bm{1})^{\omega} )$.
      
    \item The complexity of FGLM depends on the number of solutions,
      and in this case it is $O(n \, \MV(\polyt_1,\dots,\polyt_n)^3)$
      \cite{faugere_efficient_1993}. \hfill $\qedhere$
    \end{itemize}
  \end{proof}

  Note that $\MV(\polyt_1,\dots,\polyt_n) < P(\bm{1})$. Hence, to
  improve the previous bound for lexicographical orders, we can follow
  \cite{faugere_polynomial_2013}.

%%% Local Variables:
%%% mode: latex
%%% TeX-master: "sparseF5"
%%% End:

% \begin{acks}

{\footnotesize
  \noindent
\textbf{Acknowledgements:}   We thank C. D'Andrea, C. Massri, B. Mourrain,
P.-J. Spaenlehauer, and B. Teissier for the helpful
discussions and references.
}
  % The authors would like to thank Dr. Yuhua Li for providing the
  % MATLAB code of the \textit{BEPS} method.

  % The authors would also like to thank the anonymous referees for
  % their valuable comments and helpful suggestions. The work is
  % supported by the \grantsponsor{GS501100001809}{National Natural
  %   Science Foundation of
  %   China}{http://dx.doi.org/10.13039/501100001809} under Grant
  % No.:~\grantnum{GS501100001809}{61273304}
  % and~\grantnum[http://www.nnsf.cn/youngscientists]{GS501100001809}{Young
  %   Scientists' Support Program}.

%\end{acks}

\bibliographystyle{ACM-Reference-Format}
\bibliography{biblio}

\newpage

\appendix

\section{Counter-example to the complexity bounds in \cite{faugere_sparse_2014}}
\label{sec:counter-example}

Let $\polyt$ be the standard 2-simplex and consider the regular system
given by two polynomials $F_1,F_2 \in \KSh_2$ of degree $2$,
{\footnotesize
\begin{align*}
F_1 := & \monhomNoBM{(2,0)}{2}+\monhomNoBM{(1,1)}{2}+\monhomNoBM{(0,2)}{2}+\monhomNoBM{(1,0)}{2}+\monhomNoBM{(0,1)}{2}+\monhomNoBM{(0,0)}{2} \\
F_2 := & \monhomNoBM{(2,0)}{2}+2 \,
\monhomNoBM{(1,1)}{2}+3\,\monhomNoBM{(0,2)}{2}+4\,\monhomNoBM{(1,0)}{2}+5\,\monhomNoBM{(0,1)}{2}+6\,\monhomNoBM{(0,0)}{2}
\end{align*}
}
Consider the graded monomial order $<$ given by
{\footnotesize
\begin{equation*}
  \monhomNoBM{(x_1,y_1)}{d_1} < \monhomNoBM{(x_2,y_2)}{d_2} 
  \iff
  \begin{cases}
    d_1 < d_2,  \text{ or} \\
    d_1 = d_2 \, \text{ and } \, x_1 < x_2, \text{ or} \\
    d_1 = d_2 \, \text{ and } \, x_1 = x_2 \, \text{ and } \, y_1 < y_2
  \end{cases} .
\end{equation*}
}
In this case, the bound in \cite[Lem.~5.2]{faugere_sparse_2014} is
$3$, meanwhile the maximal degree of an element in the \Groebner basis
of $(F_1,F_2)$ \wrt $<$ has degree $4$.

\section{Algorithm to compute \Groebner basis over the standard algebra}

  \begin{algorithm}
    \caption{\texttt{compute-0-Dim-GB}}
     \begin{algorithmic}[1]
     \label{alg:solving}
     \REQUIRE Affine system $f_1,\dots,f_n \in \Kx$ and a monomial
     order $<$ for $\Kx$, such that it has a
     finite number of solutions over $\T$ and satisfies
     \Cref{ass:no-sol-at-inf}.
     % \item The system $f_1,\dots,f_n$ has no solutions at infinity
     %   \wrt the variety $X$ related to the polytopes
     %   $\polyt_0,\polyt_1,\dots,\polyt_n$, where
     %   $\polyt_0$ is the standard n-simplex and, for
     %   $i > 0$, $\polyt_i = \NP(f_i)$.
     % \item The homogeneous system $F_1,\dots,F_n \in \KSh$ is sparse
     %   regular \ref{def:Koszul-reg}, where $\KSh$ is the semigroup
     %   algebra related to the polytopes
     %   $\polyt_0,\polyt_1,\dots,\polyt_n$, $F_i \in \KSh_{\bm{e}_i}$
     %   and $\dehom(F_i) = f_i$.
     %   
     \ENSURE \Groebner basis $G$ for the ideal
     $\ideal{f_1,\dots,f_n} : \ideal{\prod_i x_i}^\infty$ \wrt the
     monomial order $<$.
     \newline

     \STATE Consider the semigroup algebra $\KSh$ related to the
     polytopes of $f_1, \dots,f_n$ and the standard n-simplex.

     \STATE Choose a multigraded monomial order $\bm{\<}$ for $\KSh$.

     \STATE For each $i \in [n]$, choose $F_i \in \KSh_{\bm{e_i}}$
     such that $\dehom(F_i) = f_i$.

     \STATE
     $C \leftarrow
     \rows(\text{\texttt{ReduceMacaulay}}\left((F_1,\dots,F_n),{\sum_{i
           > 0} \bm{e}_i}, \bm{\<} \right))$,

     \STATE
     $P \leftarrow
     \rows(\text{\texttt{ReduceMacaulay}}\left((F_1,\dots,F_n),\bm{1}, \bm{\<} \right))$,

     \STATE
     $\L \leftarrow \left\{ \monhom{\alpha}{\sum_{i \geq 1} \bm{e_i}}
       \in \KSh_{\sum_{i \geq 1} \bm{e_i}} :
       \monhom{\alpha}{\sum_{i \geq 1} \bm{e_i}} \not\in C
     \right\}$

     \FORALL {$x_i \in \Kx$}

     \STATE Choose a monomial $m \in \KSh_{\bm{e_0}}$ such that
     $\dehom(m) = x_i$.

     \STATE $\Mac(m) \leftarrow$ Macaulay
     matrix of degree $\bm{1}$ \wrt $\bm{\<}$.

     \FORALL {$F \in P$}
     \STATE Add $F$ to $\Mac(m)$
     \ENDFOR

     \FORALL {$L_i \in \L$}
     \STATE Add $L_i \, m$ to $\Mac(m)$
     \ENDFOR

     \STATE Rearrange $\Mac(m)$ as
     $\left[\begin{smallmatrix} M_{1,1}(m) & M_{1,2}(m) \\ M_{2,1}(m)
         & M_{2,2}(m) \end{smallmatrix}\right]$ like
     (\ref{eq:reorderMF0}).

     \STATE $M_{x_i} \leftarrow (M_{2,2} - M_{2,1} \, M_{1,1}^{-1} \, M_{1,2})(m)$.

     \ENDFOR

     \STATE Perform FGLM with the multiplication matrices
     $M_{x_1},\dots,M_{x_n}$ \wrt $>$ to obtain \Groebner basis $G$.

     \RETURN $G$.
     \end{algorithmic}
   \end{algorithm}

%%% Local Variables:
%%% mode: latex
%%% TeX-master: "sparseF5"
%%% End:

\end{document}